\newcommand{\citeappendix}[1]{Appendix~\ref{#1}\xspace}
\newcommand{\citetheappendix}[0]{the appendix\xspace}
\pgfplotsset{compat=1.18}
\newcommand{\quantlogic}[0]{$q\mathcal{L}$\xspace}
\newcommand{\set}[1]{{\{#1\}}}
\renewcommand{\phi}{\varphi}
\newcommand\setvertices V
\newcommand\setedges E
\newcommand\labeling \ell
\newcommand{\NTexpression}{\xi}
\newcommand{\QFBAPAcons}{\chi}
\newcommand{\QFBAPAsat}{\mathsf{sat}}
\newcommand{\aGNN}{\mathcal{A}}
\newcommand{\semanticsof}[1]{[[#1]]}
\newcommand{\statetv}[2]{x_{#1}(#2)}
\newcommand{\multiset}[1]{\{\{#1\}\}}
\newcommand{\lequiv}{\leftrightarrow}
\newcommand{\limp}{\rightarrow}
\newcommand{\hintikkaset}{H}
\newcommand{\Ap}{Ap}
\newcommand{\modalitynumber}{\sharp}
\newcommand{\globmodalitynumber}
{{\sharp_{\scaleto{{g}\mathstrut}{6pt}}}}
\newcommand{\istrue}[1]{\mathbb{1}{#1}}
\newcommand{\localmodality}{\Box}
\newcommand{\globalmodality}
{\Box_g}
\newcommand{\logicKsharpone}{\ensuremath{K^{\modalitynumber}}\xspace}
\newcommand{\logicKsharpglobone}{\ensuremath{K^{\modalitynumber, \globmodalitynumber}}\xspace}
\newcommand{\ALC}{\ensuremath{\mathcal{ALC}}\xspace}
\newcommand{\ALCQ}{\ensuremath{\mathcal{ALCQ}}\xspace}
\newcommand{\ALCSCC}{\ensuremath{\mathcal{ALCSCC}}\xspace}
\newcommand{\ALCSCCpp}{\ensuremath{\ALCSCC^{++}}\xspace}
\newcommand\sigmabold{\vec{\sigma}}
\newcommand{\universe}{\mathcal U}
\newcommand{\QFBAPA}{QFBAPA\xspace}
\newcommand{\quantQFBAPA}{\ensuremath{\text{QFBAPA}_\setnumbers}\xspace}
\newcommand{\num}[1]{#1}
\newcommand{\transKA}[0]{\overline{\tau}}
\newcommand{\transTK}[0]{\underline{\tau}}
\newcommand{\trFE}{mod2expr}
\newcommand{\verificationtask}[1]{\textsc{vt#1}\xspace}
\newcommand{\verificationtaskbounded}[1]{\textsc{vt'#1}\xspace}
\newcommand{\featuresset}{F}
\newcommand{\setnumbers}{\mathbb{K}}
\newcommand{\expression}{\vartheta} 
\newcommand{\activationfunction}{\alpha}
\newcommand{\agreggationfunction}{agg}
\newcommand{\agreggationfunctionglobalreadout}{agg_{\scaleto{{\forall}\mathstrut}{6pt}}}
\newcommand{\sem}[2]{[[#1]]_{#2}}
\newcommand{\semone}[1]{[[#1]]}
\declaretheorem[]{theorem}
\declaretheorem[sibling=theorem]{lemma}
\declaretheorem[sibling=theorem]{remark}
\declaretheorem[sibling=theorem]{definition}
\declaretheorem[sibling=theorem]{example}
\declaretheorem[sibling=theorem]{claim}
\declaretheorem[sibling=theorem]{corollary}
\newcommand{\nblayers}{L}
\newcommand{\layer}{\mathcal L}
\newcommand{\AGG}{\mathit{agg}}
\newcommand{\GAGG}{agg_{\scaleto{{\forall}\mathstrut}{6pt}}}
\newcommand{\COMB}{\mathit{comb}}
\newcommand{\CLS}{\mathit{cls}}
\newcommand{\semanticsQFBAPA}{\sigma}
\newcommand{\Erdos}{\text{Erd$\ddot{\text{o}}$s–R$\acute{\text{e}}$nyi}\xspace}
\newcommand\suchthat\mid
\newcommand{\tile}[6]{
	\draw[fill=#3] (#1,#2) -- (#1,#2+1) -- (#1+0.5, #2+0.5) -- (#1, #2);
	\draw[fill=#4] (#1,#2+1)-- (#1+1,#2+1) -- (#1 +0.5, #2+0.5) -- (#1, #2+1);
	%
	\draw[fill=#5] (#1+1,#2)-- (#1+1,#2+1) -- (#1 +0.5, #2+0.5) -- (#1 +1, #2);
	%
	\draw[fill=#6] (#1,#2)-- (#1 +1,#2) -- (#1 +0.5, #2+0.5) -- (#1, #2);
	\draw[fill=black] (#1+0.5, #2+0.5) -- (#1+0.6, #2+0.4) -- (#1+0.4, #2+0.4) -- cycle;
}
\newcommand{\tilered}{red!80!white}
\newcommand{\tileyellow}{yellow!50!white}
\title{
Verifying Quantized GNNs With Readout Is Decidable But Highly Intractable
}
\author{%
Artem Chernobrovkin$^1$\and
Marco Sälzer$^2$\and
François Schwarzentruber$^{3}$\and
Nicolas Troquard$^1$ \\
\affiliations
$^1$Gran Sasso Science Institute (GSSI), Viale F.~Crispi, 7 -- 67100 L'Aquila, Italy\\
$^2$Rheinland-Pfälzische Technische Universität (RPTU), Kaiserslautern, Germany\\
$^3$ENS de Lyon, CNRS, Université Claude Bernard Lyon 1, Inria, LIP, UMR 5668, 69342, Lyon, France\\
\emails
\{artem.chernobrovkin, nicolas.troquard\}@gssi.it,
marco.saelzer@rptu.de,
francois.schwarzentruber@ens-lyon.fr
}
\begin{document}

\maketitle

\begin{abstract}
We introduce a logical language for reasoning about quantized aggregate-combine graph neural 
networks with global readout (ACR-GNNs). We provide a logical characterization and use it to prove 
that verification tasks for quantized GNNs with readout are (co)NEXPTIME-complete. This result 
implies that the verification of quantized GNNs is computationally intractable, prompting substantial 
research efforts toward ensuring the safety of GNN-based systems. We also experimentally demonstrate 
that quantized ACR-GNN models are lightweight while maintaining good accuracy and generalization 
capabilities with respect to non-quantized models.
\end{abstract}


\section{Introduction}
\label{section:introduction}

Graph neural networks (GNNs) are models used for classification and regression tasks on graph-structured data, 
including node-level and graph-level tasks. GNNs find applications in recommendation systems in social networks~\cite{DBLP:journals/kbs/SalamatLJ21},
knowledge graphs~\cite{Zi22KG}, chemistry~\cite{Reiser22GGNmaterialchemistry}, drug discovery~\cite{XIONG20211382}, etc. 
Like several other machine learning models, GNNs are difficult to interpret, understand, and verify.
This poses a significant issue for their adoption, morally and legally, with the enforcement of regulatory policies
such as the EU AI Act~\cite{AI-act}. Previous works lay the foundation for analyzing them using formal logic,
as seen in~\cite{DBLP:conf/iclr/BarceloKM0RS20},~\cite{NunnSST24}, or~\cite{benedikt2025decidabilitygraphneuralnetworks_V4}.
However, many of these approaches consider \emph{idealized} GNNs in which numbers are either arbitrarily large integers
or rationals, whereas in real-world implementations, GNNs are \emph{quantized}; numbers are represented as Standard IEEE 754
64-bit floats, INT8, or FP8~\cite{DBLP:journals/corr/abs-2209-05433}.
Verification of quantized GNNs without global readout has been addressed by~\cite{Saelzer-etal-IJCAI25}. However,
global readout is a crucial component of GNNs, particularly for graph classification~\cite{DBLP:conf/iclr/XuHLJ19}.
%

In this paper, we consider three verification tasks for GNNs to check critical properties of graphs. Examples of such verification tasks are: ``Is it the case that...
\begin{itemize}
\item 
... any power plant serving over a thousand users is classified by the GNN as essential?''
\item 
... any power plant classified by the GNN as essential has at least 3 substations?''
\item 
... a power plant classified by the GNN as essential can be a wind farm whose wind turbines have a total capacity of less than 12 MW?''
\end{itemize}
They are respectively instances of \emph{sufficiency}, \emph{necessity}, and \emph{consistency} tasks that will be formally defined in due time.
Verifying critical properties of GNNs helps ensure the safety of GNN-based systems, analogous to how correctness proofs are used to guarantee the safety of traditional computer programs~\cite{hoare1969}.

\paragraph{Contribution. } 
The contribution is threefold. First, we show that verifying the three properties above for Aggregate-Combine Graph Neural Networks 
with global Readout (ACR-GNNs) is decidable and (co)NEXPTIME-complete.
This contrasts with the PSPACE-completeness without global readout as demonstrated by~\cite{Saelzer-etal-IJCAI25}.
This implies that global readout makes quantized GNN verification highly intractable\footnote{As pointed out in~\cite{DBLP:conf/nips/JoglT023}, any inference of a ACR-GNN can be simulated by a AC-GNN (Aggregate-Combine Graph Neural Networks 
\emph{without} global Readout) where the global readout mechanism is replaced by an extra complete relation $V \times V$ where $V$ is the set of vertices of the input graph. So global readout has no significant impact on the complexity of inference. On the contrary, verification tasks involve quantifications over the set of pointed graphs where this extra relation \emph{must} be~$V \times V$.}.
To prove this,
we define the logic \quantlogic, extending the one introduced by~\cite{Saelzer-etal-IJCAI25} to capture global readout.
It is expressive enough to capture quantized ACR-GNNs with arbitrary activation functions.
Moreover, \quantlogic can serve as a flexible graph property specification language reminiscent of modal 
logics~\cite{DBLP:books/cu/BlackburnRV01}. The following example illustrates the use of \quantlogic for
expressing graph properties.

\begin{example}
\label{ex:animals-humans}
    Assume a class of knowledge graphs (KGs) representing communities of people and animals, 
    where each node corresponds to an individual.
    Each individual can be Animal, Human, Leg, Fur, White, Black, etc. These concepts can be encoded 
    with features $x_0, x_1, \ldots, x_5, \ldots$ respectively, taking values $0$ or $1$. Edges in a KG 
    represent a generic `has' relationship: a human can have an animal (pet); an animal can have a human
    (owner), a leg, a fur; a fur can have a color; etc.
    Suppose that $\aGNN$ is a GNN processing those KGs and is trained to supposedly recognize dogs. 
    We can verify that the nodes recognized by $\aGNN$ are animals—arguably a critical property of the domain—by checking
    the validity (i.e., the non-satisfiability of the negation) of $\phi_{\aGNN} \rightarrow x_0 = 1$
    where $\phi_{\aGNN}$ is a \quantlogic-formula corresponding to $\aGNN$'s computation, true in exactly
    the pointed graphs accepted by $\aGNN$.
    Ideally, $\aGNN$ should not overfit the concept of dog as a perfect prototypical animal. For instance,
    three-legged dogs do exist. We can verify that
    $\aGNN$ lets it be a possibility by checking the satisfiability of the formula $\phi_{\aGNN} \land
    \Diamond^{\leq 3} (x_2 = 1)$.
    More complex \quantlogic{} formulas can be written to express graph properties
    to be evaluated against an ACR-GNN, which will be formalized later in
    Example~\ref{ex:modalities}:
    \begin{enumerate*}
        \item Has a human owner, whose pets are all two-legged. 
        \item \label{item:not-fol} A human in a community that has more than twice as many animals as humans, and more than five animals without an owner.
            \footnote{Interestingly, \quantlogic{} goes beyond the capabilities of graded modal logic and even first-order logic (FOL). The property of Item~\ref{item:not-fol} cannot be expressed in FOL.}
        \item An animal in a community where some animals have both white and black fur.
    \end{enumerate*}
\end{example}

Firstly, to prove the (co)NEXPTIME upper bound, we reuse a concept from mathematical logic called
Hintikka sets~\cite{DBLP:books/cu/BlackburnRV01}, which are complete sets of subformulas that can be true
at a given vertex of a graph. We then introduce a quantized variant of Quantifier-Free Boolean Algebra
with Presburger Arithmetic (QFBAPA) logic~\cite{kuncak-rinard-QFBAPA}, denoted by $\quantQFBAPA$, and prove
that it is in NP, similar to the original QFBAPA on integers. We reduce the satisfiability problem of \quantlogic
to that of $\quantQFBAPA$. On the other hand, (co)NEXPTIME-hardness is proven by reduction from a suitable
tiling problem. Similarly, we also add global counting to the logic $\logicKsharpone$ previously introduced
by~\cite{NunnSST24}. We show that it corresponds to AC-GNNs over $\mathbb{Z}$ with global readout and truncated ReLU 
activation functions. We prove that the satisfiability problem is NEXPTIME-complete, partially addressing a problem 
left open in the literature—that is, for the case of integer values and truncated ReLU activation functions~\cite{DBLP:conf/icalp/BenediktLMT24,benedikt2025decidabilitygraphneuralnetworks_V4}.
Details are in \citetheappendix to keep the presentation concise.

Secondly, since NEXPTIME is highly intractable—it is provably distinct from NP~\cite{nondeterministic-complexity}—we relax
the satisfiability problem of \quantlogic and ACR-GNNs, focusing on finding graph counterexamples
where the number of vertices is bounded. This problem is NP-complete, and an implementation is provided as
a proof of concept and a baseline for future research.

Finally, we experimentally demonstrate that the quantization of GNNs results in minimal
accuracy degradation. Our results confirm that quantized models retain robust predictive performance while achieving 
substantial reductions in model size and inference cost. These findings demonstrate the practical viability of quantized 
ACR-GNNs for deployment in resource-constrained environments.

\paragraph{Outline.}
We introduce quantized ACR-GNNs in \Cref{section:background} and present experimental results justifying their practical utility 
in \Cref{section:experiment}.
\Cref{section:quantlogic} defines the logic \quantlogic, discusses its expressivity,
and defines ACR-GNN verification tasks.
\Cref{section:upperbound} provides the (co)NEXPTIME membership for the satisfiability problem of \quantlogic and the ACR-GNN verification tasks.
In \Cref{section:lowerbound}, we show that these problems are (co)NEXPTIME-complete.
\Cref{section:verificationtool} discusses the relaxation making the problems (co)NP-complete. We complement these results by reporting on the results of a prototype 
implementation (\citeappendix{appendixsection:toolperformance}) for verifying said 
GNNs in a bounded setting, highlighting the hardness of verifying quantized GNNs.

\paragraph{Related work.}
\cite{DBLP:conf/iclr/BarceloKM0RS20} showed that ACR-GNNs are at least expressive as $\text{FOC}_2$, two-variable first-order logic with counting. 
\cite{DBLP:journals/corr/abs-2508-06091} showed that there are ACR-GNNs which can be captured by first-order logic but not $\text{FOC}_2$.
Further work has explored the logical expressiveness of GNN variants in more detail. 
Notably,~\cite{NunnSST24} and~\cite{DBLP:conf/icalp/BenediktLMT24} introduced 
logics to exactly characterize the capabilities of different forms of GNNs. 
Similarly,~\cite{DBLP:conf/kr/CucalaG24} analyzed Max-Sum-GNNs through the lens of Datalog. 
~\cite{Saelzer-etal-IJCAI25} considered the expressivity of GNN with quantized parameters but without global readout. 
~\cite{AhvonenHKL24} offered several logical characterizations of recurrent GNNs over floats and real numbers.

On the verification side,~\cite{HenzingerLZ21} 
studied the complexity of verification of quantized feedforward neural networks (FNNs), while~\cite{SalzerL21,SalzerL23} investigated 
reachability and reasoning problems for general FNNs and GNNs. 
Some verification methods have been proposed, via integer linear programming (ILP) by~\cite{DBLP:conf/aaai/000200DWZB24}, 
and~\cite{10.1145/3551349.3556916}, and via model checking by~\cite{DBLP:journals/corr/abs-2106-05997}.

From a logical perspective, reasoning over structures involving arithmetic constraints is closely tied to several well-studied logics.
Relevant work includes Kuncak and Rinard’s decision procedures for QFBAPA (~\cite{kuncak-rinard-QFBAPA}), as well as further developments by 
~\cite{DBLP:journals/japll/DemriL10},~\cite{DBLP:conf/ecai/BaaderBR20},~\cite{DBLP:conf/fsttcs/BednarczykOPT21}, and~\cite{GallianiKT23}. 
These logics form the basis for the characterizations established in~\cite{NunnSST24,DBLP:conf/icalp/BenediktLMT24}.

Quantization techniques in neural networks exist, with surveys such as~\cite{gholami2022survey,Nagel2021AWP} providing comprehensive overviews focused on maintaining model accuracy. Although most practical advancements target convolutional neural networks (CNNs), many of the underlying principles extend to GNNs as well~\cite{zhou2020graph}. NVIDIA has demonstrated hardware-ready quantization strategies~\cite{DBLP:journals/corr/abs-2004-09602?nvidia}, and frameworks like PyTorch~\cite{ansel2024pytorch2} support both post-training quantization and quantization-aware training (QAT), the latter simulating quantization effects during training to improve low-precision performance. QAT has been particularly effective in closing the gap between quantized and full-precision models, especially for highly compressed or edge-deployed systems~\cite{jacob2018quantization}. In the context of GNNs, \cite{tailor2021degreequant} proposed Degree-Quant, incorporating node degree information to mitigate quantization-related issues. 
Based on this,
~\cite{zhu2023rm} introduced $A^2Q$, a mixed-precision framework that adapts bitwidths 
on graph topology to achieve high compression with minimal performance loss.






\newcommand{\setZ}{\mathbb Z}


\section{Background}
\label{section:background}

\paragraph{Quantized numbers} Let $\setnumbers$ be a set of quantized numbers, and let $n$ denote the \emph{bitwidth} of $\setnumbers$,
which is the number of bits required to represent a number in $\setnumbers$. The bitwidth $n$ is written in unary;
this is motivated by the fact that $n$ is small and that we would, in any case, need to allocate $n$-bit consecutive memory for storing a number.
Formally, we consider a sequence $\setnumbers_1, \setnumbers_2, \ldots$ corresponding to bitwidths $1$, $2$, etc.,
but we retain the notation $\setnumbers$ for simplicity.
We suppose that $\setnumbers$ saturates; for example, if $x \geq 0$ and $y \geq 0$, then $x+y \geq 0$ (i.e., there is no modulo behavior like in \texttt{int} in C, for instance).
We assume that $1 \in \setnumbers$.

We consider Aggregate-Combine Graph Neural Networks with global Readout (ACR-GNNs),
a standard class of message-passing GNNs \cite{DBLP:conf/iclr/BarceloKM0RS20,DBLP:conf/icml/GilmerSRVD17}. 

\paragraph{Graphs}
A \emph{(labeled directed) graph} $G$ is a tuple $(\setvertices, \setedges, \labeling)$ such that $\setvertices$ is 
a finite set of vertices, $\setedges \subseteq \setvertices \times \setvertices$ a set of directed edges and $\labeling$ 
is a mapping from~$\setvertices$ to $\setnumbers^m$, for some 
integer $m$, which is the dimension of the node embedding.
Given a graph $G$ and vertex $u \in \setvertices$, we call $(G,u)$ a \emph{pointed graph}.

\paragraph{ACR-GNNs}

We define quantized Aggregate-Combine GNNs with Readout (ACR-GNN).

An \emph{ACR-GNN layer} $\layer = (\COMB, \AGG, \GAGG)$ is a tuple where $\COMB : \setnumbers^{3m} \to \setnumbers^{m'}$ is a so-called \emph{combination function}, $\AGG$ is a so-called \emph{local aggregation function}, mapping multisets of vectors from $\setnumbers^m$ to a single vector from $\setnumbers^{m}$, $\GAGG$ is a so-called \emph{global aggregation function}, also mapping multisets of vectors from $\setnumbers^m$ to a single vector from $\setnumbers^m$. We call $m$ the \emph{input dimension} of layer $\layer$ and $m'$ the \emph{output dimension} of layer $\layer$.
Then, a \emph{ACR-GNN} is a tuple $(\layer^{(1)}, \dotsc, \layer^{(\nblayers)}, \CLS)$ where $\layer^{(1)}, \dotsc, \layer^{(\nblayers)}$ are 
$\nblayers$ ACR-GNN layers and $\CLS : \setnumbers^m \rightarrow \set{0, 1}$ is a \emph{classification function}. We assume that all GNNs are well-formed in the sense that
output dimension of layer $\layer^{(i)}$ matches input dimension of layer $\layer^{(i+1)}$ as well
as output dimension of $\layer^{(L)}$ matches input dimension of $\CLS$.

Let $G = (\setvertices, \setedges, x_0)$ be a graph 
with initial vertex labeling
$x_0 : \setvertices \rightarrow \setnumbers^m$,
Then, the $i$-th layer of $\aGNN$ computes an updated state $x_i(u)$ for all $u \in V$ by
\[
\begin{aligned}
\statetv{i}u :=\;&
\COMB\Bigl(
  \statetv{i-1}u,\;
  \AGG\bigl(\multiset{\statetv{i-1}v \mid (u,v) \in \setedges}\bigr), \\
&\qquad
  \GAGG\bigl(\multiset{\statetv{i-1}v \mid v \in \setvertices}\bigr)
\Bigr)
\end{aligned}
\]
\noindent
where $\AGG$, $\GAGG$, and $\COMB$ are respectively the local aggregation, global aggregation and combination function of the $i$-th layer. We call each $x_i$ a \emph{state of $G$}. Let $(G,u)$ be a pointed graph. We write $\aGNN(G,u)$ to denote the application of $\aGNN$ to $(G,u)$, which is formally defined as $\aGNN(G,u) = \CLS(x_L(u))$ where $x_L$ is the state of $G$ computed by $\aGNN$ after layer $L$. This corresponds informally to a binary classification of vertex $u$.

We concentrate on a specific subclass where both $\AGG$ and $\GAGG$ perform summation over vectors, and where $\COMB(x,y,z) = N(x,y,z)$ is a classical feedforward neural network (FNN) using activation function $\activationfunction$.
The classification function is a linear threshold: $\CLS(x) = \sum_i a_i x_i \geq 1$ with weights $a_i \in \setnumbers$.
We assume that all arithmetic operations are executed according to the arithmetic related to $\setnumbers$.
It is assumed that the context makes clear the $\setnumbers$ and arithmetic being used.
We use $\sem{\aGNN}{}$ to denote the set of pointed graphs $(G,u)$ such that $\aGNN(G,u) = 1$.
An ACR-GNN $\aGNN$ is satisfiable if $\sem{\aGNN}{}$ is non-empty.
The \emph{satisfiability problem} for ACR-GNNs is:
    Given a ACR-GNN $\aGNN$, decide whether $\aGNN$ is satisfiable.

\section{Quantization Effects on Accuracy, Performance and Model Size}
\label{section:experiment}
To begin on the right foot, we demonstrate that the quantized ACR-GNNs defined before are indeed of practical interest.
%
To this aim, we investigate the application of dynamic Post-Training Quantization (PTQ) to ACR-GNNs.
As a reference, we used the models described and analyzed in~\cite{DBLP:conf/iclr/BarceloKM0RS20}, using their implementation~\cite{BarceloGit2021} as the baseline. 
Experiments used two datasets: one synthetic (\Erdos model) and one real-world dataset (Protein-Protein Interactions (PPI) benchmark by~\cite{zitnik2017predicting}). 
In the original work, experiments were made with two activation functions: Rectified Linear Unit (ReLU) and truncated ReLU. Since the models of \Cref{section:background} can handle arbitrary activation functions, in our experiments, we used several types of activation function: Piecewise linear (ReLU, ReLU6, and trReLU), Smooth unbounded (GELU and SiLU), Smooth bounded (Sigmoid), and Smooth ReLU-like (Softplus and ELU). 
The quantization method is implemented in PyTorch~\cite{ansel2024pytorch2,pytorch_quantization}, dynamic PTQ transforms a pre-trained floating-point model into a quantized version without requiring retraining. In this approach, model weights are statically quantized to INT8, while activations remain in floating-point format until they are dynamically quantized at compute time. This hybrid representation enables efficient low-precision computation using INT8-based matrix operations, thereby reducing the memory footprint and improving the inference speed. The PyTorch implementation applies per-tensor quantization to the weights and stores the activations as floating-point values between operations to strike a balance between accuracy and performance.

Synthetic graphs (Table~\ref{tab:dataset_summary} of Appendix~\ref{sec:Experiments}) were generated using the dense \Erdos model, a classical approach to constructing random graphs. Each graph includes five initial node colors, encoded as one-hot feature vectors. Following~\cite{DBLP:conf/iclr/BarceloKM0RS20}, labels were assigned using formulas from the logic fragment $\text{FOC}_2$. Specifically, a hierarchy of classifiers $\alpha_i(x)$ was defined as follows: $\alpha_0(x):= \text{Blue}(x)$ and $\alpha_{i+1}(x):= \exists^{[N,M]}y\left(\alpha_i(y) \wedge \neg E(x,y)\right),$
where $\exists^{[N,M]}$ denotes the quantifier ``there exist between $N$ and $M$ nodes" satisfying a given condition. Each classifier $\alpha_i(x)$ can be expressed within $\text{FOC}_2$, as the bounded quantifier can be rewritten using $\exists^{\geq N}$ and $\neg \exists^{\geq M + 1}$. Each property $p_i$ corresponds to a classifier $\alpha_i$ with $i \in \{1,2,3\}$.
For the analysis, we collected training time, model size, and accuracy for both datasets. We list the principal findings of the analysis. More detailed statistics can be found in the Appendix~\ref{sec:Experiments}. According to our experimental flow, we first examine the training time. For both datasets, we found that piecewise-linear activation functions consistently achieve the shortest training time (Table~\ref{tab:training-times} and Table~\ref{tab:ppi:nonqua:training-time} of Appendix~\ref{sec:Experiments}).
Moreover, for computational efficiency, we found that the model with the activation function (Smooth) was consistently the slowest, regardless of the datasets (Table~\ref{tab:worst:training-times} and Table~\ref{tab:ppi:nonqua:minmax} of Appendix~\ref{sec:Experiments}).
We performed an analysis of the accuracy of both data across all activation families, the observed $\Delta_{acc}$ values are generally within $\pm 1\%$. In addition,we observed a drop in the accuracy of the baseline models after two layers. 

These findings highlight the advantages of quantized ACR-GNN models of \Cref{section:background} with respect to non-quantized models, striking a remarkable balance between model size and accuracy.
From the evaluation of 9,000 models on \Erdos and 80 models on real-world data, we selected the following hyperparameters: 1–3 layers, hidden dimension 32 for \Erdos experiments and 256 for PPI, and a learning rate of 0.01. 

We evaluated ACR-GNNs (PPI data) across a range of bit-width configurations (32, 16, 8, 7, 6, 5, 4, 2) and also under finer-grained reductions from 8 to 4 bits. Across both datasets, 16- and 8-bit models preserved accuracy within a small margin of the full-precision baseline. For the PPI dataset (Tables~\ref{tab:ppi-bitwidth-drops-RELU}--\ref{tab:ppi-bitwidth-drops-ELU} in Appendix~\ref{sec:Experiments}) exhibits strong robustness to dynamic post-training quantization: performance is preserved down to 6 bits across all activation functions, with only minor deviations (less than 3\%)  observed in the from 7 to 6 and from 6 to 5 bit transitions. The only marked degradation appears when moving into the very low-precision range (5 bits and below).

\section{Logic \quantlogic}
\label{section:quantlogic}

We set up a logical framework called \quantlogic extending the logic introduced in \cite{Saelzer-etal-IJCAI25} with global aggregation: it is a \emph{lingua franca} to represent GNN computations and properties on graphs.

\paragraph{Syntax.}
Let $\featuresset$ be a finite set of features and $\setnumbers$ be some finite-width arithmetic. We consider a set of \emph{expressions} defined by the following grammar in Backus-Naur form:
$$\expression ::= c \mid x_i \mid \activationfunction (\expression) \mid \agreggationfunction (\expression) \mid \agreggationfunctionglobalreadout(\expression) \mid
\expression + \expression \mid c\times\expression$$
where $c$ is a number in  $\setnumbers$, $x_i$ is a feature in $\featuresset$, $\alpha$ is a symbol for denoting the activation function, and $\agreggationfunction$ and $\agreggationfunctionglobalreadout$ denote the aggregation function for local and global readout respectively.
We define the set of subexpression of $\expression$, denoted by $\mathit{sub}(\expression)$ inductively by $\mathit{sub}(c) = \{c\}$, $\mathit{sub}(x_i) = \{x_i\}$, $\mathit{sub}(\activationfunction(\expression)) = \{\activationfunction(\expression)\} \cup \mathit{sub}(\expression)$, 
$\mathit{sub}(\AGG(\expression)) = \{\AGG(\expression)\} \cup \mathit{sub}(\expression)$, $\mathit{sub}(\GAGG(\expression)) = \{\GAGG(\expression)\} \cup \mathit{sub}(\expression)$, $\mathit{sub}(\expression + \expression') = \{\expression + \expression\} \cup \mathit{sub}(\expression) \cup \mathit{sub}(\expression')$, and $\mathit{sub}(c \times \expression) = \{c \times \expression\} \cup \mathit{sub}(c) \cup \mathit{sub}(\expression')$.   
A \emph{formula} is defined as $\expression \geq k$
where $\expression$ is an expression and $k$ is an element of $\setnumbers$.
If $-1 \in \setnumbers$ we write $-\expression$ instead of $(-1) \times \expression$ for any expression $\expression$. Other standard abbreviations can be used as well.
%
%
%
%

%

Formulas are represented as directed acyclic graphs, aka circuits, meaning that we do not repeat the same expressions several times.
For instance, the formula $\agreggationfunction(x_1 + x_2) + (x_1 + x_2) \geq 3$ can be represented as the DAG given in Figure~\ref{figure:dag}. Formulas can also be represented by a sequence of assignments via new fresh intermediate variables. For instance:
$
    y := x_1 + x_2, 
    z := \agreggationfunction(y) + y, 
    res := z \geq 3
$.

\begin{figure}[t]
\newcommand{\subtermdrawarrow}[4]{
\draw[-latex] (#1) edge[bend left=#3] node [above] {#4} (#2);
}
    \begin{tikzpicture}[xscale=1,yscale=0.3,every node/.style={inner sep=0.5mm}]
     \node (geq) at (-3, 0) {$\geq$};
     \node (3) at (0, 0) {3};
     \node (+1) at (0, 2) {+};
     \node (agg) at (2, 2) {$\agreggationfunction$};
     \node (+2) at (2, 0) {+};
     \node (x1) at (4, 2) {$x_1$};
     \node (x2) at (4, 0) {$x_2$};

     \subtermdrawarrow{geq} {3} {-5} {}
     \subtermdrawarrow{geq} {+1} {10} {$\cdot$}
     \subtermdrawarrow{+1} {+2} {-15} {}
     \subtermdrawarrow{agg} {+2} {0} {}    
   
    \subtermdrawarrow{+1} {agg} {5} {}    
    \subtermdrawarrow{+2} {x1} {15} {}
    \subtermdrawarrow{+2} {x2} {-5} {}
    \end{tikzpicture}
    
    \caption{DAG representation of the formula 
    $\agreggationfunction(x_1 + x_2) + (x_1 + x_2) \geq 3$.}
    \label{figure:dag}
\end{figure}

\paragraph{Semantics.}

Consider a graph $G = (\setvertices, \setedges, \labeling)$, where vertices in $\setvertices$ are labeled via a labeling function $\labeling: V \rightarrow \setnumbers^{|F|}$ with feature values.
The value of an expression $\expression$ in a vertex
$u \in V$ is denoted by 
$\sem{\expression}{G, u}$ 
and is defined by induction on~$\expression$:
\begin{equation*}
\begin{aligned}
\sem{c}{G,u} &= c, \\
\sem{x_i}{G,u} &= \labeling(u)_i, \\
\sem{\expression + \expression'}{G,u} &=
  \sem{\expression}{G,u} +_{\setnumbers} \sem{\expression'}{G,u}, \\
\sem{c \times \expression}{G,u} &=
  c \times_{\setnumbers} \sem{\expression}{G,u}, \\
\sem{\activationfunction(\expression)}{G,u} &=
  \semone{\activationfunction}\!\bigl(\sem{\expression}{G,u}\bigr), \\
\sem{\agreggationfunction(\expression)}{G,u} &=
  \Sigma_{\substack{v \mid u \setedges v}}\sem{\expression}{G,v}, \\
\sem{\agreggationfunctionglobalreadout(\expression)}{G,u} &=
  \Sigma_{v \in \setvertices}\sem{\expression}{G,v}.
\end{aligned}
\end{equation*}

We define
$\sem{\expression \geq k}{} = \{ G,u \mid \sem{\expression}{G,u} \geq_{\setnumbers} \sem{k}{G,u} \}$ (we write $\geq$ for the syntactic symbol and $\geq_\setnumbers$ for the relation defined in $\setnumbers$). A formula $\phi$ is satisfiable if $\sem{\phi}{}$ is non-empty.
The \emph{satisfiability problem} for \quantlogic is:
    Given a \quantlogic-formula $\phi$, decide whether $\phi$ is satisfiable.

\paragraph{Simulating a modal logic in the logic \quantlogic.}
\label{subsection:logicinquantlogic}

We show that extending \quantlogic with modal operators \cite{DBLP:books/cu/BlackburnRV01} does not increase the expressivity. We can even compute an equivalent \quantlogic without Boolean connectives and without modal operators in poly-time. It means that formulas like  $\phi_{\aGNN_1} \rightarrow x_0 = 1$ 
or
$\phi_{\aGNN_1} \land \Diamond^{\leq 3} (x_2 = 1)$ have poly-size equivalent formulas in \quantlogic.

%
%
Let $Atm_0$ be the set of atomic formulas of \quantlogic of the form $\expression \geq 0$. We suppose that $\expression$ takes integer values. In general, $\expression \geq k$ is an atomic formula equivalent to $\expression - k \geq 0$.
Without loss of generality, we thus assume that formulas of \quantlogic are over $Atm_0$.
Let modal \quantlogic be the
propositional logic on $Atm_0$ extended with modalities and a restricted variant of graded modalities where number $k$ in $\setnumbers$ in the following way:
\begin{align*}
  &\sem{\Box\phi}{} &&=&& \{ G,u \mid {G,v} \in \sem{\phi}{} \text{ for every } v \text{ s.t. } u\setedges v \}\\
  &\sem{\Box_g\phi}{} &&=&& \{ G,u \mid  {G,v} \in \sem{\phi}{} \text{ for every } v \text{ in } \setvertices \} \\
  &\sem{\Diamond^{\geq k}\phi}{}  &&=&& \{ G,u \mid  |\{ G,v \mid  u\setedges v,  G,v \in \sem{\phi}{} \}| \geq_{\setnumbers} k \} \\
  &\sem{\Diamond_g^{\geq k}\phi}{}  &&=&& \{ G,u \mid  |\sem{\phi}{}| \geq_{\setnumbers} k \}
\end{align*}
and modalities $\Diamond^{\leq k}\phi$ and 
$\Diamond_g^{\leq k}\phi$ defined the same way but with $\leq_{\setnumbers}$.
\begin{example}[continuation of Example~\ref{ex:animals-humans}]
\label{ex:modalities}
    We first define a few simple formulas to characterize the concepts of the domain.
    Let $\phi_A := x_0 = 1$ (Animal),
    $\phi_H := x_1 = 1$ (Human),
    $\phi_L := x_2 = 1$ (Leg),
    $\phi_F := x_3 = 1$ (Fur),
    $\phi_W := x_4 = 1$ (White), and
    $\phi_B := x_5 = 1$ (Black).:
    \begin{enumerate}
        \item Has a human owner, whose pets are all two-legged:
        $\Diamond(\phi_H \land \Box(\phi_A \rightarrow \Diamond^{=2}\phi_L))$.
        \item \label{item:not-fol} A human in a community that has more than twice as many animals as humans, and more than five animals without an owner:
        $
            \phi_H \land (\agreggationfunctionglobalreadout(x_0) - 2 \times \agreggationfunctionglobalreadout(x_1) \geq 0) \land \Diamond^{\geq 5}_g(\phi_A \land \Box(\lnot \phi_H)).$
        \item An animal in a community where some animals have white and black fur: $
            \phi_A \land \Diamond_g(\Diamond (\phi_F \land \Diamond \phi_W) \land \Diamond (\phi_F \land \Diamond \phi_B)).$
    \end{enumerate}
\end{example}
We can see the boolean operator $\lnot$, and the various modalities as functions from $\mathrm{Atm}_0$ to $\mathrm{Atm}_0$, and the boolean operator $\lor$ as a function from $\mathrm{Atm}_0 \times \mathrm{Atm}_0$ to $\mathrm{Atm}_0$ defined as follows, where $\alpha$ is ReLU ($\alpha(\expression) = \expression$ if $\expression \geq_\setnumbers 0$ and $\alpha(\expression) = 0$ otherwise). 
%
\begin{equation*}
\scalebox{0.9}{$
\begin{aligned}
&f_\lnot(\expression \geq 0) &&:=&& - \expression - 1 \geq 0 \\
&f_\lor(\expression_1 \geq 0, \expression_2 \geq 0)  &&:=&& \expression_1 + \alpha(\expression_2 - \expression_1) \geq 0\\
&f_{\Box}(\expression \geq 0) &&:=&& \agreggationfunction(-\alpha(-\expression)) \geq 0\\
&f_{\Diamond^{\geq k}}(\expression \geq 0) &&:=&& \agreggationfunction(\alpha(\expression+1) - \alpha(\expression)) - k \geq 0\\
&f_{\Diamond^{\leq k}}(\expression \geq 0) &&:=&& k - \agreggationfunction(\alpha(\expression + 1) - \alpha(\expression)) \geq 0
\end{aligned}
$}
\end{equation*}
%
%
For the corresponding global modalities ($f_{\Box_g}(\expression \geq 0)$, $f_{\Diamond_g^{\geq k}}(\expression \geq 0)$, and $f_{\Diamond_g^{\leq k}}(\expression \geq 0)$), it suffices to use $\agreggationfunctionglobalreadout$ in place of $\agreggationfunction$. 
%
%
The previous transformations can be generalized to arbitrary formulas of a modal version of \quantlogic as follows. 
\begin{equation*}
\scalebox{0.9}{$
\begin{aligned}
    &\trFE(\expression \geq 0) &&:=&& \expression \geq 0 \\
    &\trFE(\lnot \phi) &&:=&& f_\lnot (\trFE(\phi))\\
    &\trFE(\phi_1 \lor \phi_2) &&:=&& f_\lor(\trFE(\phi_1), \trFE(\phi_2))\\
     &\trFE(\boxplus \phi) &&:=&& f_{\boxplus}(\trFE(\phi)),
\end{aligned}
$}
\end{equation*}
where $\boxplus \in \{\Box, \Box_g, \Diamond^{\geq k}, \Diamond_g^{\geq k}, \Diamond^{\leq k}, \Diamond_g^{\leq k}\}$.

We can show that formulas of modal \quantlogic can be captured by a single expression $\expression \geq 0$. This is a consequence of the following lemma
\footnote{For simplicity, we do not present how to handle $\expression \geq 0$ when $\expression$ is not an integer. We could introduce several activation functions $\alpha$ in \quantlogic, one of them could be interpreted as the Heavyside step function. In \Cref{def:hintikkaset}, Point~4 is just repeated for each $\activationfunction$.}, proven in \citetheappendix.

\begin{restatable}{lemma}{lemmamodallogicintoquantlogic}
    \label{lem:trFE-corr}
    Let $\phi$ be a formula of modal \quantlogic. The formulas $\phi$ and $\trFE(\phi)$ are equivalent.
\end{restatable}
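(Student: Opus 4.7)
The plan is to prove the equivalence by structural induction on the formula $\phi$ of modal \quantlogic. The base case where $\phi$ is an atomic formula $\expression \geq 0$ is immediate since $\trFE$ is the identity on atoms. For the inductive step, assuming the claim holds for all strict subformulas of $\phi$, it suffices to verify that each operator $f_\lnot, f_\lor, f_\Box, f_{\Diamond^{\geq k}}, f_{\Diamond^{\leq k}}$ (and their global variants) faithfully implements the semantics of the corresponding modal operator when applied to an atom $\expression \geq 0$ whose inner expression takes integer values. The translation preserves the integer-valued invariant: since every $f$ is built from integer constants, sums, multiplications, and $ReLU$, the translated atom's expression remains integer-valued whenever its subexpressions are.

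The key arithmetic identities that I would establish case by case are: (i) for integer $\expression$, $\expression \geq 0$ fails iff $\expression \leq -1$ iff $-\expression - 1 \geq 0$, which handles $f_\lnot$; (ii) $\expression_1 + ReLU(\expression_2 - \expression_1) = \max(\expression_1, \expression_2)$, so the translated disjunction is non-negative iff at least one disjunct's expression is, which handles $f_\lor$; (iii) $-ReLU(-\expression) = \min(\expression, 0) \leq 0$, so summing this quantity over $\{v \mid u \setedges v\}$ yields a non-positive number that equals $0$ precisely when every neighbor satisfies $\expression \geq 0$, which handles $f_\Box$; (iv) for integer $\expression$, $ReLU(\expression+1) - ReLU(\expression) = \istrue{\expression \geq 0}$, so $\agreggationfunction$ of this expression returns the number of neighbors satisfying the atom, which handles $f_{\Diamond^{\geq k}}$ and, by flipping the direction, $f_{\Diamond^{\leq k}}$. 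The global variants follow identically with $\agreggationfunctionglobalreadout$ replacing $\agreggationfunction$ and the range $\{v \mid u \setedges v\}$ replaced by $\setvertices$, using the semantic clauses defined in the excerpt.

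Putting these cases together, I would conclude via induction that $\sem{\phi}{G,u} = \sem{\trFE(\phi)}{G,u}$ for every pointed graph $(G,u)$, where the right-hand side is understood as the truth value of the translated atom under the semantics of \quantlogic. The main obstacle is the integer-valuedness hypothesis: the indicator identity $ReLU(\expression+1) - ReLU(\expression) = \istrue{\expression \geq 0}$ fails for fractional values of $\expression$, and likewise $f_\lnot$ relies on $\expression \geq 0$ being equivalent to $\expression > -1$. Maintaining this invariant throughout the induction is what forces us to restrict attention to expressions over integer values, a restriction that the paper explicitly records before stating the translation and that the footnote acknowledges could be lifted by allowing an activation function interpreting the Heaviside step function. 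Given the invariant, the induction closes without further difficulty.
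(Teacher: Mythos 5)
Your proposal is correct and follows essentially the same route as the paper's proof: structural induction on $\phi$, with the base case trivial and each connective handled by the same arithmetic identities (the paper argues the disjunction case by a two-sided case analysis rather than via the $\max$ identity, and dispatches $\Box$ with ``other cases are similar,'' but the content is identical). Your explicit treatment of $f_\Box$ via $-ReLU(-\expression)=\min(\expression,0)$ and your emphasis on the integer-valuedness invariant are both consistent with, and slightly more detailed than, what the paper records.
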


\subsection{Link between \quantlogic and ACR-GNN verification}
We focus on the following decision problems in context of the verification of ACR-GNNs: 
\begin{itemize}
    \item (\verificationtask{1}, sufficiency) Given a GNN $\aGNN$ and a \quantlogic{} formula $\phi$, decide whether $\sem{\phi}{} \subseteq \sem{\aGNN}{}$.
    \item (\verificationtask{2}, necessity) Given a GNN $\aGNN$ and a \quantlogic{} formula $\phi$,  decide whether $\sem{\aGNN}{} \subseteq \sem{\phi}{}$.
    \item (\verificationtask{3}, consistency) Given a GNN $\aGNN$ and a \quantlogic{} formula $\phi$, decide whether $\sem{\phi}{} \cap \sem{\aGNN}{} \neq \emptyset$.
\end{itemize}

Informally, these problems are described as follows: problem \verificationtask{1} consists of verifying that pointed graphs satisfying the specification $\phi$ are classified positively by the GNN $\aGNN$, while \verificationtask{2} is the reverse, namely verifying that any pointed graphs positively classified by $\aGNN$ satisfies $\phi$. Problem \verificationtask{3} consists in verifying whether $\aGNN$ can classify some pointed graph satisfying $\phi$.

Essential to our results is the straightforward insight that \quantlogic where $\alpha$ is given by ReLU 
and ACR-GNN are equally expressive.
\begin{theorem}
    \label{th:formulatognn}
    Given an ACR-GNN $\aGNN$, we can compute \quantlogic-formula in poly-time in the size of $\aGNN$ with $\sem{\aGNN}{} = \sem{\phi_\aGNN}{}$. Vice-versa, with $\alpha =$ ReLU, given a \quantlogic formula $\phi$, we can compute an ACR-GNN
    $\aGNN_\phi$ in poly-time, with $\sem{\aGNN_\phi}{} = \sem{\phi}{}$.
    
\end{theorem}
\begin{proof}
    We begin by arguing that for each $\aGNN$ there is an equivalent \quantlogic-formula. 
    The key insight needed here is that \quantlogic is tailored to capture the computation of an ACR-GNN. 
    Let $\aGNN = (\layer^{(1)}, \dots, \layer^{(\nblayers)}, \CLS)$
    be an ACR-GNN. Let $\mathcal{L}^{(1)}=(\COMB^{(1)}, \AGG,\GAGG)$ be the first layer of $\aGNN$ computing 
    $\COMB^{(1)}((x_1, \dots, x_m), (y_1, \dots, y_m), (z_1, \dots, z_m))$ where $\COMB^{(1)}$ is represented
    by a feedforward neural network $N_1$ using activation function $\alpha$. The vectors
    $(y_1, \dots, y_m)$ and $(z_1, \dots, z_m)$ correspond to the vectors aggregated by $\AGG$
    and $\GAGG$, respectively.
    Let $v_{11}(x_1', \dots, x_n') = \alpha(b_1 + \sum_{i=1}^n w_i x_i')$ be a node of $N_1$
    in the first layer where $x_i'$ corresponds to either some $x_i$, $y_i$ or $z_i$ as determined by $\COMB^{(1)}$.
    We build an equivalent \quantlogic-expression $\expression_{v_{11}}$ combining $c$,
    $\expression + \expression$, $c \times \expression$ and $\alpha(\expression)$ to represent
    the arithmetic and activation function, and we use $x_i$ in the case that $x_i' = x_i$,
    we use $\AGG(x_i)$ in the case that $x_i' = y_i$, and we use $\GAGG(x_i)$ in the case that $x_i' = z_i$.
    We do so for all nodes $v_{1j}$ in the first layer of $N_1$. This results in a sequence
    of expressions $\expression_{\mathcal{L}^{(1)}11}, \dots, \expression_{\mathcal{L}^{(1)}1k_1}$
    where $k_1$ is the layer size of the first layer of $N_1$. For subsequent layers, we use the same
    construction, but the expressions $\expression_{\mathcal{L}^{(1)}1j}$ as atomic. This results in a sequence
    of expressions $\expression_{\mathcal{L}^{(1)}m1}, \dots, \expression_{\mathcal{L}^{(1)}m_1k}$
    where $k$ is the size of the last layer $m_1$ of $N_1$, and also the output dimensionality of $\mathcal{L}^{(1)}$. We continue as before for GNN layer $\mathcal{L}^{(i)}$,
    but use $\varphi_{\mathcal{L}^{(i-1)}m_jj}$ as atomic expressions.
    The final classification condition $\CLS$ is then given by the formula $\expression \geq 1$ where 
    $\expression$ uses $\varphi_{\mathcal{L}^{(\nblayers)}m_\nblayers j}$ as atomic as well as multiplicative parameters
    $a_i$ determined by $\CLS$.

    For the other direction, namely showing that for each \quantlogic-formula there is an equivalent ACR-GNN,
    let $\varphi$ be a \quantlogic-formula $\expression \geq k$ over variables $x_1, \dots x_m$.
    We remark that the argument follows the same line of reasoning as in expressivity results of previous works
    such as \cite{Saelzer-etal-IJCAI25,NunnSST24,DBLP:conf/iclr/BarceloKM0RS20}. We note $\mathit{sub}(\expression)$ the set of subexpressions of $\expression$.
    Let $\expression_1, \dots, \expression_n$
    be an enumeration of $\mathit{sub}(\expression)$ such that if $\expression_i \in \mathit{sub}(\expression_j)$ then we have $i \leq j$.
    We construct an ACR-GNN $\aGNN_\phi$ with $n$ layers as follows. First, the input and output dimensionality
    of each layer of $\aGNN$ is $n$, meaning that we have one dimension per subexpression. W.l.o.g.\
    we assume that the first $m$ dimensions correspond to the values of the basic variables $x_1, \dots, x_m$
    of $\expression$. Furthermore, we assume for all layers $\mathcal{L}^{(i)}$ that they do not adjust any other dimension
    than the $i$th one of a state. Note that this can easily be realised using FNN with ReLU activation. Then, if $\expression_i=x_j$ layer $\mathcal{L}^{(i)}$ effectively does nothing in the
    sense that the FNN $N_i$ representing the combination function simply represents the identity. In the cases
    $\expression_i=c$, $\expression_i=\expression+\expression$, $\expression_i=x \times \expression$,
    or $\expression_i = \activationfunction(\expression)$ we also only utilise FNN $N_i$ to do the arithmetic operation
    or applying activation function $\activationfunction$. Note that this involves the assumption that previous layers
    evaluated all necessary subexpressions already. Finally, for the cases $\expression_i=\AGG(\expression)$
    and $\expression_i = \GAGG(\expression)$ we simply use $\AGG$ and $\GAGG$ functions of layer $\mathcal{L}^{(i)}$
    to resolve it. Then, in the $\CLS$ function of $\aGNN_\phi$ we evaluate $\varphi = \expression \geq k$.
\end{proof}

The following example highlights the construction idea from ACR-GNNs to \quantlogic of Theorem~\ref{th:formulatognn}.
\begin{example}
To reason formally about ACR-GNNs, we represent their computations using \quantlogic.
Consider an ACR-GNN $\mathcal{A}$ with two layers of input and output dimension~2, using summation for aggregation, truncated ReLU as activation $\activationfunction(x) = \max(0, \min(1,x)) = \sem{\alpha}{}(x)$, and a classification function $2x_1 - x_2 \geq 1$. The combination functions are:
\begin{equation*}
\scalebox{1.0}{$
\begin{aligned}
\COMB_1(\boldsymbol{x}, \boldsymbol{y}, \boldsymbol{z}) &:=
\begin{pmatrix}
\activationfunction(2x_1 + x_2 + 5y_1 - 3y_2 + 1) \\
\activationfunction(-x_1 + 4x_2 + 2y_1 + 6y_2 - 2)
\end{pmatrix}^\intercal, \\
\COMB_2(\boldsymbol{x}, \boldsymbol{y}, \boldsymbol{z}) &:=
\begin{pmatrix}
\activationfunction(3x_1 - y_1 + 2z_2) \\
\activationfunction(-2x_1 + 5y_2 + 4z_1)
\end{pmatrix}^\intercal.
\end{aligned}
$}
\end{equation*}
Note that this assumes that $\mathcal{A}$ operates over $\setnumbers$ with at least three bits.
Then, the corresponding \quantlogic formula $\varphi_\mathcal{A}$ is given by:
\begin{align*}
    \psi_1 &:= \alpha(2x_1 + x_2 + 5\agreggationfunction(x_1) - 3\agreggationfunction(x_2) + 1) \\
    \psi_2 &:= \alpha(-x_1 + 4x_2 + 2\agreggationfunction(x_1) + 6\agreggationfunction(x_2) - 2) \\
    \chi_1 &:= \alpha(3\psi_1 - \agreggationfunction(\psi_1) + 2(\agreggationfunctionglobalreadout(\psi_2))) \\
    \chi_2 &:= \alpha(-2\psi_1 + 5(\agreggationfunction(\psi_2)) + 4\agreggationfunctionglobalreadout(\psi_1)) \\
    \varphi_A &:= 2(\chi_1) - \chi_2 \geq 1. 
\end{align*}

To sum up, given a GNN $\aGNN$, we compute \quantlogic-formula in poly-time in the size of $\aGNN$ with $\sem{\aGNN}{} = \sem{\phi_\aGNN}{}$. The vice-versa direction works analogously.
\end{example}

Finally, ACR-GNN verification tasks can be solved by reduction to the satisfiability problem of \quantlogic:
\begin{itemize}
    \item \verificationtask{1} by checking that $\phi \land \lnot \phi_\aGNN$ is not satisfiable
    \item  \verificationtask{2} by checking that $\lnot \phi \land \phi_\aGNN$ is not satisfiable
    \item \verificationtask{3} by checking that $\phi \land \phi_\aGNN$ is satisfiable
\end{itemize} 

\section{Complexity Upper Bound}
\label{section:upperbound}

In this section, we prove the NEXPTIME membership of reasoning in modal quantized logic, and also of solving of ACR-GNN verification tasks (by reduction to the former).

\begin{theorem}
Suppose that $\sem{\activationfunction}{}$ is computable in exponential-time in the bit-width $n$ of $\setnumbers$.
The satisfiability problem of \quantlogic is decidable and in NEXPTIME, and so is \verificationtask{3}. Consequently, \verificationtask{1} and \verificationtask{2} are in coNEXPTIME.
\label{th:modalquantizedlogicNEXPTIME}
\end{theorem}

In order to prove Theorem~\ref{th:modalquantizedlogicNEXPTIME}, 
we borrow from the ideas behind the proof of the NEXPTIME membership of concept satisfiability in the description logic $\mathcal{ALCSCC}^{++}$ 
from \cite{DBLP:conf/ecai/BaaderBR20} (which adapts a proof in \cite{DBLP:conf/gcai/BaaderE17} of the complexity of \ALC ECBoxes consistency). We adapt it to the logic \quantlogic. 
The difference resides in the definition of Hintikka sets and the treatment of quantization. The idea is to encode the constraints of a \quantlogic-formula $\phi$ in a formula of exponential length of a quantized version of QFBAPA, that we prove to be in NP.

\subsection{Hintikka Sets}

Consider \quantlogic-formula $\phi$. Let $E(\phi)$ be the set of subexpressions in $\phi$. For instance, if $\phi$ is
$\agreggationfunction(\activationfunction(x_2 + \agreggationfunctionglobalreadout(x_1))) \geq 5$ 
then 
    $E(\phi) =  
\{
\agreggationfunction(\activationfunction(x_2 + \agreggationfunctionglobalreadout(x_1))), \activationfunction(x_2 + \agreggationfunctionglobalreadout(x_1)), x_2,$
    $\agreggationfunctionglobalreadout(x_1), x_1\}.$
From now on, we consider equality formulas that are of the form $\expression {=} k$ where $\expression$ is a subexpression of $\phi$ and $k \in \setnumbers$.

\begin{definition}
\label{def:hintikkaset}
    A Hintikka set $\hintikkaset$ for $\phi$ is a subset of $\{\expression {=} k \mid \expression \in E(\phi), k \in \setnumbers \}$, such that:
    \begin{enumerate}
        [noitemsep]
        \item\label{hintikka:unique} For all $\expression \in E(\phi)$, there is a unique value $k \in \setnumbers$ such that $\expression = k \in \hintikkaset$

        \item\label{hintikka:constant} For all $c \in E(\phi) \cap \setnumbers$, $c = c \in \hintikkaset$ 
        
        \item\label{hintikka:sum} For all $\expression_1 {+} \expression_2 \in E(\phi)$, if $\expression_1 {=} k_1, \expression_2 {=} k_2 \in \hintikkaset$ then $\expression_1 {+} \expression_2 {=} k' \in \hintikkaset$, where $k' = k_1 {+}_\setnumbers k_2$
       \item\label{hintikka:product} For all $c \times \expression \in E(\phi)$, if $\expression = k \in H$ then $ c\times\expression {=} k' \in H$ where $k' = c \times_\setnumbers k$
       
        \item\label{hintikka:activation} 
        For all $\activationfunction(\expression) \in E(\phi)$, $\expression {=} k \in \hintikkaset$ then 
        $\activationfunction(\expression) {=} k' \in \hintikkaset$ where 
        $k' = \sem{\activationfunction}{}(k)$
    \end{enumerate}
\end{definition}

Informally, a \emph{Hintikka set} contains equality subformulas obtained from a choice of a value for each subexpression of~$\phi$ (point \ref{hintikka:unique}), provided that the set is consistent \emph{at the current vertex} (points 1--5). The notion of Hintikka set does not take any constraints about $\agreggationfunction$ and $\agreggationfunctionglobalreadout$ into account since checking consistency of aggregation would require information about the neighbors or the whole graph.

\begin{example}
If $\phi$ is 
$\agreggationfunction(\activationfunction(x_2 + \agreggationfunctionglobalreadout(x_1))) \geq 5$,
then an example of Hintikka set is:
$
\{\agreggationfunction(\activationfunction(x_2 + \agreggationfunctionglobalreadout(x_1)) = 8,$
$ \activationfunction(x_2 + \agreggationfunctionglobalreadout(x_1)) = 9,
x_2 + \agreggationfunctionglobalreadout(x_1) = 9,  x_2 = 7, $
$\agreggationfunctionglobalreadout(x_1) = 2, x_1 = 5\}.$
\end{example}

\begin{restatable}
    {proposition}{propositionNumberHintikkasets}
    The number of Hintikka sets is bounded by $2^{n|\phi|}$ where $|\phi|$ is the size of $\phi$, and $n$ is the bitwidth of $\setnumbers$.
\end{restatable}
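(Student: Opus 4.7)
The plan is to observe that a Hintikka set is essentially a function: by Point~1 of \Cref{def:hintikkaset}, every subexpression $\expression \in E(\phi)$ is paired with a \emph{unique} value $k_\expression \in \setnumbers$ in the set. Conversely, any Hintikka set is completely determined by this assignment, since Points~2--4 only restrict which assignments are admissible. Hence the collection of Hintikka sets injects into the set of functions $E(\phi) \to \setnumbers$.

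The counting is then routine. First, $|E(\phi)| \leq |\phi|$, because each subexpression corresponds to a distinct node in the DAG representation of~$\phi$, and the size of~$\phi$ counts these nodes. Second, because $\setnumbers$ has bit-width $n$, we have $|\setnumbers| \leq 2^n$. The total number of functions is therefore at most
\[
|\setnumbers|^{|E(\phi)|} \leq (2^n)^{|\phi|} = 2^{n|\phi|},
\]
which gives the claimed bound.

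There is no real obstacle here; the only subtle point to articulate cleanly is that Points~2--4 constrain but never expand the set of candidate assignments (e.g., Point~1's uniqueness already forbids multiple values for the same subexpression, and the closure conditions only rule out inconsistent choices). One could tighten the bound by noting that $|E(\phi)|$ is typically smaller than $|\phi|$ once the $\geq k$ layer and numeric constants are excluded, but that refinement is unnecessary: the stated bound follows directly from the two elementary inequalities above.
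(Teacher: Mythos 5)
Your proof is correct and follows essentially the same argument as the paper's: a Hintikka set is determined by the choice of one value in $\setnumbers$ (at most $2^n$ options) for each of the at most $|\phi|$ subexpressions, giving the bound $(2^n)^{|\phi|} = 2^{n|\phi|}$. Your additional remarks on injectivity and on Points~2--4 only restricting the admissible assignments are a slightly more careful articulation of the same counting.
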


\subsection{Quantized Version of QFBAPA}
\newcommand{\setvariable}{S}
\newcommand{\setexpression}{S}

A QFBAPA formula is a propositional formula where each atom is either an inclusion of sets or equality of sets or linear constraints (\cite{kuncak-rinard-QFBAPA}, and \citeappendix{sec:qfbapa}). Sets are denoted by Boolean algebra expressions, e.g., $(\setvariable \cup \setvariable') \setminus \setvariable''$, or $\universe$ where $\universe$ denotes the set of all points in some domain. Here $\setvariable$, $\setvariable'$, etc., are set variables.
Linear constraints are over $|\setvariable|$ denoting the cardinality of the set denoted by the set expression $S$.
For instance, 
    the QFBAPA-formula $(windfarm \subseteq powerplant) \land (|powerplant| + |\universe \setminus windfarm| \geq 6) \land (|powerplant| < 2)$ is read as `all wind farms are power plants, and the number of power plants plus the number of entities that are not wind farms is greater than $6$ and the number of power plants is smaller than $2$'.

We now introduce a \emph{quantized} version $\quantQFBAPA$ of QFBAPA. It has the same syntax as QFBAPA except that 
numbers in expressions are in $\setnumbers$. 
Semantically, every arithmetic expression is interpreted in $\setnumbers$. For each set expression $\setexpression$, 
the interpretation of $|\setexpression|$ is not the cardinality $c$ of the interpretation of $\setexpression$, but the result of the computation
$1+1+\ldots + 1$
in $\setnumbers$ with $c$ occurrences of 1 in the sum.

We consider $\setnumbers$ that saturates, meaning that if $x + y$ exceeds the upper bound limit of $\setnumbers$, there is a special value denoted by $+\infty$ such that $x+y = +\infty$. 

\begin{restatable}{proposition}{quantQFBAPAinNP}
\label{proposition:quantQFBAPAinNP}
If the bitwidth $n$ is in unary, and if $\setnumbers$ saturates, then satisfiability in \quantQFBAPA is in NP.
\end{restatable}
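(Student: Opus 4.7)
The plan is to adapt the proof of Kuncak and Rinard that \QFBAPA is in NP, whose centerpiece is a sparse model theorem: if a \QFBAPA formula $\phi$ with $d$ set variables is satisfiable, then it admits a model in which only $N=\mathrm{poly}(|\phi|)$ of the $2^d$ Venn regions have nonzero cardinality. Their reduction is essentially combinatorial (a Carath\'{e}odory-type argument on the linear constraints linking $|S|$ to the cardinalities of Venn regions included in $S$), and I expect it to carry over to \quantQFBAPA essentially unchanged. The NP witness will then describe such a sparse model together with the $\setnumbers$-value of each cardinality.

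First, I would check that the sparse model theorem lifts to saturated arithmetic: the combinatorial decomposition of each $|S|$ as a sum of Venn-region cardinalities still holds, and the extraction of a polynomial-size support remains valid, since saturation affects only how sums are interpreted, not the underlying geometry of solutions. The non-deterministic algorithm then proceeds as follows. It guesses at most $N$ Venn regions, each encoded as a Boolean vector of length $d$ (polynomial in $|\phi|$); for each selected region $r$, it guesses a value $c_r \in \setnumbers \cup \{+\infty\}$, encoded in $O(n)$ bits, which is polynomial because $n$ is given in unary; for each set expression $S$ occurring in $\phi$, it computes $|S|$ by iteratively summing the $c_r$ for $r \subseteq S$ using $\setnumbers$-arithmetic with saturation, which takes polynomially many polynomial-time operations; finally, it evaluates the Boolean combination of atoms of $\phi$ under these values, again in polynomial time.

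Soundness is obtained by realizing any accepting guess concretely: populate each selected Venn region $r$ with exactly $c_r$ elements when $c_r \in \setnumbers$, and with enough elements to force $\setnumbers$-saturation when $c_r = +\infty$. Completeness follows from the lifted sparse model theorem, together with the observation that, inside any satisfying interpretation, replacing an already-saturating region by one whose size equals the saturation threshold preserves every value $|S|$ computed in $\setnumbers$; hence there is a witness of the required polynomial shape whenever $\phi$ is satisfiable.

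The main obstacle is handling saturation inside the sparse model argument itself. In the unquantized case one solves a system of linear equations over $\mathbb{N}$ and invokes Carath\'{e}odory to extract a polynomially sparse solution; under saturation the system becomes piecewise linear, because a sum of cardinalities can collapse to $+\infty$. I plan to sidestep this by additionally guessing, for every set expression $S$ appearing in $\phi$, a Boolean ``saturated or not'' flag; under this guess each $|S|$ reduces to ordinary linear reasoning over $\mathbb{N}$ either bounded below the saturation threshold or forced above it, and the standard sparse-model extraction applies within each regime. The overall witness remains polynomial in $|\phi|$ and $n$, yielding the claimed NP upper bound.
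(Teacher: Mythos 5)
Your proposal is essentially correct, but it takes a more low-level route than the paper. The paper's proof never reopens the sparse-model theorem: it guesses, for each set expression $B$ occurring under a cardinality operator, its quantized value $k_B \in \setnumbers$ (possibly $+\infty$); checks the resulting grounded Boolean formula over $\setnumbers$ in polynomial time; and then delegates realizability of the guessed cardinalities to the standard QFBAPA decision procedure, used as a black box, via the formula $\bigwedge_B constraint(B)$ with $constraint(B)$ being $|B| = k_B$ when $k_B$ is finite and $|B| \geq limit$ when $k_B = +\infty$. Your approach instead inlines the internals of the Kuncak--Rinard argument (Venn regions, Carath\'eodory-style support extraction) and adapts them to saturated arithmetic. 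The place where your argument is genuinely fragile is exactly the one you flag: the claim that the sparse-model theorem carries over ``essentially unchanged'' is not immediate, because saturation makes the constraint system piecewise linear. Your patch --- guessing a saturated/unsaturated flag per set expression so that each regime becomes ordinary integer-linear --- does work, but observe that once you have guessed these flags you have effectively reconstructed the paper's reduction: a flag plus a finite value is the same data as $k_B \in \setnumbers \cup \{+\infty\}$, and the per-regime linear system you then solve is precisely the QFBAPA instance $\bigwedge_B constraint(B)$. So the two proofs buy different things: yours makes the witness (a sparse Venn-region model) explicit and self-contained, at the cost of re-verifying the support-extraction lemma under saturation and of handling set constraints ($B_1 \subseteq B_2$, $B_1 = B_2$) directly on the guessed regions; the paper's is shorter and more robust because correctness of the realizability check is inherited wholesale from QFBAPA membership in NP. Two small points to tighten if you keep your route: (i) in the soundness direction, when realizing a region with $c_r = +\infty$ you must populate it with strictly more elements than the saturation threshold so that the quantized cardinality of every superset is forced to $+\infty$; and (ii) you should justify that saturated summation of non-negative values is order-independent (it is, since the result is $+\infty$ exactly when the true sum exceeds the threshold), which the paper also relies on implicitly.
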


\subsection{Reduction to \quantQFBAPA}

\newcommand{\subexpression}{\expression'}

Let
$\phi$ be a formula of \quantlogic. 
For each Hintikka set $H$, we introduce the set variable $X_H$ that intuitively represents the $H$-vertices, i.e., the vertices in which all subformulas in $H$ hold.
Moreover, for each formula $\subexpression=k$ in $\{\expression {=} k \mid \expression \in E(\phi), k \in \setnumbers \}$, we introduce the set variable $X_{\subexpression=k}$ that intuitively represents the vertices in which $\subexpression=k$ holds. 
We capture this with two \quantQFBAPA-formulas.
\Cref{form:reduc1} expresses that $\set{X_H}_H$ form a partition of the universe. \Cref{form:reduc2} makes the bridge between variables $X_{\subexpression=k}$ and $X_H$.
\begin{align}
    \bigwedge_{H \neq H'}
    (X_{H} {\cap} X_{H'}  {=} \emptyset) \land 
    (\bigcup_{H} X_{H} {=} \universe)\label{form:reduc1} \\ 
   \bigwedge_{\subexpression \in E(\phi)} \bigwedge_{k \in \setnumbers} (X_{\subexpression=k} = \!\!\bigcup_{H \suchthat \subexpression=k \in H} \!\!\! X_{H})\label{form:reduc2}
\end{align}
We introduce also a variable $S_H$ that denotes the set of all successors of some $H$-vertex. If there is no $H$-vertex then the variable $S_H$ is just irrelevant. 
The \quantQFBAPA-formula in \Cref{eq:qfbapaaggregation} encodes the semantics of $\agreggationfunction(\expression)$. More precisely, it says that for all subexpressions $\agreggationfunction(\expression)$, for all values $k$, for all Hintikka sets $H$ 
containing formula $\agreggationfunction(\expression) {=} k$, if there is some $H$-vertex (i.e., some vertex in $X_H$),
then the aggregation obtained by summing over the successors of some $H$-vertex is $k$:

\begin{align}
\bigwedge_{\agreggationfunction(\expression) \in E(\phi)}&\bigwedge_{k \in \setnumbers}
\bigwedge_{H  \suchthat \agreggationfunction(\expression) {=} k \in H
}   \label{eq:qfbapaaggregation} \\ 
&((X_H \neq \emptyset)
  \rightarrow
  \sum_{k' \in \setnumbers} |S_H \cap X_{\expression=k'}| \times k' = k) \nonumber 
\end{align}
In the previous sum, we partition $S_H$ into subsets $S_H \cap X_{\expression=k'}$ for all possible values $k'$. Each contribution for a successor in $S_H \cap X_{\expression=k'}$ is $k'$. We rely here on the fact\footnote{This is true for some fixed-point arithmetics but not for floating-point arithmetics. See \citeappendix{appendixsection:distributivity}.} that $(1+1+\ldots+1) \times k' = k' + k' + \ldots + k'$. We also fix a specific order over the values $k'$ in the summation 
(it means that $\agreggationfunction(\expression)$ is computed as follows: first order the successors according to the taken values of $\expression$ in that specific order, then perform the summation).
Finally, the semantics of $\agreggationfunctionglobalreadout$ is captured by the \quantQFBAPA-formula:
\begin{align}
\bigwedge_{\agreggationfunctionglobalreadout(\expression) \in E(\phi)}&\bigwedge_{k \in \setnumbers} \label{eq:qfbapaaggregationglob} \\ &X_{\agreggationfunctionglobalreadout(\expression) = k} \neq \emptyset
    \rightarrow 
    \sum_{k' \in \setnumbers} |X_{\expression=k'}| \times k' = k \nonumber
\end{align}
Note that intuitively \Cref{{eq:qfbapaaggregationglob}} implies that for $X_{\agreggationfunctionglobalreadout(\expression) = k}$ is interpreted as the universe, for the value $k$ which equals the semantics of $\sum_{k' \in \setnumbers} |X_{\expression=k'}| \times k'$.

\newcommand{\formularules}{\psi_{1\text{--}4}}
Finally, given $\phi = \expression \geq k$, we define
\[tr(\phi) := \formularules \land \bigvee_{k' \geq_\setnumbers k}X_{\expression=k'}\neq \emptyset\]
where $\formularules$ is the conjunction of Formulas 1--4. The function $tr$ requires to compute all the Hintikka sets. 
In particular, all of them need to satisfy point~4 of \Cref{def:hintikkaset}. Therefore, we get the following when $\sem{\activationfunction}{}$ is computable in time exponential in $n$.

\begin{restatable}{proposition}{NEXPTIMEupperboundtrcomputable}
 $tr(\phi)$ is of exponential size and is computable in time exponential in $|\phi|$ and $n$.
\end{restatable}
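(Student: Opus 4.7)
The plan is to bound the size and construction time of $tr(\phi)$ by a careful accounting of the exponentially many syntactic objects the construction ranges over, and to confirm that each elementary piece (a subformula $X_H$, a conjunct of Formula 1--4, a check of the Hintikka conditions) can be produced within the allotted budget. The global shape is: first enumerate the relevant index sets (subexpressions, values in $\setnumbers$, Hintikka sets), then emit the corresponding conjuncts/disjuncts one by one.

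First I would observe the size inventory. The set $E(\phi)$ has at most $|\phi|$ elements, so the number of atomic formulas $\subexpression = k$ is at most $|\phi| \cdot 2^n$. By the preceding proposition, there are at most $2^{n|\phi|}$ Hintikka sets, and each has polynomial-in-$n|\phi|$ description size (a list of $|\phi|$ entries of $n$ bits each). Enumeration can be done by iterating over all such assignments of values to subexpressions, then \emph{filtering}. Filtering requires checking the four conditions of Definition~\ref{def:hintikkaset}: conditions~1--3 reduce to arithmetic operations in $\setnumbers$, each computable in time polynomial in $n$ (since $n$ is unary). Condition~4 is the one bottleneck, as it requires evaluating $\sem{\activationfunction}{}(k)$; by hypothesis this is doable in $\mathrm{exp}(n)$. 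Since only $|\phi| \cdot 2^n$ evaluations are ever needed (one per pair $(\activationfunction(\expression), k)$), these can even be precomputed in a table of exponential size, so the whole filtering step for a single candidate Hintikka set runs in exponential time.

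Next I would bound the size of $\formularules$. Formula~1 contributes $O((2^{n|\phi|})^2) = 2^{O(n|\phi|)}$ symbols. Formula~2 contributes $O(|\phi| \cdot 2^n \cdot 2^{n|\phi|})$ symbols because for each of the $|\phi| \cdot 2^n$ pairs $(\subexpression, k)$ we enumerate the (at most $2^{n|\phi|}$) Hintikka sets to take the union. Formula~3 (aggregation) is a conjunction indexed by $(\agreggationfunction(\expression), k, H)$ of bodies that themselves sum over $k' \in \setnumbers$, for a total of $2^{O(n|\phi|)}$ symbols. Formula~4 (global aggregation) is similarly $2^{O(n|\phi|)}$. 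The final disjunction $\bigvee_{k' \geq k} X_{\expression = k'} \neq \emptyset$ adds only $2^n$ extra disjuncts. Writing each fresh variable $X_H$ or $X_{\subexpression = k}$ uses polynomially many bits, so the total bit-size of the emitted formula is still $2^{O(n|\phi|)}$.

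Finally, I would argue that construction time matches the output size up to a polynomial factor: the three nested loops over subexpressions, values and Hintikka sets can be realized by standard enumeration, and every inner step (an arithmetic check, a table lookup for $\activationfunction$, appending a token) is polynomial in $n|\phi|$. Summing over the exponentially many iterations gives an overall exponential time bound in $|\phi|$ and $n$, proving the claim. The only genuine difficulty is the activation condition of point~4, but the exponential-time assumption on $\sem{\activationfunction}{}$ together with the polynomial number of arguments at which it must be evaluated resolves it cleanly.
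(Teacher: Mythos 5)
Your proposal is correct and follows essentially the same route as the paper's proof: enumerate-and-filter the $2^{n|\phi|}$ candidate Hintikka sets (with point~4 handled by the exponential-time computability assumption on $\sem{\activationfunction}{}$), then emit Formulas 1--4 via a constant number of nested loops, each ranging over exponentially many indices. Your version simply spells out the size accounting in more detail than the paper does.
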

Crucially, the translation $tr(\ldots)$ preserves satisfiability.
\begin{restatable}{proposition}{NEXPTIMEupperboundtrcorrectness}
\label{prop:uppperbound-tr-correctness}
Let $\phi$ be a formula of \quantlogic.
    $\phi$ is satisfiable iff $tr(\phi)$ is \quantQFBAPA satisfiable.
\end{restatable}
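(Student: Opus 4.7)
The plan is to prove both directions by a round-trip between a pointed graph and a \quantQFBAPA-interpretation, exploiting the fact that every vertex of any graph automatically induces a unique Hintikka set, namely its ``type'', by reading off the value of each subexpression of $\phi$ at that vertex.

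For the forward direction, suppose $(G,u)$ satisfies $\phi = \expression \geq k$. I would define a \quantQFBAPA-interpretation $\sigma$ by setting $\sigma(\universe) := \setvertices$; associating to each $v \in \setvertices$ the Hintikka set $H(v) := \{\subexpression = \sem{\subexpression}{G,v} \mid \subexpression \in E(\phi)\}$, which is indeed a Hintikka set by the recursive semantics and Points~1--4 of \Cref{def:hintikkaset}; interpreting $\sigma(X_H) := \{v \in \setvertices \mid H(v) = H\}$ and $\sigma(X_{\subexpression=k'}) := \{v \in \setvertices \mid \sem{\subexpression}{G,v} = k'\}$; and, for each $H$ with $\sigma(X_H) \neq \emptyset$, fixing a representative $v_H \in \sigma(X_H)$ and setting $\sigma(S_H) := \{w \mid v_H \setedges w\}$. \Cref{form:reduc1,form:reduc2} are immediate from the definition of $H(v)$. \Cref{eq:qfbapaaggregation} follows by partitioning the neighbors of $v_H$ by their value under $\expression$: the aggregate at $v_H$ is $k$ in $G$ since $\agreggationfunction(\expression) = k \in H$. \Cref{eq:qfbapaaggregationglob} is analogous with all of $\setvertices$ in place of $v_H$'s neighborhood. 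Finally, $u$ witnesses $\bigvee_{k' \geq k} X_{\expression=k'} \neq \emptyset$ since $\sem{\expression}{G,u} \geq_\setnumbers k$.

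For the converse direction, assume $\sigma \models tr(\phi)$. I would build a graph $G = (\setvertices, \setedges)$ with $\setvertices := \sigma(\universe)$: by \Cref{form:reduc1}, every $v \in \setvertices$ lies in a unique $\sigma(X_{H(v)})$, which furnishes a label $\ell(v)_i := k_i$ from the (unique) equality $(x_i = k_i) \in H(v)$ given by Point~1 of \Cref{def:hintikkaset}; and for each such $v$, I would draw an edge from $v$ to every $w \in \sigma(S_{H(v)})$. The core step is then a structural induction on $\subexpression \in E(\phi)$ showing that for every $v$ and every $(\subexpression = k') \in H(v)$ one has $\sem{\subexpression}{G,v} = k'$. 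The constant, feature, sum, scalar-multiplication, and activation cases follow directly from Points~1--4 of \Cref{def:hintikkaset}. The local aggregation case uses \Cref{eq:qfbapaaggregation}: the neighborhood of $v$ is $\sigma(S_{H(v)})$, which by \Cref{form:reduc2} decomposes as $\bigsqcup_{k''} (\sigma(S_{H(v)}) \cap \sigma(X_{\subexpression=k''}))$, and by the induction hypothesis each element of the $k''$-part contributes $k''$; \Cref{eq:qfbapaaggregation} then equates this sum with the required $k'$. The global readout case is analogous, using \Cref{eq:qfbapaaggregationglob} together with $\setvertices = \sigma(\universe)$. The witness $\bigvee_{k' \geq k} X_{\expression=k'} \neq \emptyset$ then supplies some $v \in \setvertices$ with $\sem{\expression}{G,v} \geq_\setnumbers k$, so $(G,v) \models \phi$.

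The main obstacle I expect is the aggregation step in the induction: the argument needs the identity $\underbrace{k' + \cdots + k'}_{m \text{ times}} = m \times_\setnumbers k'$ (with $m = |\sigma(S_H) \cap \sigma(X_{\subexpression=k''})|$) to equate the graph-side aggregate, computed as a sum over neighbors, with the \quantQFBAPA-side weighted cardinality. As the footnote flags, this identity is sound for fixed-point arithmetics but not for arbitrary floating-point semantics, so the proof must implicitly restrict to the regime where it holds. A smaller subtlety is that both \Cref{eq:qfbapaaggregation} and the semantics of $\agreggationfunction$ presuppose the same summation order over the values $k' \in \setnumbers$, so the construction inherits that convention. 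With these points handled, the two directions assemble into the desired equivalence.
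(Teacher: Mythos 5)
Your proposal is correct and follows essentially the same route as the paper's proof: in the forward direction you interpret $X_H$ as the set of vertices whose ``type'' is $H$, pick a representative $H$-vertex to interpret $S_H$, and verify Formulas~(1)--(4); in the backward direction you rebuild the graph from $\sigma(\universe)$, $\sigma(X_H)$ and $\sigma(S_H)$ and run the same structural induction on subexpressions. The subtleties you flag (the identity $k'+\cdots+k' = m \times_\setnumbers k'$ and the fixed summation order) are exactly the ones the paper acknowledges in its footnote and limitations paragraph.
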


\begin{proof}[Proof sketch.]
From left to right, suppose that the \quantlogic formula $\phi$ is satisfiable. Then there exist a graph $G = (\setvertices, \setedges, \labeling)$ and $u\in \setvertices$ such that $G,u\models\phi$. We define the \quantQFBAPA-substitution $\semanticsQFBAPA$ as follows:
\begin{itemize}
\item $\semanticsQFBAPA(\universe) := \setvertices$,
\item $\semanticsQFBAPA(X_{\expression = k}) := \{w\in \setvertices \mid \sem{\expression}{G,w}=k\}$,
\item $\semanticsQFBAPA(X_H) := \{w\in \setvertices \mid G,w\models \bigwedge H\}$,
\item $\semanticsQFBAPA(S_H) := \{w \mid \exists v\in \semanticsQFBAPA(X_H) \text{ and } (v,w)\in \setedges\}$.
\end{itemize}
We then show that $\semanticsQFBAPA$ satisfies $tr(\phi)$.

Every vertex induces a unique Hintikka set. Hence the sets $\semanticsQFBAPA(X_H)$ form a partition of $\setvertices$, yielding \Cref{form:reduc1}. By definition of $\semanticsQFBAPA(X_{\expression=k})$ and the construction of Hintikka sets, \Cref{form:reduc2} also holds.

Consider $\agreggationfunction(\expression)=k\in H$. If $\semanticsQFBAPA(X_H) = \emptyset$, then the implication in \Cref{eq:qfbapaaggregation} is trivial. Otherwise, let $v\in \semanticsQFBAPA(X_H)$. Then
$\semanticsQFBAPA( \sum_{k'\in\setnumbers} |S_H\cap X_{\expression=k'}|\times k' )$
counts the contribution of all successors of $v$, grouped by their $\expression$-value.
This coincides with $\sum_{w\mid v\setedges w}\sem{\expression}{G,w}$, hence equals $k$. So \Cref{eq:qfbapaaggregation} is satisfied by $\semanticsQFBAPA$.
\Cref{eq:qfbapaaggregationglob} is also satisfied by $\semanticsQFBAPA$ by the same reasoning.

Finally, as $G,u\models \phi$ and $\phi$ is of the form $\expression \geq k$, there exists $k' \geq_{\setnumbers} k$ with $u\in \semanticsQFBAPA(X_{\expression=k'})$. Thus the last disjunct of $tr(\phi)$ holds and $\semanticsQFBAPA$ satisfies $tr(\phi)$.

\smallskip
Conversely, from right to left, let $\semanticsQFBAPA$ be a \quantQFBAPA-substitution that satisfies $tr(\phi)$.
Using the definition of $tr(\phi)$ and \Cref{def:hintikkaset},
we can prove that $\semanticsQFBAPA$ also satisfies the following formulas:
\begin{equation}
\label{eq:partition-X}
    \bigwedge_{\expression} \bigwedge_{k \not =_\setnumbers k'} (X_{\expression = k} \cap X_{\expression = k'} = \emptyset) \land (\bigcup_{k} X_{\expression = k} = \universe)\\
\end{equation}
\begin{equation}
    \label{eq:constant-X}
    \bigwedge_c X_{c = c} = \universe
\end{equation}
\begin{equation}
    \label{eq:sum-X}
    \bigwedge_{\expression_1 + \expression_2 \in E(\phi)} \bigwedge_{k_1, k_2} X_{\expression_1 = k_1} \cap X_{\expression_2 = k_2} \subseteq X_{\expression_1 + \expression_2 = k_1 +_\setnumbers k_2}
\end{equation}
\begin{equation}
    \label{eq:prod-X}
    \bigwedge_{c \times \expression \in E(\phi)} \bigwedge_{k} X_{\expression = k} = X_{c \times \expression = c \times_\setnumbers k}
\end{equation}
\begin{equation}
    \label{eq:activ-X}
    \bigwedge_{\alpha(\expression) \in E(\phi)} \bigwedge_k X_{\expression = k} = X_{\activationfunction(\expression) = \sem{\activationfunction}{}(k)}
\end{equation}
(It is presented as \Cref{lemma:tr-entailed-equations} in \citetheappendix.)

We construct a graph $G = (\setvertices, \setedges, \labeling)$ such that:
\begin{itemize}
\item $\setvertices := \semanticsQFBAPA(\universe)$,
\item $\setedges := \set{(u, v) \mid \exists H, u \in \semanticsQFBAPA(X_H) \text{ and } v \in \semanticsQFBAPA(S_H)}$,
\item $\labeling(v)_i := k \text{ where } v \in \semanticsQFBAPA(X_{x_i = k})$.
\end{itemize}
The set of vertices is the (\quantQFBAPA) universe, and we add an edge between any $H$-vertex $u$ and a vertex $v \in \semanticsQFBAPA(S_H)$. The labeling is well defined because of
\Cref{eq:partition-X}.

The substitution $\semanticsQFBAPA$ satisfies $tr(\phi)$, and therefore the formula $\bigvee_{k' \geq_\setnumbers k} X_{\expression = k'} \not = \emptyset$.
So there is $k' \geq_\setnumbers k$ and $u \in \universe$ such that $u \in \semanticsQFBAPA(X_{\expression = k'})$. that $G, u \models \phi$.

We can then prove by structural induction on $\expression$ that
if $u\in \semanticsQFBAPA(X_{\expression=k})$
then $\sem{\expression}{G,u} = k$. For the base cases, we use \Cref{eq:constant-X} for constants, while the definition of $\labeling$ is used directly for variables. The cases of sum, product, and activation functions use \Cref{eq:sum-X}, \Cref{eq:prod-X}, and \Cref{eq:activ-X} respectively. The cases of aggregations exploit \Cref{eq:qfbapaaggregation} and \Cref{eq:qfbapaaggregationglob}.

\end{proof}

Finally, in order to check whether a \quantlogic-formula $\phi$ is satisfiable,
we construct a $\quantQFBAPA$-formula $tr(\phi)$ of exponential size in exponential time. 
As the satisfiability problem of $\quantQFBAPA$ is in NP, we obtain that the satisfiability problem of \quantlogic is in NEXPTIME. 
We proved~\Cref{th:modalquantizedlogicNEXPTIME}.
\begin{remark}
Our methodology can be generalized to reason in subclasses of graphs. For instance, we may tackle the problem of satisfiability in a graph where vertices are of bounded degree bounded by $d$. To do so, we add the constraint $\bigwedge_{H} |S_H| \leq d$.
\end{remark}

\section{Complexity Lower Bound}
\label{section:lowerbound}

The NEXPTIME upper-bound is tight.
Having defined modalities in \quantlogic and stated Lemma~\ref{lem:trFE-corr}, Theorem~\ref{th:hardess-quantlogic} is proven by adapting the proof of NEXPTIME-hardness of deciding the consistency of \ALCQ-$T_C$Boxes presented in \cite{DBLP:journals/jair/Tobies00}. 

NEXPTIME-hardness is proven via a reduction from the tiling problem by Wang tiles of a torus of size $2^n \times 2^n$. A Wang tile is a square with colors, e.g., 
\tikz[baseline=0mm, scale=0.3]{
\tile 0 0{\tilered}{\tileyellow}{\tilered}{white}
}, \tikz[baseline=0mm, scale=0.3]{
\tile 0 0{\tileyellow}{\tileyellow}{\tilered}{\tileyellow}
}, etc.
The problem takes as input a number $n$ in unary, and Wang tile types, and an initial condition---let us say the bottom row is already given. The objective is to decide whether the torus of $2^n \times 2^n$ can be tiled while colors of adjacent Wang tiles match.
A slight difficulty resides in adequately capturing a two-dimensional grid structure---as in Figure~\ref{fig:grid-hardness-quantlogic}---with only a single relation. To do that, we introduce special formulas $\phi_E$ and $\phi_N$ to indicate the direction (east or north). In the formula computed by the reduction, we also need to bound the number of vertices corresponding to tile locations by $2^n \times 2^n$. Thus $\setnumbers$ needs to encode $2^n \times 2^n$. We need a bit-width of at least $2n$.

\begin{restatable}{theorem}{theoremlowerbound}
    \label{th:hardess-quantlogic}
    The satisfiability problem in \quantlogic is NEXPTIME-hard, and so is the ACR-GNN verification task \verificationtask{3}. The ACR-GNN verification tasks \verificationtask{1} and \verificationtask{2} are coNEXPTIME-hard. The results already hold when $\alpha =$ ReLU. 
\end{restatable}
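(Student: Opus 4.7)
The plan is to reduce from the Wang tiling problem on a torus of size $2^n \times 2^n$ with a prescribed initial bottom row, which is NEXPTIME-complete when $n$ is encoded in unary. Given such an instance, I construct in polynomial time a modal \quantlogic-formula $\phi$ that is satisfiable iff the instance admits a valid tiling. By \Cref{lem:trFE-corr}, this yields a polynomial-time reduction into pure \quantlogic and hence NEXPTIME-hardness of its satisfiability problem. The verification tasks then inherit the bound: \verificationtask{3} via any fixed ACR-GNN whose formula $\phi_\aGNN$ is a tautology (or by taking conjunction with such an $\aGNN$'s formula); \verificationtask{1} and \verificationtask{2} follow by complementation, yielding coNEXPTIME-hardness.

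For the encoding, I allocate features to each vertex as follows: $n$ bits for a binary column index, $n$ bits for a binary row index (hence the $2n$ bitwidth requirement on $\setnumbers$), a one-hot block indexing the tile type from $T$, and two marker features $\phi_E, \phi_N$ used to distinguish the role of a vertex as an east or north successor of its parent. Each grid position will be represented by a vertex with exactly one east-marked and one north-marked successor. A single global axiom carried by $\globalmodality$ enforces: (i) that exactly one tile-type flag holds; (ii) the existence of exactly one east and one north successor with the appropriate coordinates, via $\Diamond^{\geq 1}(\phi_E \land \chi_{\text{east}}) \land \Diamond^{\leq 1}(\phi_E)$ and its north analogue, where $\chi_{\text{east}}$ is a poly($n$)-size formula comparing the coordinate bits of parent and child to force column increment modulo $2^n$ while preserving the row --- bit-by-bit addition with carry being expressible in $O(n)$ symbols; (iii) the color-matching constraints, by forbidding for each bad pair of tile types $(t,t')$ the conjunction ``vertex has type $t$ and has an east (resp.~north) successor of type $t'$''. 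A root formula pins position $(0,0)$ to the designated initial tile, and the full initial bottom row is imposed by a single global axiom stating that every vertex with row $0$ carries the tile prescribed by the initial condition for its column, read directly off the binary column features.

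The delicate point is ensuring that two coordinate paths which should converge --- such as east-then-north versus north-then-east arriving at position $(i{+}1,j{+}1)$ --- actually land on the same vertex, since otherwise the tile types witnessed along the two paths could disagree. With no equality in modal \quantlogic, I will use global counting via $\agreggationfunctionglobalreadout$ to bound the number of ``tile'' vertices by $2^{2n}$, a constraint expressible only because the bitwidth is at least $2n$ and $\setnumbers$ thus contains $2^{2n}$. The successor axioms, iterated from the seed $(0,0)$, already guarantee that each of the $2^{2n}$ torus positions is realized by \emph{some} vertex. A pigeonhole argument then collapses distinct positions onto distinct vertices, so that east and north successor maps cohere globally and induce an honest torus tiling, on which the color axioms transfer verbatim.

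The main obstacle is this uniqueness step: enumerating the $2^{2n}$ positions one by one would blow the formula to exponential size, so uniqueness must emerge structurally from the interaction of a poly-size cardinality bound (delivered by global readout) and poly-size local successor axioms (delivered by graded modalities together with the succinct binary coordinate comparison $\chi_{\text{east}}, \chi_{\text{north}}$). Once this identification is in place, soundness (a tiling yields the obvious torus graph as model) and completeness (a model yields a tiling by reading the tile-type features at each position) follow the familiar template of Tobies' NEXPTIME-hardness proof for \ALCQ-$T_C$Box consistency, with global readout playing here the role played there by TBox-style global quantification combined with counting.
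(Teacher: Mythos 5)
Your proposal follows essentially the same route as the paper: a reduction from exponential-torus domino tiling in the style of Tobies, with binary coordinates in the features, graded local modalities for the successor structure, bit-by-bit coordinate-increment formulas, and global counting ($\Diamond_g^{\leq 2^n\times 2^n}$ together with the realizability of all positions from the seed) to force each torus position to be realized by exactly one vertex; the verification tasks are then handled, as in the paper, via trivial all-accepting/all-rejecting ACR-GNNs and dualization. One concrete point to repair: as literally written, your direction markers $\phi_E,\phi_N$ sit on the grid-neighbor successors themselves, but every interior torus vertex is simultaneously the east-successor of one vertex and the north-successor of another, so it would need both markers and your constraint $\Diamond^{\leq 1}\phi_E$ would then miscount; the paper resolves exactly this by inserting \emph{intermediate} marked vertices $u_E,u_N$ between consecutive grid positions (cf.\ Figure~\ref{fig:grid-hardness-quantlogic}), with $\forall east.\,\phi$ abbreviating $\Box(\phi_E\limp\Box\phi)$. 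Also note that the initial condition must be of polynomial length ($w_0\ldots w_{n-1}\in D^n$, constraining only positions $(0,0),\ldots,(n-1,0)$), not the full bottom row of length $2^n$.
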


%
%

\newcommand{\nodecoordinate}[2]{u_{#1, #2}}
\newcommand{\nodeElabel}{u_E}
\newcommand{\nodeNlabel}{u_N}

\begin{figure}
    \centering
\usetikzlibrary{graphs,positioning}

\begin{tikzpicture}[
xscale=2.3, yscale=0.9, font=\tiny,
every node/.style = {inner sep=0.5mm}
]

\tikzset{nodeNE/.style={inner sep = 0.5mm}}

\def\maxcoord{2.5}
\newcommand{\twopowernminusone}{2^n{-}1}

\node (00) at (0, 0) {$\nodecoordinate00$};
\node (01) at (0, 1) {$\nodecoordinate01$};
\node (10) at (1, 0) {$\nodecoordinate10$};
\node (11) at (1, 1) {$\nodecoordinate11$};
\node (nn) at (\maxcoord, \maxcoord) {$\nodecoordinate{\twopowernminusone}{\twopowernminusone}$};
\node (0n) at (0, \maxcoord) {$\nodecoordinate0 {\twopowernminusone}$};
\node (n0) at (\maxcoord, 0) {$\nodecoordinate{\twopowernminusone} 0$};

\node[nodeNE] (N1) at (0, 0.5) {$\nodeNlabel$};
\node[nodeNE] (E1) at (0.5, 0) {$\nodeElabel$};
\node[nodeNE] (N2) at (1, 0.5) {$\nodeNlabel$};
\node[nodeNE] (E2) at (0.5, 1) {$\nodeElabel$};

\node[nodeNE] (Nn) at (0, \maxcoord + 0.5) {$\nodeNlabel$};
\node[nodeNE] (Nn2) at (\maxcoord, \maxcoord + 0.5) {$\nodeNlabel$};
\node[nodeNE] (En) at (\maxcoord + 0.5, 0) {$\nodeElabel$};
\node[nodeNE] (En2) at (\maxcoord + 0.5, \maxcoord) {$\nodeElabel$};

\draw[->] (00) edge (N1);
\draw[->] (00) edge (E1);
\draw [->](01) edge (E2);
\draw[->] (10) edge (N2);

\draw[->] (N1) edge (01);
\draw[->] (E1) edge (10);
\draw[->] (N2) edge (11);
\draw[->] (E2) edge (11);

\draw[dotted] (01) edge (0n);
\draw[dotted] (10) edge (n0);
\draw[dotted] (0n) edge (nn);
\draw[dotted] (n0) edge (nn);

\draw[dotted] (11) edge (1, \maxcoord);
\draw[dotted] (11) edge (\maxcoord, 1);

\draw[->] (0n) edge (Nn);
\draw[->] (nn) edge (Nn2);
\draw[->] (nn) edge (En2);
\draw[->] (n0) edge (En);

\draw[->] (Nn) edge [bend right] (00);
\draw[->] (En) edge [bend left] (00);
\draw[->] (Nn2) edge [bend right = 40] (n0);
\draw[->] (En2) edge [bend left] (0n);

\end{tikzpicture}

    \caption{Encoding a torus of exponential size with (modal) \quantlogic formulas. Vertices $\nodecoordinate{x}{y}$ correspond to locations $(x, y)$ in the torus while $\nodeNlabel$ and $\nodeElabel$ denote intermediate vertices indicating the direction (resp., north and east).}
    \label{fig:grid-hardness-quantlogic}
\end{figure}

\begin{remark}
    It turns out that the verification task only needs the fragment of \quantlogic 
where $\agreggationfunction$ is applied directly on an expression $\activationfunction(\expression)$. 
Indeed, this is the case when we represent a GNN in \quantlogic or when we translate logical 
formulas in \quantlogic (\Cref{lem:trFE-corr}). Reasoning 
about \quantlogic when $\setnumbers = \setZ$ and 
the activation function is truncated ReLU is also NEXPTIME-complete (see \Cref{appendixsection:NEXPTIMEcompletewithintegers}).
\end{remark}

\section{Bounding the Number of Vertices}
\label{section:verificationtool}

The satisfiability problem of \quantlogic is 
NEXPTIME-complete, thus far from tractable. The high complexity arises because counterexamples can be arbitrary large graphs. However, one usually looks for small counterexamples. 
\newcommand{\boundedgraphs}[1]{\mathcal{G}^{\leq #1}}
Let $\boundedgraphs{N}$ be the set of pointed graphs with at most $N$ vertices.
We consider the \quantlogic and ACR-GNN \emph{satisfiability problem with a bound on the number of vertices} and ACR-GNNs verification tasks: given a number $N$ given in unary, 
\begin{enumerate*} 
\item given a \quantlogic-formula $\phi$, is it the case that $\sem{\phi}{} \cap \boundedgraphs{N} \not = \emptyset$,
\item given an ACR-GNN $\aGNN$, is it the case that $\sem{\aGNN}{} \cap \boundedgraphs{N} \not = \emptyset$.
\end{enumerate*}
In the same way, we introduce the following verification tasks. Given a GNN $\aGNN$, a \quantlogic formula $\phi$, and a number $N$ in unary:
(\verificationtaskbounded{1}, sufficiency) Do we have $\sem{\phi}{} \cap \boundedgraphs{N} \subseteq \sem{\aGNN}{} \cap \boundedgraphs{N} $?
(\verificationtaskbounded{2}, necessity) Do we have $\sem{\aGNN}{} \cap \boundedgraphs{N} \subseteq \sem{\phi}{}\cap \boundedgraphs{N} $?
(\verificationtaskbounded{3}, consistency) Do we have $\sem{\phi}{} \cap \sem{\aGNN}{} \cap \boundedgraphs{N} \not = \emptyset$? 

\begin{restatable}{theorem}{theoremsatboundednpcomplete}
    The satisfiability problems with bounded number of vertices are NP-complete, so is ACR-GNN verification task $\verificationtaskbounded 3$, while the verification tasks $\verificationtaskbounded 1$ and $\verificationtaskbounded 2$ are coNP-complete.
\end{restatable}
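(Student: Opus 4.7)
The plan is to show that once the vertex count is capped at $N$ (given in unary), the natural guess-and-check strategy becomes polynomial, yielding NP-membership; matching NP-hardness follows from a direct encoding of propositional satisfiability on a single-vertex graph. I begin with the upper bound. Evaluating a \quantlogic-formula $\phi$ on a labeled pointed graph $(G,u)$ with $|V(G)| \leq N$ takes time polynomial in $|\phi|$, $N$, and the unary bitwidth $n$: since $\phi$ is represented as a DAG with at most $|\phi|$ distinct subexpressions, a single bottom-up pass computes $\sem{\subexpression}{G,v}$ for every subexpression $\subexpression$ and every vertex $v$; each elementary operation on $\setnumbers$ (saturated addition, multiplication by a constant, comparison, evaluation of $\sem{\activationfunction}{}$ for standard activations) is polynomial in $n$, and the aggregations $\agreggationfunction$ and $\agreggationfunctionglobalreadout$ are sums over at most $N$ terms. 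The same bookkeeping shows that the forward pass of an ACR-GNN on $(G,u) \in \boundedgraphs{N}$ is polynomial. An NP procedure therefore nondeterministically guesses an edge set, a labeling of at most $N$ vertices, and a distinguished vertex, and then deterministically verifies $G,u \in \sem{\phi}{}$, $G,u \in \sem{\aGNN}{}$, or both, depending on the task; for \verificationtaskbounded{1} and \verificationtaskbounded{2}, coNP-membership follows by complementation, since a violation to either inclusion is itself a pointed graph of at most $N$ vertices that can be guessed and checked.

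For NP-hardness I would reduce from \textsc{3-Sat}. Given a 3CNF $\psi(p_1,\dots,p_k)$, identify each $p_i$ with an atom $x_i \geq 1$ and write $\psi$ as a formula of modal \quantlogic with no modal operators, together with polynomially many atomic constraints forcing each $x_i$ into $\{0,1\}$; \Cref{lem:trFE-corr} then converts it in polynomial time into an equivalent \quantlogic-formula $\phi_\psi$. Taking $N := 1$, $\phi_\psi$ is satisfiable in $\boundedgraphs{1}$ iff $\psi$ is satisfiable, giving NP-hardness of bounded \quantlogic-satisfiability. The same encoding can be wrapped into a constant-depth ACR-GNN whose combination functions simulate a feedforward Boolean circuit over the vertex features, which yields NP-hardness of bounded ACR-GNN satisfiability. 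The verification tasks then inherit their hardness by trivial choices of the unused component: \verificationtaskbounded{3} is NP-hard via a trivially-accepting $\aGNN$ (reducing to bounded \quantlogic-satisfiability), whereas \verificationtaskbounded{1} and \verificationtaskbounded{2} are coNP-hard via a reduction from \textsc{3-Unsat}, by taking respectively a trivially-rejecting $\aGNN$ or an identically false $\phi$.

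The only delicate point is ensuring that every primitive arithmetic operation on $\setnumbers$---including the evaluation of $\sem{\activationfunction}{}$---is genuinely polynomial in the unary bitwidth $n$; this is a mildly stronger assumption on $\activationfunction$ than the ``exponential-time in $n$'' bound tolerated for the NEXPTIME upper bound, but it holds for all standard activations such as truncated ReLU. Once this bookkeeping is in place, the NP and coNP memberships are immediate, and the hardness reductions described above are classical.
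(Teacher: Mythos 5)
Your proof is correct and follows essentially the same route as the paper: guess-and-check on a graph of at most $N$ vertices for the NP/coNP upper bounds, and reductions from propositional (un)satisfiability via the $\trFE$ translation (together with trivially accepting/rejecting GNNs or formulas) for the lower bounds. The only differences are cosmetic --- you swap which component is trivialized in the hardness reductions for \verificationtaskbounded{1} and \verificationtaskbounded{2}, and you usefully make explicit the assumption (left implicit in the paper) that $\sem{\activationfunction}{}$ must be polynomial-time evaluable in $n$ for the polynomial-time check to go through.
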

The satisfiability NP upper bound is obtained by guessing a graph with at most $N$ vertices and then check that~$\phi$ holds. NP-hardness already holds for $\agreggationfunction$-free formulas by reduction from SAT for propositional logic using the reduction of Lemma~\ref{lem:trFE-corr}. The complexity of the bounded verification tasks follow from simple reductions.

It is possible to extend the methodology proposed in \cite{DBLP:journals/corr/abs-2106-05997} to the verification of Graph Neural Networks. However, an efficient SMT encoding for GNN verification tasks would constitute a substantial contribution in itself. In this work, we therefore restrict ourselves to a proof of concept that establishes a baseline for future research (see Appendix~\ref{appendixsection:toolperformance} for performance details).
For the verification task, we employ ESBMC (Efficient SMT-based Context-Bounded Model Checker)~\cite{esbmc2024}. The verification workflow (Figure~\ref{fig:model_to_esbmc} in Appendix~\ref{sec:Experiments}) translates a trained PyTorch model into C code that can be analyzed by ESBMC, embedding the corresponding preconditions and postconditions.

After evaluating the verification workflow, we identified two principal limitations affecting the formal verification phase. First, the process is constrained by SMT encodability: only activation functions that admit efficient logical representations can be supported, thereby restricting the architectural design space of GNNs (see Table~\ref{tab:activation-smt} in Appendix~\ref{sec:Experiments}). Second, scalability is governed by the size of the generated SMT formula, which increases with both the number of graph vertices and the model dimensionality. Our prototype evaluation confirms that even small increases in these parameters lead to substantial verification overhead due to the expansion of symbolic variables and constraints.

In particular, results on the \Erdos dataset (Table~\ref{tab:esbmc_time_relu_h2_l1} in Appendix~\ref{sec:Experiments}) empirically illustrate the state-space explosion inherent in SMT-based verification of GNNs. Increasing the graph size from one to two vertices causes a dramatic rise in verification time, highlighting the poor scalability of naive full symbolic encodings for graph-structured neural models. This behavior arises because each additional vertex introduces new symbolic variables for node features, aggregation outputs, linear transformations, and activation constraints; furthermore, ReLU activations induce case distinctions that significantly increase the number of feasible solver paths, thereby aggravating solver complexity.

\section{Conclusion and Future Work}

The main result is the (co)NEXPTIME-completeness of verification tasks for 
GNNs with global readout.
It helps to understand the inherent complexity, and demonstrates that the verification of ACR-GNNs is highly intractable. 
To further guide future research, we provide extensive experiments demonstrating that quantised GNNs are essential.
However, naive approaches to verifying GNNs, even over a set of graphs with a bounded number of vertices, quickly reach their 
limits, as expected, given the inherently high complexity.
This prompts significant efforts of the research community towards ensuring the safety of quantised GNN-based systems.



There are many directions to go. First, characterizing the modal flavor of \quantlogic---a powerful graph property specification language---for other activation functions than ReLU. 
New extensions of \quantlogic could also be proposed to tackle other classes of GNNs.
%
%
%
%
%
%
Verification of neural networks is challenging and is currently tackled by the verification community \cite{DBLP:journals/corr/abs-2501-05867}. So it will be for GNNs as well. Our verification tool with a bound on the number of vertices is still preliminary and a mere baseline for future research. One obvious path would be to improve it, to compare different approaches (bounded model checking vs.\ linear programming as in \cite{DBLP:conf/aaai/000200DWZB24}) and to apply it to real GNN verification scenarios.
Designing a practical verification procedure in the general case (without any bound on the number of vertices) and overcoming the high computational complexity is an exciting challenge for future research towards the verification of GNNs.


\paragraph{Limitations.} \label{par:limitations}
%
Section \ref{section:upperbound} and \ref{section:lowerbound} reflect theoretical results. Some practical implementations of GNNs may not fully align with them. 
In particular, the order in the (non-associative) summation over values in $\setnumbers$ is fixed in formulas (\ref{eq:qfbapaaggregation}) and (\ref{eq:qfbapaaggregationglob}). It means that we suppose that the aggregation $\agreggationfunction(\expression)$ is computed in that order too (we sort the successors of a vertex according to the values of $\expression$ and then perform the summation).





\paragraph{Reproducibility.}
The full proofs are provided in \citetheappendix. 

To ensure reproducibility, we provide a replication package containing the implementation, trained models, and verification scripts. The package is publicly available via a \href{https://github.com/francoisschwarzentruber/kr2026-Verifying-Quantized-GNNs-With-Readout-Is-Decidable-But-Highly-Intractable}{GitHub repository}~\cite{chernobrovkin2026replication}.

The implementation of the verification prototype is included in the supplementary material and can be found in the folder `\text{src\_verificationtool}'.

The replication package for the experimental evaluation of ACR-GNN quantization is provided in the folder \text{code\_notebooks\_csv}. The \text{Code} subfolder contains the Python implementation, accompanied by a \text{README.md} file that documents the project structure and provides detailed instructions for reproducing the experiments. The \text{Notebooks} subfolder includes the analysis scripts, the corresponding \text{.csv} files, and additional documentation describing their usage.

\section*{AI Declaration}
During the preparation of this work, some authors used Grammarly for grammar and spelling correction. No content was generated by artificial intelligence tools, and all scientific contributions, analyses, and conclusions are the sole responsibility of the authors.

\bibliography{biblio}

@inproceedings{NunnSST24,
  author       = {Pierre Nunn and
                  Marco S{\"{a}}lzer and
                  Fran{\c{c}}ois Schwarzentruber and
                  Nicolas Troquard},
  title        = {A Logic for Reasoning about Aggregate-Combine Graph Neural Networks},
  booktitle    = {Proceedings of the Thirty-Third International Joint Conference on Artificial Intelligence, {IJCAI} 2024},
  pages        = {3532--3540},
  publisher    = {International Joint Conferences on Artificial Intelligence Organization},
  month     = {August},
  year         = {2024},
  url          = {https://www.ijcai.org/proceedings/2024/391}
}

@inproceedings{Saelzer-etal-IJCAI25,
    author       = {Marco S{\"{a}}lzer and
                  Fran{\c{c}}ois Schwarzentruber and
                  Nicolas Troquard},
    title        = {Verifying Quantized Graph Neural Networks is PSPACE-complete},
    booktitle = "Proceedings of the Thirty-Fourth International Joint Conference on Artificial Intelligence, {IJCAI} 2025",
    pages = "4660--4668",
    publisher    ={International Joint Conferences on Artificial Intelligence Organization},
    month     = {August},
    year = {2025},
    url = "https://www.ijcai.org/proceedings/2025/519",
}

@inproceedings{GallianiKT23,
  author       = {Pietro Galliani and
                  Oliver Kutz and
                  Nicolas Troquard},
  editor       = {Pierre Marquis and
                  Tran Cao Son and
                  Gabriele Kern{-}Isberner},
  title        = {Succinctness and Complexity of {ALC} with Counting Perceptrons},
  booktitle    = {Proceedings of the 20th International Conference on Principles of
                  Knowledge Representation and Reasoning, {KR} 2023, Rhodes, Greece,
                  September 2-8, 2023},
  pages        = {291--300},
  year         = {2023},
  url          = {https://doi.org/10.24963/kr.2023/29},
  doi          = {10.24963/KR.2023/29},
  timestamp    = {Mon, 03 Mar 2025 21:17:05 +0100},
  biburl       = {https://dblp.org/rec/conf/kr/GallianiKT23.bib},
  bibsource    = {dblp computer science bibliography, https://dblp.org}
}

@book{DBLP:books/cu/BlackburnRV01,
  author       = {Patrick Blackburn and
                  Maarten de Rijke and
                  Yde Venema},
  title        = {Modal Logic},
  series       = {Cambridge Tracts in Theoretical Computer Science},
  volume       = {53},
  publisher    = {Cambridge University Press},
  year         = {2001},
  url          = {https://doi.org/10.1017/CBO9781107050884},
  doi          = {10.1017/CBO9781107050884},
  isbn         = {978-1-10705088-4},
  timestamp    = {Mon, 16 Sep 2019 14:43:07 +0200},
  biburl       = {https://dblp.org/rec/books/cu/BlackburnRV01.bib},
  bibsource    = {dblp computer science bibliography, https://dblp.org}
}

@article{DBLP:journals/japll/DemriL10,
  author       = {St{\'{e}}phane Demri and
                  Denis Lugiez},
  title        = {Complexity of modal logics with Presburger constraints},
  journal      = {J. Appl. Log.},
  volume       = {8},
  number       = {3},
  pages        = {233--252},
  year         = {2010},
  url          = {https://doi.org/10.1016/j.jal.2010.03.001},
  doi          = {10.1016/J.JAL.2010.03.001},
  timestamp    = {Tue, 16 Feb 2021 08:55:57 +0100},
  biburl       = {https://dblp.org/rec/journals/japll/DemriL10.bib},
  bibsource    = {dblp computer science bibliography, https://dblp.org}
}

@Inbook{Tseitin1983,
author="Tseitin, G. S.",
editor="Siekmann, J{\"o}rg H.
and Wrightson, Graham",
title="On the Complexity of Derivation in Propositional Calculus",
bookTitle="Automation of Reasoning: 2: Classical Papers on Computational Logic 1967--1970",
year="1983",
publisher="Springer Berlin Heidelberg",
address="Berlin, Heidelberg",
pages="466--483",
abstract="The question of the minimum complexity of derivation of a given formula in classical propositional calculus is considered in this article and it is proved that estimates of complexity may vary considerably among the various forms of propositional calculus. The forms of propositional calculus used in the present article are somewhat unusual, {\textdagger} but the results obtained for them can, in principle, be extended to the usual forms of propositional calculus.",
isbn="978-3-642-81955-1",
doi="10.1007/978-3-642-81955-1_28",
url="https://doi.org/10.1007/978-3-642-81955-1_28"
}

@inproceedings{DBLP:conf/iclr/BarceloKM0RS20,
  author       = {Pablo Barcel{\'{o}} and
                  Egor V. Kostylev and
                  Mika{\"{e}}l Monet and
                  Jorge P{\'{e}}rez and
                  Juan L. Reutter and
                  Juan Pablo Silva},
  title        = {The Logical Expressiveness of Graph Neural Networks},
  booktitle    = {8th International Conference on Learning Representations, {ICLR} 2020,
                  Addis Ababa, Ethiopia, April 26-30, 2020},
  publisher    = {OpenReview.net},
  year         = {2020},
  url          = {https://openreview.net/forum?id=r1lZ7AEKvB},
  timestamp    = {Thu, 07 May 2020 17:11:48 +0200},
  biburl       = {https://dblp.org/rec/conf/iclr/BarceloKM0RS20.bib},
  bibsource    = {dblp computer science bibliography, https://dblp.org}
}

@article{DBLP:journals/corr/abs-2508-06091,
  author       = {Stan P. Hauke and
                  Przemyslaw Andrzej Walega},
  title        = {Aggregate-Combine-Readout GNNs Are More Expressive Than Logic {C2}},
  journal      = {CoRR},
  volume       = {abs/2508.06091},
  year         = {2025},
  url          = {https://doi.org/10.48550/arXiv.2508.06091},
  doi          = {10.48550/ARXIV.2508.06091},
  eprinttype    = {arXiv},
  eprint       = {2508.06091},
  timestamp    = {Sat, 13 Sep 2025 14:46:18 +0200},
  biburl       = {https://dblp.org/rec/journals/corr/abs-2508-06091.bib},
  bibsource    = {dblp computer science bibliography, https://dblp.org}
}

@inproceedings{DBLP:conf/kr/CucalaG24,
  author       = {David J. Tena Cucala and
                  Bernardo Cuenca Grau},
  editor       = {Pierre Marquis and
                  Magdalena Ortiz and
                  Maurice Pagnucco},
  title        = {Bridging Max Graph Neural Networks and {D}atalog with Negation},
  booktitle    = {Proceedings of the 21st International Conference on Principles of
                  Knowledge Representation and Reasoning, {KR} 2024, Hanoi, Vietnam.
                  November 2-8, 2024},
  year         = {2024},
  url          = {https://doi.org/10.24963/kr.2024/89},
  doi          = {10.24963/KR.2024/89},
  timestamp    = {Wed, 18 Dec 2024 13:52:18 +0100},
  biburl       = {https://dblp.org/rec/conf/kr/CucalaG24.bib},
  bibsource    = {dblp computer science bibliography, https://dblp.org}
}

@inproceedings{DBLP:conf/ecai/BaaderBR20,
  author       = {Franz Baader and
                  Bartosz Bednarczyk and
                  Sebastian Rudolph},
  editor       = {Giuseppe De Giacomo and
                  Alejandro Catal{\'{a}} and
                  Bistra Dilkina and
                  Michela Milano and
                  Sen{\'{e}}n Barro and
                  Alberto Bugar{\'{\i}}n and
                  J{\'{e}}r{\^{o}}me Lang},
  title        = {Satisfiability and Query Answering in Description Logics with Global
                  and Local Cardinality Constraints},
  booktitle    = {{ECAI} 2020 - 24th European Conference on Artificial Intelligence,
                  29 August-8 September 2020, Santiago de Compostela, Spain, August
                  29 - September 8, 2020 - Including 10th Conference on Prestigious
                  Applications of Artificial Intelligence {(PAIS} 2020)},
  series       = {Frontiers in Artificial Intelligence and Applications},
  volume       = {325},
  pages        = {616--623},
  publisher    = {{IOS} Press},
  year         = {2020},
  url          = {https://doi.org/10.3233/FAIA200146},
  doi          = {10.3233/FAIA200146},
  timestamp    = {Mon, 03 Jan 2022 22:33:30 +0100},
  biburl       = {https://dblp.org/rec/conf/ecai/BaaderBR20.bib},
  bibsource    = {dblp computer science bibliography, https://dblp.org}
}

@inproceedings{DBLP:conf/gcai/BaaderE17,
  author       = {Franz Baader and
                  Andreas Ecke},
  editor       = {Christoph Benzm{\"{u}}ller and
                  Christine L. Lisetti and
                  Martin Theobald},
  title        = {Extending the Description Logic {ALC} with More Expressive Cardinality
                  Constraints on Concepts},
  booktitle    = {{GCAI} 2017, 3rd Global Conference on Artificial Intelligence, Miami,
                  FL, USA, 18-22 October 2017},
  series       = {EPiC Series in Computing},
  volume       = {50},
  pages        = {6--19},
  publisher    = {EasyChair},
  year         = {2017},
  url          = {https://doi.org/10.29007/f3hh},
  doi          = {10.29007/F3HH},
  timestamp    = {Sun, 15 Aug 2021 00:16:53 +0200},
  biburl       = {https://dblp.org/rec/conf/gcai/BaaderE17.bib},
  bibsource    = {dblp computer science bibliography, https://dblp.org}
}

@inproceedings{DBLP:conf/fsttcs/BednarczykOPT21,
  author       = {Bartosz Bednarczyk and
                  Maja Orlowska and
                  Anna Pacanowska and
                  Tony Tan},
  editor       = {Mikolaj Bojanczyk and
                  Chandra Chekuri},
  title        = {On Classical Decidable Logics Extended with Percentage Quantifiers
                  and Arithmetics},
  booktitle    = {41st {IARCS} Annual Conference on Foundations of Software Technology
                  and Theoretical Computer Science, {FSTTCS} 2021, December 15-17, 2021,
                  Virtual Conference},
  series       = {LIPIcs},
  volume       = {213},
  pages        = {36:1--36:15},
  publisher    = {Schloss Dagstuhl - Leibniz-Zentrum f{\"{u}}r Informatik},
  year         = {2021},
  url          = {https://doi.org/10.4230/LIPIcs.FSTTCS.2021.36},
  doi          = {10.4230/LIPICS.FSTTCS.2021.36},
  timestamp    = {Wed, 21 Aug 2024 22:46:00 +0200},
  biburl       = {https://dblp.org/rec/conf/fsttcs/BednarczykOPT21.bib},
  bibsource    = {dblp computer science bibliography, https://dblp.org}
}

@article{DBLP:journals/jair/Tobies00,
  author       = {Stephan Tobies},
  title        = {The Complexity of Reasoning with Cardinality Restrictions and Nominals
                  in Expressive Description Logics},
  journal      = {J. Artif. Intell. Res.},
  volume       = {12},
  pages        = {199--217},
  year         = {2000},
  url          = {https://doi.org/10.1613/jair.705},
  doi          = {10.1613/JAIR.705},
  timestamp    = {Sun, 19 Jan 2025 14:44:26 +0100},
  biburl       = {https://dblp.org/rec/journals/jair/Tobies00.bib},
  bibsource    = {dblp computer science bibliography, https://dblp.org}
}

@inproceedings{DBLP:conf/icalp/BenediktLMT24,
  author       = {Michael Benedikt and
                  Chia{-}Hsuan Lu and
                  Boris Motik and
                  Tony Tan},
  editor       = {Karl Bringmann and
                  Martin Grohe and
                  Gabriele Puppis and
                  Ola Svensson},
  title        = {Decidability of Graph Neural Networks via Logical Characterizations},
  booktitle    = {51st International Colloquium on Automata, Languages, and Programming,
                  {ICALP} 2024, July 8-12, 2024, Tallinn, Estonia},
  series       = {LIPIcs},
  volume       = {297},
  pages        = {127:1--127:20},
  publisher    = {Schloss Dagstuhl - Leibniz-Zentrum f{\"{u}}r Informatik},
  year         = {2024},
  url          = {https://doi.org/10.4230/LIPIcs.ICALP.2024.127},
  doi          = {10.4230/LIPICS.ICALP.2024.127},
  timestamp    = {Wed, 21 Aug 2024 22:46:00 +0200},
  biburl       = {https://dblp.org/rec/conf/icalp/BenediktLMT24.bib},
  bibsource    = {dblp computer science bibliography, https://dblp.org}
}

@article{benedikt2025decidabilitygraphneuralnetworks_V4,
      title={Decidability of Graph Neural Networks via Logical Characterizations}, 
      author={Michael Benedikt and Chia-Hsuan Lu and Tony Tan},
      year={2025},
      journal      = {CoRR}, 
      volume = "abs/2404.18151v4",
      eprint={2404.18151v4},
      archivePrefix={arXiv},
      primaryClass={cs.LO},
      url={https://arxiv.org/abs/2404.18151v4}, 
}

@InProceedings{kuncak-rinard-QFBAPA,
author="Kuncak, Viktor
and Rinard, Martin",
editor="Pfenning, Frank",
title="Towards Efficient Satisfiability Checking for Boolean Algebra with Presburger Arithmetic",
booktitle="Automated Deduction -- CADE-21",
year="2007",
publisher="Springer Berlin Heidelberg",
address="Berlin, Heidelberg",
pages="215--230",
abstract="Boolean Algebra with Presburger Arithmetic (BAPA) is a decidable logic that combines 1) Boolean algebra of sets of uninterpreted elements (BA) and 2) Presburger arithmetic (PA). BAPA can express relationships between integer variables and cardinalities of unbounded sets. In combination with other decision procedures and theorem provers, BAPA is useful for automatically verifying quantitative properties of data structures. This paper examines QFBAPA, the quantifier-free fragment of BAPA. The computational complexity of QFBAPA satisfiability was previously unknown; previous QFBAPA algorithms have non-deterministic exponential time complexity due to an explosion in the number of introduced integer variables.",
isbn="978-3-540-73595-3"
}

@book{baader2017introduction,
  title={Introduction to Description Logic},
  author={Baader, Franz and Horrocks, Ian and Lutz, Carsten and Sattler, Uli},
  year={2017},
  publisher={Cambridge University Press}
}

@inproceedings{DBLP:conf/icml/GilmerSRVD17,
  author       = {Justin Gilmer and
                  Samuel S. Schoenholz and
                  Patrick F. Riley and
                  Oriol Vinyals and
                  George E. Dahl},
  editor       = {Doina Precup and
                  Yee Whye Teh},
  title        = {Neural Message Passing for Quantum Chemistry},
  booktitle    = {Proceedings of the 34th International Conference on Machine Learning,
                  {ICML} 2017, Sydney, NSW, Australia, 6-11 August 2017},
  series       = {Proceedings of Machine Learning Research},
  volume       = {70},
  pages        = {1263--1272},
  publisher    = {{PMLR}},
  year         = {2017},
}

@incollection{gholami2022survey,
  title={A survey of quantization methods for efficient neural network inference},
  author={Gholami, Amir and Kim, Sehoon and Dong, Zhen and Yao, Zhewei and Mahoney, Michael W and Keutzer, Kurt},
  booktitle={Low-power computer vision},
  pages={291--326},
  year={2022},
  publisher={Chapman and Hall/CRC}
}

@inproceedings{HenzingerLZ21,
  author       = {Thomas A. Henzinger and
                  Mathias Lechner and
                  Dorde Zikelic},
  title        = {Scalable Verification of Quantized Neural Networks},
  booktitle    = {Thirty-Fifth {AAAI} Conference on Artificial Intelligence, {AAAI}
                  2021, Thirty-Third Conference on Innovative Applications of Artificial
                  Intelligence, {IAAI} 2021, The Eleventh Symposium on Educational Advances
                  in Artificial Intelligence, {EAAI} 2021, Virtual Event, February 2-9,
                  2021},
  pages        = {3787--3795},
  publisher    = {{AAAI} Press},
  year         = {2021},
  url          = {https://doi.org/10.1609/aaai.v35i5.16496},
  doi          = {10.1609/AAAI.V35I5.16496},
  timestamp    = {Mon, 04 Sep 2023 16:50:24 +0200},
  biburl       = {https://dblp.org/rec/conf/aaai/HenzingerLZ21.bib},
  bibsource    = {dblp computer science bibliography, https://dblp.org}
}

@inproceedings{SalzerL21,
  author       = {Marco S{\"{a}}lzer and
                  Martin Lange},
  editor       = {Paul C. Bell and
                  Patrick Totzke and
                  Igor Potapov},
  title        = {Reachability is {NP}-Complete Even for the Simplest Neural Networks},
  booktitle    = {Reachability Problems - 15th International Conference, {RP} 2021,
                  Liverpool, UK, October 25-27, 2021, Proceedings},
  series       = {Lecture Notes in Computer Science},
  volume       = {13035},
  pages        = {149--164},
  publisher    = {Springer},
  year         = {2021},
  doi          = {10.1007/978-3-030-89716-1\_10},
}

@inproceedings{DBLP:conf/aaai/000200DWZB24,
  author       = {Pei Huang and
                  Haoze Wu and
                  Yuting Yang and
                  Ieva Daukantas and
                  Min Wu and
                  Yedi Zhang and
                  Clark W. Barrett},
  editor       = {Michael J. Wooldridge and
                  Jennifer G. Dy and
                  Sriraam Natarajan},
  title        = {Towards Efficient Verification of Quantized Neural Networks},
  booktitle    = {Thirty-Eighth {AAAI} Conference on Artificial Intelligence, {AAAI}
                  2024, Thirty-Sixth Conference on Innovative Applications of Artificial
                  Intelligence, {IAAI} 2024, Fourteenth Symposium on Educational Advances
                  in Artificial Intelligence, {EAAI} 2014, February 20-27, 2024, Vancouver,
                  Canada},
  pages        = {21152--21160},
  publisher    = {{AAAI} Press},
  year         = {2024},
  url          = {https://doi.org/10.1609/aaai.v38i19.30108},
  doi          = {10.1609/AAAI.V38I19.30108},
  timestamp    = {Tue, 02 Apr 2024 16:32:09 +0200},
  biburl       = {https://dblp.org/rec/conf/aaai/000200DWZB24.bib},
  bibsource    = {dblp computer science bibliography, https://dblp.org}
}

@inproceedings{10.1145/3551349.3556916,
author = {Zhang, Yedi and Zhao, Zhe and Chen, Guangke and Song, Fu and Zhang, Min and Chen, Taolue and Sun, Jun},
title = {QVIP: An {ILP}-based Formal Verification Approach for Quantized Neural Networks},
year = {2023},
isbn = {9781450394758},
publisher = {Association for Computing Machinery},
address = {New York, NY, USA},
url = {https://doi.org/10.1145/3551349.3556916},
doi = {10.1145/3551349.3556916},
abstract = {Deep learning has become a promising programming paradigm in software development, owing to its surprising performance in solving many challenging tasks. Deep neural networks (DNNs) are increasingly being deployed in practice, but are limited on resource-constrained devices owing to their demand for computational power. Quantization has emerged as a promising technique to reduce the size of DNNs with comparable accuracy as their floating-point numbered counterparts. The resulting quantized neural networks (QNNs) can be implemented energy-efficiently. Similar to their floating-point numbered counterparts, quality assurance techniques for QNNs, such as testing and formal verification, are essential but are currently less explored. In this work, we propose a novel and efficient formal verification approach for QNNs. In particular, we are the first to propose an encoding that reduces the verification problem of QNNs into the solving of integer linear constraints, which can be solved using off-the-shelf solvers. Our encoding is both sound and complete. We demonstrate the application of our approach on local robustness verification and maximum robustness radius computation. We implement our approach in a prototype tool QVIP and conduct a thorough evaluation. Experimental results on QNNs with different quantization bits confirm the effectiveness and efficiency of our approach, e.g., two orders of magnitude faster and able to solve more verification tasks in the same time limit than the state-of-the-art methods.},
booktitle = {Proceedings of the 37th IEEE/ACM International Conference on Automated Software Engineering},
articleno = {82},
numpages = {13},
keywords = {Quantized neural network, formal verification, integer linear programming, robustness},
location = {Rochester, MI, USA},
series = {ASE '22}
}

@inproceedings{ansel2024pytorch2,
  title = {PyTorch 2: Faster Machine Learning Through Dynamic Python Bytecode Transformation and Graph Compilation},
  author = {
    Jason Ansel and Edward Yang and Horace He and Natalia Gimelshein and Animesh Jain and Michael Voznesensky and Bin Bao and Peter Bell and David Berard and Evgeni Burovski and Geeta Chauhan and Anjali Chourdia and Will Constable and Alban Desmaison and Zachary DeVito and Elias Ellison and Will Feng and Jiong Gong and Michael Gschwind and Brian Hirsh and Sherlock Huang and Kshiteej Kalambarkar and Laurent Kirsch and Michael Lazos and Mario Lezcano and Yanbo Liang and Jason Liang and Yinghai Lu and CK Luk and Bert Maher and Yunjie Pan and Christian Puhrsch and Matthias Reso and Mark Saroufim and Marcos Yukio Siraichi and Helen Suk and Michael Suo and Phil Tillet and Eikan Wang and Xiaodong Wang and William Wen and Shunting Zhang and Xu Zhao and Keren Zhou and Richard Zou and Ajit Mathews and Gregory Chanan and Peng Wu and Soumith Chintala
  },
  booktitle = {Proceedings of the 29th ACM International Conference on Architectural Support for Programming Languages and Operating Systems, Volume 2 (ASPLOS '24)},
  year = {2024},
  month = apr,
  publisher = {ACM},
  doi = {10.1145/3620665.3640366},
  url = {https://pytorch.org/assets/pytorch2-2.pdf}
}

@misc{BarceloGit2021,
  author = {Pablo Barcel{\'{o}} and
                  Egor V. Kostylev and
                  Mika{\"{e}}l Monet and
                  Jorge P{\'{e}}rez and
                  Juan L. Reutter and
                  Juan Pablo Silva},
  title = {GNN-logic},
  year = {2021},
  publisher = {GitHub},
  journal = {GitHub repository},
  howpublished = {\url{https://github.com/juanpablos/GNN-logic.git}}
}

@article{Nagel2021AWP,
  title={A White Paper on Neural Network Quantization},
  author={Markus Nagel and Marios Fournarakis and Rana Ali Amjad and Yelysei Bondarenko and Mart van Baalen and Tijmen Blankevoort},
  journal={ArXiv},
  year={2021},
  volume={abs/2106.08295},
  url={https://api.semanticscholar.org/CorpusID:235435934}
}

@article{DBLP:journals/corr/abs-2004-09602?nvidia,
  author       = {Hao Wu and
                  Patrick Judd and
                  Xiaojie Zhang and
                  Mikhail Isaev and
                  Paulius Micikevicius},
  title        = {Integer Quantization for Deep Learning Inference: Principles and Empirical
                  Evaluation},
  journal      = {CoRR},
  volume       = {abs/2004.09602},
  year         = {2020},
  url          = {https://arxiv.org/abs/2004.09602},
  eprinttype    = {arXiv},
  eprint       = {2004.09602},
  timestamp    = {Tue, 28 Apr 2020 16:10:02 +0200},
  biburl       = {https://dblp.org/rec/journals/corr/abs-2004-09602.bib},
  bibsource    = {dblp computer science bibliography, https://dblp.org}
}

@article{zhou2020graph,
  title={Graph neural networks: A review of methods and applications},
  author={Zhou, Jie and Cui, Ganqu and Hu, Shengding and Zhang, Zhengyan and Yang, Cheng and Liu, Zhiyuan and Wang, Lifeng and Li, Changcheng and Sun, Maosong},
  journal={AI open},
  volume={1},
  pages={57--81},
  year={2020},
  publisher={Elsevier}
}

@inproceedings{SalzerL23,
  author={Marco Sälzer and Martin Lange},
  title={Fundamental Limits in Formal Verification of Message-Passing Neural Networks},
  year={2023},
  url={https://openreview.net/forum?id=WlbG820mRH-},
  booktitle={ICLR}
}

@InProceedings{esbmc2024,
    author    = {Rafael Menezes and
                 Mohannad Aldughaim and
                 Bruno Farias and
                 Xianzhiyu Li and
                 Edoardo Manino and
                 Fedor Shmarov and 
	         Kunjian Song and
	         Franz Brauße and 
	         Mikhail R. Gadelha and
	         Norbert Tihanyi and
	         Konstantin Korovin and
	         Lucas C. Cordeiro},
    title     = {{ESBMC} 7.4: {H}arnessing the {P}ower of {I}ntervals},
    booktitle = {$30^{th}$ International Conference on Tools and Algorithms for the Construction and Analysis of Systems (TACAS'24)},
    series       = {Lecture Notes in Computer Science},
    volume       = {14572},
    pages     = {376–380},
    doi       = {https://doi.org/10.1007/978-3-031-57256-2_24},
    year      = {2024},
    publisher = {Springer}
}

@article{DBLP:journals/corr/abs-2501-05867,
  author       = {Lucas C. Cordeiro and
                  Matthew L. Daggitt and
                  Julien Girard{-}Satabin and
                  Omri Isac and
                  Taylor T. Johnson and
                  Guy Katz and
                  Ekaterina Komendantskaya and
                  Augustin Lemesle and
                  Edoardo Manino and
                  Artjoms Sinkarovs and
                  Haoze Wu},
  title        = {Neural Network Verification is a Programming Language Challenge},
  journal      = {CoRR},
  volume       = {abs/2501.05867},
  year         = {2025},
  url          = {https://doi.org/10.48550/arXiv.2501.05867},
  doi          = {10.48550/ARXIV.2501.05867},
  eprinttype    = {arXiv},
  eprint       = {2501.05867},
  timestamp    = {Mon, 03 Mar 2025 21:36:02 +0100},
  biburl       = {https://dblp.org/rec/journals/corr/abs-2501-05867.bib},
  bibsource    = {dblp computer science bibliography, https://dblp.org}
}

@article{DBLP:journals/corr/abs-2106-05997,
  author       = {Luiz H. Sena and
                  Xidan Song and
                  Erickson H. da S. Alves and
                  Iury Bessa and
                  Edoardo Manino and
                  Lucas C. Cordeiro},
  title        = "{Verifying Quantized Neural Networks using SMT-Based Model Checking}",
  journal      = {CoRR},
  volume       = {abs/2106.05997},
  year         = {2021},
  url          = {https://arxiv.org/abs/2106.05997},
  eprinttype    = {arXiv},
  eprint       = {2106.05997},
  timestamp    = {Tue, 15 Jun 2021 16:35:15 +0200},
  biburl       = {https://dblp.org/rec/journals/corr/abs-2106-05997.bib},
  bibsource    = {dblp computer science bibliography, https://dblp.org}
}

@article{Zi22KG,
  author={Ye, Zi and Kumar, Yogan Jaya and Sing, Goh Ong and Song, Fengyan and Wang, Junsong},
  journal={IEEE Access}, 
  title={A Comprehensive Survey of Graph Neural Networks for Knowledge Graphs}, 
  year={2022},
  volume={10},
  number={},
  pages={75729-75741},
  doi={10.1109/ACCESS.2022.3191784}
}

@article{Reiser22GGNmaterialchemistry,
author = {
Reiser, Patrick and 
Neubert, Marlen and 
Eberhard, André and 
Torresi, Luca and 
Zhou, Chen and 
Shao, Chen and 
Metni, Houssam and 
van Hoesel, Clint and 
Schopmans, Henrik and 
Sommer, Timo and 
Friederich, Pascal
},
year = 2022,
title = {Graph neural networks for materials science and chemistry},
journal = {Communications Materials},
number = 93,
volume = 3
}

@article{XIONG20211382,
title = {Graph neural networks for automated de novo drug design},
journal = {Drug Discovery Today},
volume = {26},
number = {6},
pages = {1382-1393},
year = {2021},
issn = {1359-6446},
doi = {https://doi.org/10.1016/j.drudis.2021.02.011},
url = {https://www.sciencedirect.com/science/article/pii/S1359644621000787},
author = {Jiacheng Xiong and Zhaoping Xiong and Kaixian Chen and Hualiang Jiang and Mingyue Zheng},
abstract = {The goal of de novo drug design is to create novel chemical entities with desired biological activities and pharmacokinetics (PK) properties. Over recent years, with the development of artificial intelligence (AI) technologies, data-driven methods have rapidly gained in popularity in this field. Among them, graph neural networks (GNNs), a type of neural network directly operating on the graph structure data, have received extensive attention. In this review, we introduce the applications of GNNs in de novo drug design from three aspects: molecule scoring, molecule generation and optimization, and synthesis planning. Furthermore, we also discuss the current challenges and future directions of GNNs in de novo drug design.}
}

@article{DBLP:journals/kbs/SalamatLJ21,
  author    = {Amirreza Salamat and
               Xiao Luo and
               Ali Jafari},
  title     = {HeteroGraphRec: {A} heterogeneous graph-based neural networks for
               social recommendations},
  journal   = {Knowl. Based Syst.},
  volume    = {217},
  pages     = {106817},
  year      = {2021},
  url       = {https://doi.org/10.1016/j.knosys.2021.106817},
  doi       = {10.1016/j.knosys.2021.106817},
  timestamp = {Thu, 25 Mar 2021 18:13:50 +0100},
  biburl    = {https://dblp.org/rec/journals/kbs/SalamatLJ21.bib},
  bibsource = {dblp computer science bibliography, https://dblp.org}
}

@article{DBLP:journals/corr/abs-2209-05433,
  author       = {Paulius Micikevicius and
                  Dusan Stosic and
                  Neil Burgess and
                  Marius Cornea and
                  Pradeep Dubey and
                  Richard Grisenthwaite and
                  Sangwon Ha and
                  Alexander Heinecke and
                  Patrick Judd and
                  John Kamalu and
                  Naveen Mellempudi and
                  Stuart F. Oberman and
                  Mohammad Shoeybi and
                  Michael Y. Siu and
                  Hao Wu},
  title        = {{FP8} Formats for Deep Learning},
  journal      = {CoRR},
  volume       = {abs/2209.05433},
  year         = {2022},
  url          = {https://doi.org/10.48550/arXiv.2209.05433},
  doi          = {10.48550/ARXIV.2209.05433},
  eprinttype    = {arXiv},
  eprint       = {2209.05433},
  timestamp    = {Tue, 27 Sep 2022 16:29:43 +0200},
  biburl       = {https://dblp.org/rec/journals/corr/abs-2209-05433.bib},
  bibsource    = {dblp computer science bibliography, https://dblp.org}
}

@INPROCEEDINGS{jacob2018quantization,
  author={Jacob, Benoit and Kligys, Skirmantas and Chen, Bo and Zhu, Menglong and Tang, Matthew and Howard, Andrew and Adam, Hartwig and Kalenichenko, Dmitry},
  booktitle={2018 IEEE/CVF Conference on Computer Vision and Pattern Recognition}, 
  title={Quantization and Training of Neural Networks for Efficient Integer-Arithmetic-Only Inference}, 
  year={2018},
  volume={},
  number={},
  pages={2704-2713},
  keywords={Quantization (signal);Training;Arrays;Computational modeling;Hardware;Neural networks},
  doi={10.1109/CVPR.2018.00286}}

@inproceedings{
tailor2021degreequant,
title={Degree-Quant: Quantization-Aware Training for Graph Neural Networks},
author={Shyam Anil Tailor and Javier Fernandez-Marques and Nicholas Donald Lane},
booktitle={International Conference on Learning Representations},
year={2021},
url={https://openreview.net/forum?id=NSBrFgJAHg}
}

@inproceedings{
zhu2023rm,
title={$\rm {A}^2{Q}$: Aggregation-Aware Quantization for Graph Neural Networks},
author={Zeyu Zhu and Fanrong Li and Zitao Mo and Qinghao Hu and Gang Li and Zejian Liu and Xiaoyao Liang and Jian Cheng},
booktitle={The Eleventh International Conference on Learning Representations },
year={2023},
url={https://openreview.net/forum?id=7L2mgi0TNEP}
}

@misc{AI-act,
title = "{Artificial Intelligence Act}",
author = {{European Parliament}},
url = "https://www.europarl.europa.eu/doceo/document/TA-9-2024-0138_EN.pdf",
year = "2024"
}

@misc{pytorch_quantization,
  author       = {{PyTorch Team}},
  title        = "{Quantization — PyTorch 2.x Documentation}",
  year         = {2024},
  howpublished = {\url{https://pytorch.org/docs/stable/quantization.html}},
  note         = {Accessed: 2025-05-16}
}

@inproceedings{AhvonenHKL24,
  author       = {Veeti Ahvonen and
                  Damian Heiman and
                  Antti Kuusisto and
                  Carsten Lutz},
  editor       = {Amir Globersons and
                  Lester Mackey and
                  Danielle Belgrave and
                  Angela Fan and
                  Ulrich Paquet and
                  Jakub M. Tomczak and
                  Cheng Zhang},
  title        = {Logical characterizations of recurrent graph neural networks with
                  reals and floats},
  booktitle    = {Advances in Neural Information Processing Systems 38: Annual Conference
                  on Neural Information Processing Systems 2024, NeurIPS 2024, Vancouver,
                  BC, Canada, December 10 - 15, 2024},
  year         = {2024},
  url          = {http://papers.nips.cc/paper\_files/paper/2024/hash/bca7a9a0dd85e2a68420e5cae27eccfb-Abstract-Conference.html},
  timestamp    = {Thu, 13 Feb 2025 16:56:44 +0100},
  biburl       = {https://dblp.org/rec/conf/nips/AhvonenHKL24.bib},
  bibsource    = {dblp computer science bibliography, https://dblp.org}
}

@article{zitnik2017predicting,
  title={Predicting multicellular function through multi-layer tissue networks},
  author={Zitnik, Marinka and Leskovec, Jure},
  journal={Bioinformatics},
  volume={33},
  number={14},
  pages={i190--i198},
  year={2017},
  publisher={Oxford University Press}
}

@inproceedings{DBLP:conf/iclr/XuHLJ19,
  author       = {Keyulu Xu and
                  Weihua Hu and
                  Jure Leskovec and
                  Stefanie Jegelka},
  title        = {How Powerful are Graph Neural Networks?},
  booktitle    = {7th International Conference on Learning Representations, {ICLR} 2019,
                  New Orleans, LA, USA, May 6-9, 2019},
  publisher    = {OpenReview.net},
  year         = {2019},
  url          = {https://openreview.net/forum?id=ryGs6iA5Km},
  timestamp    = {Thu, 25 Jul 2019 13:03:15 +0200},
  biburl       = {https://dblp.org/rec/conf/iclr/XuHLJ19.bib},
  bibsource    = {dblp computer science bibliography, https://dblp.org}
}

@article{nondeterministic-complexity,
author = {Seiferas, Joel I. and Fischer, Michael J. and Meyer, Albert R.},
title = {Separating Nondeterministic Time Complexity Classes},
year = {1978},
issue_date = {Jan. 1978},
publisher = {Association for Computing Machinery},
address = {New York, NY, USA},
volume = {25},
number = {1},
issn = {0004-5411},
url = {https://doi.org/10.1145/322047.322061},
doi = {10.1145/322047.322061},
journal = {J. ACM},
month = jan,
pages = {146–167},
numpages = {22}
}

@inproceedings{DBLP:conf/nips/JoglT023,
  author       = {Fabian Jogl and
                  Maximilian Thiessen and
                  Thomas G{\"{a}}rtner},
  editor       = {Alice Oh and
                  Tristan Naumann and
                  Amir Globerson and
                  Kate Saenko and
                  Moritz Hardt and
                  Sergey Levine},
  title        = {Expressivity-Preserving {GNN} Simulation},
  booktitle    = {Advances in Neural Information Processing Systems 36: Annual Conference
                  on Neural Information Processing Systems 2023, NeurIPS 2023, New Orleans,
                  LA, USA, December 10 - 16, 2023},
  year         = {2023},
  url          = {http://papers.nips.cc/paper\_files/paper/2023/hash/ebf95a6f3c575322da15d4fd0fc2b3c8-Abstract-Conference.html},
  timestamp    = {Fri, 01 Mar 2024 16:26:21 +0100},
  biburl       = {https://dblp.org/rec/conf/nips/JoglT023.bib},
  bibsource    = {dblp computer science bibliography, https://dblp.org}
}

@article{hoare1969,
  title={An axiomatic basis for computer programming},
  author={Hoare, C.A.R.},
  journal={Communications of the ACM},
  volume={12},
  number={10},
  pages={576--580},
  year={1969},
  publisher={ACM}
}

@article{adam2014method,
  title={A method for stochastic optimization},
  author={Adam, Kingma DP Ba J and others},
  journal={arXiv preprint arXiv:1412.6980},
  volume={1412},
  number={6},
  year={2014}
}

@misc{chernobrovkin2026replication,
  author = {Chernobrovkin, Artem and S{\"a}lzer, Marco and Schwarzentruber, Fran{\c{c}}ois and Troquard, Nicolas},
  title = {Replication Package for Verifying Quantized GNNs With Readout Is Decidable But Highly Intractable},
  year = {2026},
  howpublished = "\url{https://github.com/francoisschwarzentruber/kr2026-Verifying-Quantized-GNNs-With-Readout-Is-Decidable-But-Highly-Intractable}",
  note = "GitHub repository"
}
\bibliographystyle{kr}

\clearpage
\appendix

\section{Proofs of statements in the main text}
\label{app:proofs}

\lemmamodallogicintoquantlogic*

\begin{proof}
    We have to prove that for all $G, u$, we have $G, u \models \phi$ iff $G, u \models mod2expr(\phi)$.
We proceed by induction on $\phi$.

\begin{itemize}
    \item 
The base case is obvious: $G, u \models \phi$ iff $G, u \models mod2expr(\phi)$ is $G, u \models \phi$ iff $G, u \models mod2expr(\phi)$.

\item
$G, u \models \lnot \phi$ iff $G, u \not \models \phi$ 

iff (by induction) $G, u \not \models mod2expr(\phi)$

iff (by writing $mod2expr(\phi) = \expression \geq 0$) $G, u \not\models \expression \geq 0$ 

iff $G, u \models \expression < 0$ 

iff  $G, u \models \expression \leq -1$ (because we suppose that $\expression$ takes its value in the integers

iff  $G, u \models \expression + 1 \leq 0$ 

iff  $G, u \models -\expression - 1 \geq 0$.

\item $G, u \models (\phi_1 \lor \phi_2)$

iff $G, u \models \phi_1$ or $G, u \models \phi_2$

iff $G, u \models (\expression_1 \geq 0)$ or $G, u \models (\expression_2 \geq 0)$ 

iff $G, u \models \expression_1 + ReLU(\expression_2 - \expression_1) \geq 0$

Indeed, ($\Rightarrow$) if $G, u \models (\expression_1 \geq 0)$ then  $G, u \models \expression_1 + ReLU(\expression_2 - \expression_1) \geq \expression_1 \geq 0$.

If $G, u \models (\expression_2 \geq 0)$ and $G, u \models (\expression_1 < 0)$ then  $G, u \models \expression_1 + ReLU(\expression_2 - \expression_1) = \expression_1 + \expression_2 - \expression_1 = \expression_2  \geq 0$.

($\Leftarrow$) Conversely, by contraposition, if $G, u \models (\expression_2 < 0)$ and $G, u \models (\expression_1 < 0)$, then $G, u \models \expression_1 + ReLU(\expression_2 - \expression_1) = \expression_1 + \expression_2 - \expression_1 = \expression_2 < 0$ or $G, u \models \expression_1 + ReLU(\expression_2 - \expression_1) = \expression_1 + 0 = \expression_1 < 0$. In the two cases, $G, u \models \expression_1 + ReLU(\expression_2 - \expression_1)  < 0$.

\item
$G, u \models \Diamond^{\geq k} \phi$ 
iff the number of vertices $v$ that are successors of $u$ and with $G, v \models \phi$ is greater than $k$ 

iff the number of vertices $v$ that are successors of $u$ and with $G, v \models mod2expr(\phi)$ is greater than $k$ 

iff (written $\expression \geq 0$) iff the number of vertices $v$ that are successors of $u$ and with $G, v \models \expression \geq 0$ is greater than $k$ 


iff the number of vertices $v$ that are successors of $u$ and with $G, v \models ReLU(\expression+1) - ReLU(\expression) = 1$ is greater than $k$ (since we know by defining of modal \quantlogic that $\expression$ takes its value in integers) 

iff $G, u \models agg(ReLU(\expression+1) - ReLU(\expression) \geq k$

iff $G, u \models mod2expr(\Diamond^{\geq k} \phi)$

\item Other cases are similar.

\end{itemize}
\end{proof}

\propositionNumberHintikkasets*
\begin{proof}
For each expression $\expression$, we choose a number in $\setnumbers$. There are $2^n$ different numbers. There are $|\phi|$ number of expressions. So we get $(2^n)^{|\phi|} = 2^{n{|\phi|}}$ possible choices for a Hintikka set.
\end{proof}

\quantQFBAPAinNP*

\begin{proof}
\newcommand{\quantizedcardinal}[1]{k_{#1}}
Here is a non-deterministic algorithm for the satisfiability problem in \quantQFBAPA.
\begin{enumerate}
    \item 
Let $\chi$ be a \quantQFBAPA formula.

 \item 
For each set expression $B$ appearing in some $|B|$, guess a non-negative integer number $\quantizedcardinal B$ in $\setnumbers$.

 \item 
Let $\chi'$ be a (grounded) formula in which we replaced $|B|$ by $\quantizedcardinal B$.

\item 
Check that $\chi'$ is true (can be done in poly-time since $\chi'$ is a grounded formula, it is a Boolean formula on variable-free equations and inequations in $\setnumbers$). 

\item If not we reject.

\item We now build a standard QFBAPA formula $\delta = \bigwedge_{B} constraint(B)$ where:
$$constraint(B) = \begin{cases}
    |B| = \quantizedcardinal B \text{ if $\quantizedcardinal B < \infty_\setnumbers$ } \\
    |B| \geq limit \text{ if $\quantizedcardinal B = +\infty_\setnumbers$}
\end{cases}$$

where $limit$ is the maximum number that is considered as infinity in $\setnumbers$.

\item Run a non-deterministic poly-time algorithm for the QFBAPA satisfiability on $\delta$. Accepts if it accepts. Otherwise reject.

\end{enumerate}

The algorithm runs in poly-time. Guessing a number $n_B$ is in poly-time since it consists in guessing $n$ bits ($n$ in unary). Step 4 is just doing the computations in $\setnumbers$. In Step 6, $\delta$ can be computed in poly-time.

If $\chi$ is $\quantQFBAPA$ satisfiable, then there is a substitution $\semanticsQFBAPA$ such that $\semanticsQFBAPA$ satisfies $\chi$. At step 2, we guess $n_B = |\semanticsQFBAPA(B)|_\setnumbers$. The algorithm accepts the input.

Conversely, if the algorithm accepts its input, $\chi'$ is true for the chosen values $n_B$. $\delta$ is satisfiable. So there is a substitution $\semanticsQFBAPA$ such that $\semanticsQFBAPA$ satisfies $\delta$. By the definition of $constraint$, $\semanticsQFBAPA$ satisfies $\chi$.
\end{proof}

\begin{remark}
    If the number $n$ of bits to represent $\setnumbers$ is given in unary and if $\setnumbers$ is ``modulo'', then the satisfiability problem in $\quantQFBAPA$ is also in NP. The proof is similar except that now $constraint(B) = (|B| = k_B + L d_B)$ where $d_B$ is a new variable. 
\end{remark}

\NEXPTIMEupperboundtrcomputable*
\begin{proof}
In order to create $tr(\phi)$, we write an algorithm where 
each big conjunction, big disjunction, big union and big sum is replaced by a loop. For instance, $\bigwedge_{H \neq H'}$ is replaced by two inner loops over Hintikka sets. 
Note that we can check whether a candidate $H$ is a Hintikka set in exponential time in $n$ since Point 4 can be checked in exponential time in $n$ (thanks to our loose assumption on the computability of $\sem{\activationfunction}{}$ in exponential time in $n$). 
There are $2^{n|\phi|}$ many of them. In the same way, $\bigwedge_{k \in \setnumbers}$ is a loop over $2^n$ values.
There is a constant number of nested loops, each of them iterating over an exponential number (in $n$ and $|\phi|$ of elements). QED.
\end{proof}

The following is a technical lemma that will simplify the presentation of the proof of \Cref{prop:uppperbound-tr-correctness}.
\begin{lemma}\label{lemma:tr-entailed-equations}
Let $\phi$ be a \quantlogic formula, and
let $\semanticsQFBAPA$ be a substitution that satisfies $tr(\phi)$. It also satisfies the following \QFBAPA formulas.
\begin{enumerate}
    \item \label{eq:qfbapa-partition} $\bigwedge_{\expression} \bigwedge_{k \not =_\setnumbers k'} (X_{\expression = k} \cap X_{\expression = k'} = \emptyset) \land (\bigcup_{k} X_{\expression = k} = \universe)$

    \item \label{eq:qfbapa-constant} $\bigwedge_c X_{c = c} = \universe$

    \item \label{eq:qfbapa-sum} $\bigwedge_{\expression_1 + \expression_2 \in E(\phi)} \bigwedge_{k_1, k_2} X_{\expression_1 = k_1} \cap X_{\expression_2 = k_2} \subseteq X_{\expression_1 + \expression_2 = k_1 +_\setnumbers k_2}$

    \item \label{eq:qfbapa-prod} $\bigwedge_{c \times \expression \in E(\phi)} \bigwedge_{k} X_{\expression = k} = X_{c \times \expression = c \times_\setnumbers k}$

    \item \label{eq:qfbapa-activation} $\bigwedge_{\alpha(\expression) \in E(\phi)} \bigwedge_k X_{\expression = k} = X_{\activationfunction(\expression) = \sem{\activationfunction}{}(k)}$


\end{enumerate}
\end{lemma}
\begin{proof}
\Cref{eq:qfbapa-partition}:
For the first conjunct, suppose $k \not = k'$.
By \Cref{form:reduc2}, $\semanticsQFBAPA(X_{\expression = k}) = \bigcup_{H \mid \expression = k \in H} \semanticsQFBAPA(X_H)$
and
$\semanticsQFBAPA(X_{\expression = k'}) = \bigcup_{H \mid \expression = k' \in H} \semanticsQFBAPA(X_H)$.
By \Cref{def:hintikkaset}.\ref{hintikka:unique} 
$\semanticsQFBAPA(X_{\expression = k} \cap X_{\expression = k'}) = \emptyset$.
For the second conjunct. 
By 
\Cref{def:hintikkaset}.\ref{hintikka:unique}, for every Hintikka set $H$ there is a unique $k$ such that $\expression = k \in H$.
Therefore $\semanticsQFBAPA(\bigcup_H X_H) = \semanticsQFBAPA(\bigcup_k (\bigcup_{H \mid \expression = k \in H} X_H))$. By \Cref{form:reduc2}, 
we obtain $\semanticsQFBAPA(\bigcup_H X_H) = 
\semanticsQFBAPA(\bigcup_k X_{\expression = k})$.
But, by \Cref{form:reduc1}, we also have 
$\semanticsQFBAPA(\bigcup_H X_H) = \semanticsQFBAPA(\universe)$.

\Cref{eq:qfbapa-constant}:
By \Cref{form:reduc2}, $\semanticsQFBAPA(X_{c=c}) = \semanticsQFBAPA(\bigcup_{H | c= c} X_H)$.
By \Cref{def:hintikkaset}.\ref{hintikka:constant},
for every Hintikka set $H$ for $\phi$, and $c \in E(\phi) \cap \setnumbers$, we have $c = c \in H$. 
So $\semanticsQFBAPA(X_{c=c}) = \semanticsQFBAPA(\bigcup_{H} X_H)$.
By \Cref{form:reduc1}, $\semanticsQFBAPA(X_{c=c}) = \semanticsQFBAPA(\universe)$.

\Cref{eq:qfbapa-sum}:
Suppose that $u \in \semanticsQFBAPA(X_{\expression_1 = k_1} \cap X_{\expression_2 = k_2})$.
By \Cref{form:reduc2},
$\semanticsQFBAPA(X_{\expression_1 = k_1} \cap X_{\expression_2 = k_2})
= 
\semanticsQFBAPA((\bigcup_{H\mid \expression_1 = k_1 \in H} X_{H})
\cap
(\bigcup_{H\mid \expression_2 = k_2 \in H} X_{H}))
$.
With \Cref{form:reduc1}, 
this equals 
$\semanticsQFBAPA(\bigcup_{H\mid \expression_1 = k_1 \in H \land
\expression_2 = k_2 \in H}  X_{H})$.
By \Cref{def:hintikkaset}.\ref{hintikka:sum},
if $\expression_1 = k_1 \in H$ and 
$\expression_2 = k_2 \in H$, then 
$\expression_1 + \expression_2 = k_1 +_\setnumbers k_2 \in H$.

$\semanticsQFBAPA(\bigcup_{H\mid \expression_1 = k_1 \in H \land
\expression_2 = k_2 \in H}  X_{H})
\subseteq 
\semanticsQFBAPA(\bigcup_{H\mid
\expression_1 + \expression_2 = k_1 +_\setnumbers k_2 \in H}  X_{H})
$.
With \Cref{form:reduc2}
$
\semanticsQFBAPA(\bigcup_{H\mid
\expression_1 + \expression_2 = k_1 +_\setnumbers k_2 \in H}  X_{H})
=
\semanticsQFBAPA(X_{\expression_1 + \expression_2 = k_1 +_\setnumbers k_2})
$.
So $u \in 
\semanticsQFBAPA(X_{\expression_1 + \expression_2 = k_1 +_\setnumbers k_2})$.

\Cref{eq:qfbapa-prod} and 
\Cref{eq:qfbapa-activation}
    are proved in a way analogous to the previous case, exploiting \Cref{def:hintikkaset}.\ref{hintikka:product}, and \Cref{def:hintikkaset}.\ref{hintikka:activation}.

\end{proof}

\NEXPTIMEupperboundtrcorrectness*
\begin{proof}
\fbox{$\Rightarrow$}
    Let $G = (\setvertices, \setedges)$, and $G, u$ such that $G, u \models \phi$. 
    We construct a substitution $\semanticsQFBAPA$ and prove that it satisfies $tr(\phi)$.
    We set $\semanticsQFBAPA(\universe) := \setvertices$, $\semanticsQFBAPA(X_{\subexpression=k'}) := \set{w \mid \sem{\subexpression}{G, w} = k'}$ and $\semanticsQFBAPA(X_H) := \set{w \suchthat {G, w \models H}}$ where $G, w \models H$ means that for all $\subexpression=k \in H$, we have $\sem{\subexpression}{G, w} = k$.
    For all Hintikka sets $H$ define:
$\semanticsQFBAPA(S_H) := \set{w \mid \exists v\in \semanticsQFBAPA(X_H) \text{ and } (v,w) \in \setedges}.$

    We check that $\semanticsQFBAPA$ satisfies  $tr(\phi)$. 

    The substitution $\semanticsQFBAPA$ satisfies Formula \ref{form:reduc1}. For the first conjunct, suppose $H \not = H'$. By \Cref{def:hintikkaset}.\ref{hintikka:unique}, there is $\expression' = k_1 \in H \cap \overline{H'}$ and 
    $\expression' = k_2 \in \overline{H} \cap H'$ with $k_1 \not =_\setnumbers k_2$. By semantics of \QFBAPA, $\semanticsQFBAPA(X_H \cap X_{H'}) = \semanticsQFBAPA(X_H) \cap \semanticsQFBAPA(X_{H'}) = \{w \mid G, w \models H \text{ and } G, w \models H'\} \subseteq \{w \mid G, w \mid \expression' = k_1 \text{ and } \expression' = k_2\} = \emptyset$. 
    For the second conjunct, let $u \in \setvertices$. By semantics of \quantlogic, we have that for every $\expression' \in E(\phi)$, there is $k_{\expression'}$, 
    such that $G,u \models \expression' = k_{\expression'}$. By \Cref{def:hintikkaset}, $H = \{\expression' = k_{\expression' = k'} \mid \expression' \in E(\phi)\}$ is a Hintikka set.

    Formula \ref{form:reduc2} is satisfied by $\semanticsQFBAPA$ by definition of $\semanticsQFBAPA$ and of Hintikka sets.

    Now, $\semanticsQFBAPA$ also satisfies Formula \ref{eq:qfbapaaggregation}. Indeed, if $\agreggationfunction(\subexpression)=k \in H$, then if there is no $H$-vertex in $G$ then the implication is true. Otherwise, consider the $H$-vertex $v$. But, then by definition of $\semanticsQFBAPA(X_{\agreggationfunction(\subexpression)=k})$, we have $\sem{\agreggationfunction(\subexpression)}{G,v} = k$. But then the semantics of $\agreggationfunction$ exactly corresponds to 
    $\sum_{k' \in \setnumbers} |S_H \cap X_{\expression=k'}| \times k' = k $.
    Indeed, each $S_H\cap X_{\expression=k'}$-successor contributes with $k'$. Thus, the contribution of successors where $\expression$ is $k'$ is $|S_H \cap X_{\expression=k'}| \times k'$.

Formula \ref{eq:qfbapaaggregationglob} is also satisfied by $\semanticsQFBAPA$. Actually, let $k$ such that $\semanticsQFBAPA$ satisfies $X_{\agreggationfunctionglobalreadout(\expression)= k} = \universe$. This means that the value of $\agreggationfunctionglobalreadout(\expression)$ (which does not depend on a specific vertex $u$ but only on $G$) is $k$.  The sum $\sum_{k' \in \setnumbers} |X_{\expression=k'}| \times k' = k $ is the semantics of  $\agreggationfunctionglobalreadout(\expression) = k$.

Finally, as $G, u \models \phi$, and $\phi$ is of the form $\expression\geq k$, there is $k' \geq_\setnumbers k$ such that $\sem{\expression}{G, u} = k'$. So $\semanticsQFBAPA(X_{\expression = k'}) \neq \emptyset$.

\fbox{$\Leftarrow$}
Conversely, consider a substitution $\semanticsQFBAPA$ that satisfies $tr(\phi)$. 
We construct a graph $G = (\setvertices, \setedges, \labeling)$ as follows.
\begin{align*}
    \setvertices & := \semanticsQFBAPA(\universe) \\
    \setedges & := \set{(u, v) \mid \text{for some $H$, $u \in \semanticsQFBAPA(X_H)$ and $v \in \semanticsQFBAPA(S_H)$}} \\
    \labeling(v)_i & := k \text{ where $v \in \semanticsQFBAPA(X_{x_i = k})$}
\end{align*}
i.e., the set of vertices is the universe, and we add an edge between any $H$-vertex $u$ and a vertex $v \in \semanticsQFBAPA(S_H)$, and the labeling for features is directly given by $X_{x_i = k}$. Note that the labeling is well-defined because of \Cref{lemma:tr-entailed-equations}.\ref{hintikka:unique}.

The substitution $\semanticsQFBAPA$ satisfies $tr(\phi)$, and therefore the formula $\bigvee_{k' \geq_\setnumbers k} X_{\expression = k'} \not = \emptyset$.
So there is $k' \geq_\setnumbers k$ and $u \in \universe$ such that $u \in \semanticsQFBAPA(X_{\expression = k'})$. Let us prove that $G, u \models \phi$.

By structural induction on the syntactic complexity of $\expression$ we show that if $u \in \semanticsQFBAPA(X_{\expression = k})$ then $\sem{\expression}{G,u} = k$.

We first prove the base cases of constants and variables.
\begin{itemize}
    \item $\expression = c$. Suppose $u \in \semanticsQFBAPA(X_{c=k})$.
    From \Cref{lemma:tr-entailed-equations}.\ref{eq:qfbapa-constant}, $u \in \semanticsQFBAPA(X_{c = c})$.
    From \Cref{lemma:tr-entailed-equations}.\ref{eq:qfbapa-partition}, $c =_\setnumbers k$.
    By \quantlogic semantics, indeed $\sem{c}{G,u} = c$. 
    So $\sem{\expression}{G,u} = k$.

    \item $\expression = x_i$. Suppose $u \in \semanticsQFBAPA(X_{x_i = k})$.
    By \quantlogic semantics, $\sem{x_i}{G,u} = \labeling(u)_i$.
    By definition of $\labeling$, $\labeling(u)_i = k$. 
    So $\sem{x_i}{G,u} = k$.
\end{itemize}

Now we prove the cases of complex expressions.
Suppose for induction that 
for all $v \in \setvertices$, and for $\theta^* \in \{\theta_1, \theta_1, \theta'\}$, if $v \in X_{\theta^* = k}$ then $\sem{\theta^*}{G,v} = k$.
\begin{itemize}
    \item $\expression = \expression_1 + \expression_2$. Suppose $u \in \semanticsQFBAPA(X_{\expression_1 + \expression_2 = k})$.
    From \Cref{lemma:tr-entailed-equations}.\ref{eq:qfbapa-partition}, there are $k_1$ and $k_2$ such that $u \in \semanticsQFBAPA(X_{\expression_1 = k_1})$ and $u \in \semanticsQFBAPA(X_{\expression_2 = k_2})$.
    From \Cref{lemma:tr-entailed-equations}.\ref{eq:qfbapa-sum}, $u \in \semanticsQFBAPA(\expression_1 + \expression_2 = k_1 +_\setnumbers k_2)$.
    From \Cref{lemma:tr-entailed-equations}.\ref{eq:qfbapa-partition} (and premise), $k_1 +_\setnumbers k_2 = k$.
    By induction, $\sem{\expression_1}{G,u} = k_1$ and $\sem{\expression_2}{G,u} = k_2$.
    By \quantlogic semantics, $\sem{\expression_1 + \expression_2}{G,u} = \sem{\expression_1}{G,u} +_\setnumbers \sem{\expression_2}{G,u} = k_1 +_\setnumbers k_2 = k$.
    
    \item $\expression = c \times \expression'$. Suppose $u \in \semanticsQFBAPA(X_{c \times \expression' = k})$.
    From \Cref{lemma:tr-entailed-equations}.\ref{eq:qfbapa-partition}, there is $k'$ such that $u \in \semanticsQFBAPA(\expression' = k')$.
    From \Cref{lemma:tr-entailed-equations}.\ref{eq:qfbapa-prod}, $u \in \semanticsQFBAPA(c \times \expression' = c \times k')$.
    From \Cref{lemma:tr-entailed-equations}.\ref{eq:qfbapa-partition} (and premise), $c \times_\setnumbers k' = k$.
    By induction, $\sem{\expression'}{G,u} = k'$.
    By \quantlogic semantics, $\sem{c \times \expression'}{G,u} = c \times_\setnumbers \sem{\expression'}{G,u} = c \times_\setnumbers k' = k$.
    
    \item $\expression = \agreggationfunction(\expression')$. Suppose $u \in \semanticsQFBAPA(X_{\agreggationfunction(\expression') = k})$.
    From \Cref{lemma:tr-entailed-equations}.\ref{eq:qfbapa-partition}, for every $v \in \universe$, there is $k_v$ such that $v \in \semanticsQFBAPA(X_{\expression' = k_v})$.
    By induction, for every $v \in \universe = V$, $\sem{\expression'}{G,v} = k_v$.

    By the semantics of \quantlogic, $\sem{\agreggationfunction(\theta')}{G,u} = \sum_{v \in \setvertices \mid u\setedges v} \sem{\theta'}{G,v} = \sum_{v \in \setvertices \mid u\setedges v} k_v = \sum_{k' \in \setnumbers} |\{v \mid u\setedges v, \sem{\theta'}{G,v} = k'\}| \times_\setnumbers k'$.

    By definition of $\setedges$, and \Cref{form:reduc1}
    $(u,v) \in \setedges$ iff $v \in \semanticsQFBAPA(S_H)$.
    So $\sem{\agreggationfunction(\theta')}{G,u} =
    \sum_{k' \in \setnumbers} |\semanticsQFBAPA(S_H \cap X_{\expression = k'})| \times_\setnumbers k' = 
    \semanticsQFBAPA(\sum_{k' \in \setnumbers} |S_H \cap X_{\expression = k'}| \times k')$. 
    Therefore, by \Cref{eq:qfbapaaggregation}, we conclude that
    $\sem{\agreggationfunction(\theta')}{G,u} = k$.

    \item $\expression = \agreggationfunctionglobalreadout(\expression')$. Suppose $u \in \semanticsQFBAPA(X_{\agreggationfunctionglobalreadout(\expression') = k})$.
    From \Cref{lemma:tr-entailed-equations}.\ref{eq:qfbapa-partition}, for every $v \in \universe$, there is $k_v$ such that $v \in \semanticsQFBAPA(X_{\expression' = k_v})$.
    By \Cref{eq:qfbapaaggregationglob} (and premise), $\semanticsQFBAPA(\sum_{k' \in \setnumbers} |X_{\expression=k'}| \times k') = k$.

    By induction, for every $v \in \universe = V$, $\sem{\expression'}{G,v} = k_v$.
    
    We have successively,
    $k = \sum_{k' \in \setnumbers} |\semanticsQFBAPA(X_{\expression=k'})| \times k' =
    \sum_{k' \in \setnumbers} |\{v \mid \sem{\expression'}{G,v} = k'\}| \times k') =
    \sum_{v \in V} \sem{\expression'}{G,v} = \sem{\agreggationfunctionglobalreadout(\expression')}{G,u}.$
    
    \item $\expression = \activationfunction(\expression')$. Suppose $u \in \semanticsQFBAPA(X_{\activationfunction(\expression') = k})$. From \Cref{lemma:tr-entailed-equations}.\ref{eq:qfbapa-partition}, there is $k'$ such that $u \in \semanticsQFBAPA(X_{\expression' = k'})$. From \Cref{lemma:tr-entailed-equations}.\ref{eq:qfbapa-activation}, $u \in \semanticsQFBAPA(X_{\activationfunction(\expression') = \sem{\activationfunction}{}(k')})$.
    From \Cref{lemma:tr-entailed-equations}.\ref{eq:qfbapa-partition}, $\sem{\activationfunction}{}(k') = k$. 
    By \quantlogic semantics $\sem{\activationfunction(\expression')}{G,u} = \sem{\activationfunction}{}(\sem{\expression'}{G,u})$.
    By induction, $\sem{\expression'}{G,u} = k'$. Therefore, 
    $\sem{\activationfunction(\expression')}{G,u} = \sem{\activationfunction}{}(k') = k$.
\end{itemize}

To conclude, there is a $u \in V$, and a $k' \geq_\setnumbers k$ such that $G,u \models \expression = k'$. So $G,u \models \expression \geq k$, and therefore $\phi$ is satisfiable.
\end{proof}

\theoremlowerbound*
\begin{proof}
\newcommand{\bity}[1]{y_{#1}}
We reduce the NEXPTIME-hard problem of deciding whether a domino system $\mathcal{D} = (D,V,H)$, given an initial condition $w_0 \ldots w_{n-1} \in D^n$, can tile an exponential torus~\cite{DBLP:journals/jair/Tobies00}.
In the domino system, $D$ is the set of tile types, and $V$ and $H$ are the vertical and horizontal color compatibility relations, respectively.
We are going to write a set of modal \quantlogic formulas that characterize the torus $\mathbb{Z}^{2n+1} \times \mathbb{Z}^{2n+1}$ and the domino system.
We use $2n + 2$ features.
We use $x_0, \ldots x_{n-1}$, and $\bity0, \ldots , \bity{n-1}$, 
to hold the (binary-encoded) coordinates of vertices $\nodecoordinate x y$ in the torus.
We use the feature $x_N$ to denote a vertex $\nodeNlabel$ `on the way north' (when $x_N = 1$) and $x_E$ to denote a vertex $\nodeElabel$ `on the way east' (when $x_E = 1$), with abbreviations $\phi_N := x_N = 1$, and $\phi_E := x_E = 1$.
See Figure~\ref{fig:grid-hardness-quantlogic}.

For every $n \in \mathbb N$, we define the following set of formulas. 
\begin{equation*}
\begin{array}{llll}
T_n = \{ 
&\Box_g(x_N = 1 \lor x_N = 0),\\
&\Box_g(x_E = 1 \lor x_E = 0), \\
&\Box_g(\bigwedge_{k = 0}^{n-1} (x_i = 1 \lor x_i = 0)), \\
& \Box_g(\bigwedge_{k = 0}^{n-1} (\bity i = 1 \lor \bity i = 0)), \\ 
& \Box_g (\lnot (x_N = 1 \land x_E =1)), \\
& \Box_g (\lnot(\phi_N \lor \phi_E) \limp \agreggationfunction(1) = 2) , \\
& \Box_g (\lnot(\phi_N \lor \phi_E) \limp (\agreggationfunction(x_N) = 1)), \\
& \Box_g (\lnot(\phi_N \lor \phi_E) \limp (\agreggationfunction(x_E) = 1)), \\
& \Box_g( \phi_N \limp agg(1) = 1), \\
& \Box_g( \phi_E = 1 \limp agg(1) = 1), \\
& \Diamond_g^{=1} \phi_{(0,0)}, \\
& \Diamond_g^{=1} \phi_{(2^n-1, 2^n-1)}, \\
& \Box_g (\lnot(\phi_N \lor \phi_E) \limp \phi_{east}), \\
& \Box_g (\lnot(\phi_N \lor \phi_E) \limp \phi_{north}), \\
& \Diamond_g^{\leq  2^n \times 2^n} \lnot (\phi_N \lor \phi_E),\\
& \Diamond_g^{\leq  2^n \times 2^n} \phi_N,\\
& \Diamond_g^{\leq  2^n \times 2^n} \phi_E
\quad \}
\end{array}
\end{equation*}
where 
$\phi_{(0,0)} := \bigwedge_{k=0}^{n-1} x_i = 0 \land \bigwedge_{k=0}^{n-1} \bity i = 0$, 
and $\phi_{(2^n-1,2^n-1)} := \bigwedge_{k=0}^{n-1} x_i = 1 \land \bigwedge_{k=0}^{n-1} \bity i = 1$ represent two nodes, namely those at coordinates $(0,0)$ and $(2^n-1,2^n-1)$.
The formulas $\phi_{north}$ and $\phi_{east}$ enforce constraints on the coordinates of states, such that going north increases the coordinate encoding using the $x_i$ features by one, leaving the $\bity i$ features unchanged, and going east increases coordinate encoding using the $\bity i$ features by one, leaving the $x_i$ features unchanged. For every formula $\phi$, $\forall east. \phi$ stands for $\Box (\phi_E \limp \Box \phi)$ and 
$\forall north. \phi$ stands for $\Box (\phi_N \limp \Box \phi)$.
\begin{equation*}
\resizebox{\linewidth}{!}{$
\begin{aligned}
\phi_{north} :=
& \bigwedge_{k=0}^{n-1}(\bigwedge_{j=0}^{k-1} (x_j = 1)) \limp \\ & \left(((x_k = 1) \limp \forall north. (x_k = 0)) \land ((x_k = 0) \limp \forall north. (x_k = 1))\right) \land \\
& \bigwedge_{k=0}^{n-1}(\bigvee_{j=0}^{k-1} (x_j = 0)) \limp \\ & (((x_k = 1) \limp \forall north. (x_k = 1)) \land ((x_k = 0) \limp \forall north. (x_k = 0))) \land \\
& \bigwedge_{k=0}^{n-1}(((\bity k = 1) \limp \forall north. (\bity k = 1)) \land ((\bity k = 0) \limp \forall north. (\bity k = 0)))\\
\phi_{east} := & \bigwedge_{k=0}^{n-1}(\bigwedge_{j=0}^{k-1} (\bity j = 1)) \limp \\ & (((\bity k = 1) \limp \forall east. (\bity k = 0)) \land ((\bity k = 0) \limp \forall east. (\bity k = 1))) \land \\
& \bigwedge_{k=0}^{n-1}(\bigvee_{j=0}^{k-1} (\bity j = 0)) \limp \\ & (((\bity k = 1) \limp \forall east. (\bity k = 1)) \land ((\bity k = 0) \limp \forall east. (\bity k = 0))) \land \\
& \bigwedge_{k=0}^{n-1}(((x_k = 1) \limp \forall east. (x_k = 1)) \land ((x_k = 0) \limp \forall east. (x_k = 0)))
\end{aligned}
$}
\end{equation*}

The problem of deciding whether a domino system $\mathcal{D} = (D,V,H)$, given an initial condition $w_0 \ldots w_{n-1} \in D^n$, can tile a torus of exponential size can be reduced to the problem satisfiability in \quantlogic, checking the satisfiability of the set of formulas $T(n,\mathcal{D},w) = T_n \cup T_\mathcal{D} \cup T_w$, where $T_n$ is as above, $T_\mathcal{D}$ encodes the domino system, and $T_w$ encodes the initial condition as follows. We define 
\begin{equation*}
\scalebox{0.9}{$
\begin{aligned}
\begin{array}{ll}
   T_\mathcal{D}  = \{ &
  \Box_g(\bigwedge_{d \in D} (x_d = 1 \lor x_d = 0)), \\
  &  \Box_g( \lnot (\phi_N \lor \phi_E) \limp (\bigvee_{d \in D} \phi_d) ) ,  \\
  &  \Box_g( \lnot (\phi_N \lor \phi_E) \limp (\bigwedge_{d \in D} \bigwedge_{d' \in D \setminus \{d\}} \lnot (\phi_d \land \phi_{d'}))), \\
  & \Box_g(\bigwedge_{d \in D} (\phi_d \limp (\forall east. \bigvee_{(d,d') \in H} \phi_{d'}))),\\
  & \Box_g(\bigwedge_{d \in D} (\phi_d \limp (\forall north. \bigvee_{(d,d') \in V} \phi_{d'})))
  \quad \}
\end{array}
\end{aligned}$}
\end{equation*}
where for every $d \in D$, there is a feature $x_d$ and $\phi_d:= x_d = 1$. Finally, we define
\[
 T_w = \{
\Box_g (\phi_{(0,0)} \limp \phi_{w_0}), 
\ldots ,
\Box_g(\phi_{(n-1,0)} \limp \phi_{w_{n-1}}) 
\}
\]
The size of $T(n,\mathcal{D}, w)$ is polynomial in the size of the tiling problem instance, that is in $|D| + |H| + |V| + n$.
The rest of the proof is analogous to the proof of
\cite[Corollary~3.9]{DBLP:journals/jair/Tobies00}.
The NEXPTIME-hardness of \quantlogic follows from Lemma~\ref{lem:trFE-corr} and
\cite[Corollary~3.3]{DBLP:journals/jair/Tobies00} stating the NEXPTIME-hardness of deciding whether a domino system with initial condition can tile a torus of exponential size.

For the complexity of ACR-GNN verification tasks, we observe the following.
\begin{enumerate}
    \item We reduce the satisfiability problem in (modal) \quantlogic 
    to \verificationtask{3} in poly-time as follows. Let $\phi$ be a \quantlogic. 
    We build in poly-time an ACR-GNN $\aGNN$ that recognizes all pointed graphs. We have $\phi$ is satisfiable iff $\sem{\phi}{} \cap \sem{\aGNN}{} \neq \emptyset$. So \verificationtask{3} is NEXPTIME-hard.

    \item The validity problem of \quantlogic (dual problem of the satisfiability problem, i.e., given a formula $\phi$, is $\phi$ true in all pointed graphs $G, u$?) is coNEXPTIME-hard.
    We reduce the validity problem of \quantlogic to  \verificationtask{2}.
    Let $\phi$ be a \quantlogic formula.
    We construct an ACR-GNN $\aGNN$ that accepts all pointed graphs.
    We have $\phi$ is valid iff $\sem{\aGNN}{} \subseteq \sem{\phi}{}$. So \verificationtask{2} is coNEXPTIME-hard.


    \item  We reduce the validity problem of \quantlogic to  \verificationtask{1}.
    Let $\phi$ be a \quantlogic formula.
\verificationtaskbounded{1} can be verified with
$\sem{\top}{} \subseteq \sem{\aGNN_\phi}{}$.
    We know from \Cref{th:formulatognn} that we can construct 
    $\aGNN_\phi$ in poly-time such that $\sem{\aGNN_\phi}{} = \sem{\phi}{}$.
    Then, we have $\phi$ is valid iff $\sem{\top}{} \subseteq \sem{\aGNN_\phi}{}$. So \verificationtask{1} is coNEXPTIME-hard.
\end{enumerate}

We observe that alternative reductions can be obtained exploiting \Cref{th:formulatognn}. \verificationtask{3} can be verified with $\sem{\top}{} \cap \sem{\aGNN_\phi}{} \neq \emptyset$. \verificationtask{2} can be verified with
$\sem{\aGNN_\phi}{} \subseteq \sem{\top}{}$. 

\end{proof}


\theoremsatboundednpcomplete*
\begin{proof}
    NP upper bound is obtained by guessing a graph with at most $N$ vertices and then check that~$\phi$ holds. The obtained algorithm is non-deterministic, runs in poly-time and decides the satisfiability problem with bounded number of vertices.
    NP-hardness already holds for $\agreggationfunction$-free formulas by reduction from SAT for propositional logic (the reduction is $mod2expr$, see~Lemma~\ref{lem:trFE-corr}).

    For the complexity of the bounded ACR-GNN verification tasks, we observe the following.
\begin{enumerate}
    \item NP upper bound is also obtained by guessing a graph with at most $N$ vertices and then check. For the lower bound, we reduce (propositional) SAT to \verificationtaskbounded{3} in poly-time as follows. Let $\phi$ be a propositional formula. 
    We build in poly-time an ACR-GNN $\aGNN$ that recognizes all pointed graphs. We have $\phi$ is satisfiable iff $\sem{\phi}{} \cap \sem{\aGNN}{} \neq \emptyset$ So \verificationtaskbounded{3} is NP-hard.

    \item coNP upper bound corresponds to a NP upper bound for the dual problem: guessing a graph with at most $N$ vertices which is recognizes by $\aGNN$ but in which $\phi$ does not hold. The validity problem of propositional logic (dual problem of the satisfiability problem, i.e., given a formula $\phi$, is $\phi$ true for all valuations) is coNP-hard.
    We reduce the validity problem of propositional logic to  \verificationtaskbounded{2}.
    Let $\phi$ be a propositional formula.
    We construct an ACR-GNN $\aGNN$ that accepts all pointed graphs.
    We have $\phi$ is valid iff $\sem{\aGNN}{} \subseteq \sem{\phi}{}$. So \verificationtaskbounded{2} is coNP-hard.

    \item  coNP upper bound is obtained similarly. For the lower bound, we reduce the validity problem of propositional logic to  \verificationtaskbounded{1}.
    Let $\phi$ be a propositional formula.
  We can construct in poly-time an ACR-GNN $\aGNN$ that is equivalent to $\phi$ (by \cite{DBLP:conf/iclr/BarceloKM0RS20}).
    We have $\phi$ is valid iff $\sem{\top}{} \subseteq \sem{\aGNN}{}$. So \verificationtaskbounded{1} is coNP-hard.
    
    
\end{enumerate}

\end{proof}

\section{
The Logic
\logicKsharpglobone and ACR-GNNs Over $\mathbb{Z}$}
\label{app:ksharpone}

A \emph{(propositionally labeled) graph} $G$ is a tuple $(\setvertices, \setedges, \labeling)$ such that $\setvertices$ is 
a finite set of vertices, $\setedges \subseteq \setvertices \times \setvertices$ a set of directed edges and $\labeling$ 
is a mapping from~$\setvertices$ to a valuation over a set of atomic propositions. We write  $\labeling(u)(p) = 1$ when 
atomic proposition $p$ is true in $u$, and $\labeling(u)(p) = 0$ otherwise. 
Given a graph $G$ and vertex $u \in \setvertices$, we call $(G,u)$ a \emph{pointed graph}.

\subsection{Logic}

Consider a countable set $\Ap$ of propositions. We define the language of logic $\logicKsharpglobone$ as the set of formulas generated by the following BNF:
\begin{align*}
	\phi & ::= p \mid \lnot \phi \mid \phi \lor \phi \mid \NTexpression \geq 0 \\ 
	\NTexpression & ::= c \mid \istrue\phi \mid \modalitynumber \phi \mid \globmodalitynumber \phi \mid \NTexpression + \NTexpression \mid c\times \NTexpression 
\end{align*}
where $p$ ranges over $\Ap$, and $c$ ranges over $\mathbb Z$. We assume that all formulas $\varphi$ are represented as directed acyclic graph (DAG) and refer by \emph{the size of $\varphi$} to the size of its DAG representation.

 Atomic formulas are propositions $p$, inequalities and equalities of linear expressions. 
 We consider linear expressions over $\istrue\phi$ and $\modalitynumber \phi$ and $\globmodalitynumber \phi$. The number $\istrue\phi$ is equal to $1$ if $\phi$ holds in the current world and equal $0$ otherwise.
 The number $\modalitynumber \phi$ is the number of successors in which $\phi$ hold. The number $\globmodalitynumber \phi$ is the number of worlds in the model in which $\phi$ hold. 
 The language seems strict but we write $\NTexpression_1 \leq \NTexpression_2$ for $\NTexpression_2 - \NTexpression_1 \geq 0$, $\NTexpression = 0$ for $(\NTexpression \geq 0) \land (-\NTexpression \geq 0)$, etc.

\newcommand{\semanticsvalue}[2]{[[#1]]_{#2}}

As in modal logic, a formula $\phi$ is evaluated in a pointed graph $(G, u)$ (also known as pointed Kripke model). 
We define the truth conditions $(G,u) \models \phi$ ($\phi$ is true in $u$) by 
	\begin{center}
		\begin{tabular}{lll}
			$(G,u) \models p$ & if & $\labeling(u)(p) = 1$, \\
			$(G,u) \models \neg \phi$ & if & it is not the case that $(G,u) \models \phi$, \\
			$(G,u) \models \phi \land \psi$ & if & $(G,u) \models \phi$ and $(G,u) \models \psi$, \\
			$(G,u) \models \NTexpression \geq 0$ & if &  $\semanticsvalue{\NTexpression}{G,u} \geq 0$, \\
		\end{tabular}
	\end{center}
    and the semantics $\semanticsvalue{\NTexpression}{G,u}$ (the value of $\NTexpression$ in $u$) of an expression $\NTexpression$ by mutual induction on $\phi$ and $\NTexpression$ as follows.
	\begin{center}
		$\begin{array}{ll}
			\semanticsvalue{c}{G, u} & = c, \\
			\semanticsvalue{\NTexpression_1+\NTexpression_2}{G, u} & = \semanticsvalue{\NTexpression_1}{G,u}+\semanticsvalue{\NTexpression_2}{G,u}, \\
			\semanticsvalue{c \times \NTexpression}{G, u} & = c \times \semanticsvalue{\NTexpression}{G,u}, \\
			\semanticsvalue{\istrue\phi}{G, u} & = \begin{cases}
				1 & \text{if $(G,u) \models \phi$} \\
				0 & \text{otherwise},
			\end{cases} \\  
			\semanticsvalue{\modalitynumber\phi}{G, u} & = |\{v \in \setvertices \mid (u,v) \in \setedges \text{ and } (G,v) \models \phi\}| \\
                \semanticsvalue{\globmodalitynumber \phi}{G, u} & = |\{v \in \setvertices \mid (G,v) \models \phi\}|.
		\end{array}$
	\end{center}

A local modality $\localmodality\phi$ can be defined as $\localmodality\phi :=  (-1) \times \modalitynumber (\lnot \phi) \geq 0$. That is, to say that $\phi$ holds in all successors, we say that the number of successors in which $\lnot \phi$ holds is zero.
Similarly, a global/universal modality can be defined as $\globalmodality\phi :=  (-1) \times \globmodalitynumber (\lnot \phi) \geq 0$.

\subsection{Aggregate-Combine Graph Neural Networks}
\label{sec:gnn}

We adapt the definition of ACR-GNNs to take 
propositionally labeled graphs as input.

We define quantized Aggregate-Combine GNNs with Readout (ACR-GNN) \cite{DBLP:conf/iclr/BarceloKM0RS20}, also called message passing neural networks \cite{DBLP:conf/icml/GilmerSRVD17}.

A \emph{ACR-GNN layer} $\layer = (\COMB, \AGG, \GAGG)$ is a tuple where $\COMB : \setnumbers^{3m} \to \setnumbers^{m'}$ is a so-called \emph{combination function}, $\AGG$ is a so-called \emph{local aggregation function}, mapping multisets of vectors from $\setnumbers^m$ to a single vector from $\setnumbers^{m}$, $\GAGG$ is a so-called \emph{global aggregation function}, also mapping multisets of vectors from $\setnumbers^m$ to a single vector from $\setnumbers^m$. We call $m$ the \emph{input dimension} of layer $\layer$ and $m'$ the \emph{output dimension} of layer $\layer$.
Then, a \emph{ACR-GNN} is a tuple $(\layer^{(1)}, \dotsc, \layer^{(\nblayers)}, \CLS)$ where $\layer^{(1)}, \dotsc, \layer^{(\nblayers)}$ are 
$\nblayers$ ACR-GNN layers and $\CLS : \setnumbers^m \rightarrow \set{0, 1}$ is a \emph{classification function}. We assume that all GNNs are well-formed in the sense that
output dimension of layer $\layer^{(i)}$ matches input dimension of layer $\layer^{(i+1)}$ as well
as output dimension of $\layer^{(L)}$ matches input dimension of $\CLS$.

Let $G = (\setvertices, \setedges, \labeling)$ be a graph 
with atomic propositions $p_1, \dotsc, p_k$ 
and $\aGNN = (\layer^{(1)}, \dotsc, \layer^{(\nblayers)}, \CLS)$ an ACR-GNN. We define 
$x_0 : \setvertices \rightarrow \{0,1\}^k$, called the \emph{initial state of $G$}, as $x_0(u) := (\labeling(u)(p_1), \dots, \labeling(u)(p_k))$ for all $u \in V$.
Then, the $i$-th layer of $\aGNN$ computes an updated state of 
$G$ by
\[
\begin{aligned}
\statetv{i}u :=\;&
\COMB\Bigl(
  \statetv{i-1}u,\;
  \AGG\bigl(\multiset{\statetv{i-1}v \mid (u,v) \in \setedges}\bigr), \\
&\qquad
  \GAGG\bigl(\multiset{\statetv{i-1}v \mid v \in \setvertices}\bigr)
\Bigr)
\end{aligned}
\]
\noindent
where $\AGG$, $\GAGG$, and $\COMB$ are respectively the local aggregation, global aggregation and combination function of the $i$-th layer. Let $(G,u)$ be a pointed graph. We write $\aGNN(G,u)$ to denote the application of $\aGNN$ to $(G,u)$, which is formally defined as $\aGNN(G,u) = \CLS(x_L(u))$ where $x_L$ is the state of $G$ computed by $\aGNN$ after layer $L$. Informally, this corresponds to a binary classification of vertex $u$.

We consider the following form of ACR-GNN $\aGNN$: all local and global aggregation functions are given by the sum of all vectors in the input multiset, all combination functions are given by $\COMB(x,y,z) = \sigmabold(xC+yA_1+zA_2 +b)$ where $\sigmabold(x)$ is the componentwise application of the activation function $\activationfunction(x)$ with matrices $C$, $A_1$ and $A_2$ and vector $b$ of $\setnumbers$ parameters, and where the classification function is $\CLS(x) = \sum_i a_i x_i \geq 1$, where $a_i$ are from $\setnumbers$ as well.

We note $\sem{\aGNN}{}$ the set of pointed graphs $(G,u)$ such that 
$\aGNN(G,u) = 1$.
An ACR-GNN  $\aGNN$ is satisfiable if $\sem{\aGNN}{}$ is non-empty.
The \emph{satisfiability problem} for ACR-GNNs is:
    Given a ACR-GNN $\aGNN$, decide whether $\aGNN$ is satisfiable.

\section{Capturing GNNs with \logicKsharpglobone}
\label{app:capturinggnns}

In this section, we exclusively consider ACR-GNNs where $\setnumbers = \setZ$ and $\activationfunction$ is \emph{truncated ReLU} $\activationfunction(x) = max(0, min(1, x))$.

We demonstrate that the expressive power of ACR-GNNs over $\setZ$, with truncated ReLU activation functions, and $\logicKsharpglobone$, is equivalent. Informally, this means that for every formula $\varphi$ of $\logicKsharpglobone$, there exists an ACR-GNNs $\aGNN$ that expresses the same query, and vice-versa. To achieve this, we define a translation of one into the other and substantiate that this translation is efficient. This enables ways to employ $\logicKsharpglobone$ for reasoning about ACR-GNN.

We begin by showing that ACR-GNNs are at least as expressive as $\logicKsharpglobone$. We remark that the arguments are similar to the proof of Theorem 1 in \cite{NunnSST24}.
\begin{theorem}\label{th:ksharpglobone-to-gnn}
    Let $\varphi \in \logicKsharpglobone$ be a 
    formula. There is an ACR-GNN $\aGNN_\varphi$ such that for all pointed graphs $(G,u)$ we have $(G,u) \models \varphi$ if and only if $\aGNN_\varphi(G,u) = 1$. Furthermore, $\aGNN_\varphi$ can be built in polynomial time regarding the size of $\varphi$.
\end{theorem}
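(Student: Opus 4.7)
The plan is to build $\aGNN_\varphi$ by processing the DAG representation of $\varphi$ in a topological order, assigning one coordinate of the GNN state to each DAG node. I would maintain the invariant that, once a node has been processed, its dedicated coordinate at every vertex $u$ stores either the truth value (for formula nodes) or the integer value (for arithmetic-expression nodes) of that subterm at $u$; correctness then follows by induction along the topological order. The ACR-GNN layers are constructed so that each layer handles one topological level, while previously computed coordinates are forwarded via identity rows in the combine matrix $C$ and zero rows in $A_1$, $A_2$, $b$.

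Each constructor of $\logicKsharpglobone$ fits naturally into the linear-plus-truncated-ReLU shape of the combine function $\COMB(x,y,z) = \sigmabold(xC + yA_1 + zA_2 + b)$. Propositions are read directly from the initial labeling. Boolean combinations on $\{0,1\}$-valued indicators are encoded as $1 - x_\phi$ for negation and $\sigma(x_\phi + x_\psi)$ for disjunction, both landing back in $\{0,1\}$ after truncation. Linear expressions $c + \sum_j c_j\,e_j$ become a single row of $C$, with the constant $c$ placed in $b$. The counting terms $\modalitynumber\phi$ and $\globmodalitynumber\phi$ are obtained by summing the already-computed indicator $x_\phi$ through the local aggregator $\AGG = \sum$ (channel $yA_1$) and the global aggregator $\GAGG = \sum$ (channel $zA_2$), respectively. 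Atomic inequalities $\NTexpression \geq 0$ are realized as $\sigma(\NTexpression + 1)$: since $\NTexpression$ is integer-valued, this evaluates to $1$ iff $\NTexpression \geq 0$ and to $0$ otherwise, recovering a crisp indicator. The classifier $\CLS$ is then the linear threshold $x_\varphi \geq 1$ reading the coordinate dedicated to $\varphi$.

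The main obstacle is the interplay between arithmetic coordinates, whose values range over all of $\setZ$, and the fact that the truncated ReLU applied at the end of every layer clips anything outside $[0,1]$. I would sidestep this by arranging the topological order so that an arithmetic-expression coordinate is consumed in the very next layer after it is produced, either as an argument of another linear combination (which is applied before that layer's ReLU) or as input to the shift-by-$1$ thresholding $\sigma(\cdot + 1)$; indicator coordinates, being $\{0,1\}$-valued, are then preserved unchanged through any number of identity forwardings. Since the DAG has $O(|\varphi|)$ nodes and modal depth at most $|\varphi|$, only a constant number of layers per DAG level is needed, so the resulting ACR-GNN has polynomially many layers, polynomially many coordinates, and integer parameters drawn from (or bounded by) those already appearing in $\varphi$; the whole construction is therefore computable in polynomial time, as required.
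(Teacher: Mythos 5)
Your overall architecture (one coordinate per DAG node, topological processing, identity forwarding of Boolean coordinates, local/global sum aggregation for $\modalitynumber\phi$ and $\globmodalitynumber\phi$, and the shift-by-one thresholding $\sigma(\NTexpression+1)$) matches the paper's construction. But there is a genuine gap in how you handle intermediate arithmetic values. You propose to dedicate coordinates to arithmetic-expression nodes that ``store the integer value'' of that subterm, and to rescue them from truncation by having them ``consumed in the very next layer \ldots before that layer's ReLU.'' This cannot work under the model used here: in this setting the combination function is $\COMB(x,y,z)=\sigmabold(xC+yA_1+zA_2+b)$ with $\sigmabold$ the \emph{componentwise} truncated ReLU applied to \emph{every} output coordinate of \emph{every} layer. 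A coordinate is, by definition, the output of the producing layer's activation, so by the time layer $i{+}1$ reads it, any value outside $[0,1]$ has already been clipped. The same problem hits a dedicated coordinate for $\modalitynumber\phi$ or $\globmodalitynumber\phi$, whose value is a count that is generically larger than $1$. There is no mechanism in this architecture for handing an unclipped integer across a layer boundary, and no per-coordinate choice of activation.

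The paper avoids this entirely by a normalization step you omit: every subformula $\NTexpression\geq 0$ is first rewritten (without loss of generality, in polynomial time, with coefficients in binary) into a single flat linear form $\sum_j k_j\cdot\istrue{\phi_j}+\sum_{j'}k_{j'}\cdot\modalitynumber{\phi_{j'}}+\sum_{j''}k_{j''}\cdot\globmodalitynumber{\phi_{j''}}-c\geq 0$ over Boolean indicator subformulas. The entire arithmetic expression is then evaluated inside one layer's linear maps ($C$, $A_1$, $A_2$, bias $-c+1$) and immediately thresholded by that same layer's truncated ReLU, so only $\{0,1\}$-valued coordinates ever cross a layer boundary. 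Your proof becomes correct if you replace the ``integer-valued coordinates consumed next layer'' device with this flattening; since the grammar of $\logicKsharpglobone$ only builds linear combinations of $\istrue\phi$, $\modalitynumber\phi$, $\globmodalitynumber\phi$ and constants, the flattening always exists and keeps the construction polynomial.
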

\begin{proof}[Proof sketch.]
    We construct a GNN $\aGNN_\phi$ that evaluates the semantics of a given $\logicKsharpglobone$ formula $\phi$ for some given pointed graph $(G,v)$. The network consists of $n$ layers, one for each of the $n$ subformulas $\phi_i$ of $\phi$, ordered so that the subformulas are evaluated based on subformula inclusion. The first layer evaluates atomic propositions, and each subsequent messages passing layer $l_i$ uses a fixed combination and fixed aggregation function to evaluate the semantics of $\phi_i$. 

    The correctness follows by induction on the layers: the $i$-th layer correctly evaluates $\phi_i$ at each vertex of $G$, assuming all its subformulas are correctly evaluated in previous layers. Finally, the classifying function $\CLS$ checks whether the $n$-th dimension of the vector after layer $l_n$, corresponding to the semantics of $\varphi_n$ for the respective vertex $v$, indicates that $\varphi_n = \phi$ is satisfied by $(G,v)$. The network size is polynomial in the size of $\phi$ due to the fact that the total number of layers and their width is polynomially bounded by the number of subformulas of $\varphi$. A full formal proof is given in Appendix~\ref{app:proof-th:ksharpglobone-to-gnn}.
\end{proof}

\begin{theorem}\label{th:gnn-to-ksharpglobone}
    Let $\aGNN$ be a GNN. We can compute in polynomial time wrt.\ $|\aGNN|$ a  $\logicKsharpglobone$-formula $\varphi_\aGNN$, represented as a DAG, such that $\semanticsof{\aGNN} = \semanticsof{\varphi_\aGNN}$.
\end{theorem}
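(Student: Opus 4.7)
The plan is to mirror Theorem~\ref{th:ksharpglobone-to-gnn} in reverse. The crucial observation is that since $\setnumbers = \setZ$ and the activation is truncated ReLU $\sigma(x) = \max(0, \min(1, x))$, every component of every intermediate state lies in $\{0, 1\}$: the initial state is $\{0,1\}$-valued by the one-hot labeling, and truncated ReLU on integer inputs always returns $0$ or $1$. We therefore build, by induction on the layer index $i$ and for each dimension $j$, a $\logicKsharpglobone$-formula $\phi_{i,j}$ such that for every pointed graph $(G,v)$, $(G,v) \models \phi_{i,j}$ iff $\statetv{i}{v}_j = 1$. For the base case we take $\phi_{0,j} := p_j$, matching $\statetv{0}{v}_j = \labeling(v)(p_j)$.

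For the inductive step, suppose $\phi_{i-1,k}$ has been constructed for all $k$. Since each $\statetv{i-1}{v}_k$ is Boolean, it equals $\istrue{\phi_{i-1,k}}$; the local aggregate $\sum_{v \,:\, uv \in \setedges} \statetv{i-1}{v}_k$ equals $\modalitynumber \phi_{i-1,k}$; and the global aggregate $\sum_{v \in \setvertices} \statetv{i-1}{v}_k$ equals $\globmodalitynumber \phi_{i-1,k}$. Denoting by $C, A_1, A_2, b$ the parameters of layer $i$, we capture its pre-activation at dimension $j$ by the expression
\[
\NTexpression_{i,j} \;:=\; \sum_k C_{kj} \cdot \istrue{\phi_{i-1,k}} \;+\; \sum_k (A_1)_{kj} \cdot \modalitynumber \phi_{i-1,k} \;+\; \sum_k (A_2)_{kj} \cdot \globmodalitynumber \phi_{i-1,k} \;+\; b_j.
\]
Since $\sigma(\NTexpression_{i,j}) = 1$ iff $\NTexpression_{i,j} \geq 1$, we set $\phi_{i,j} := \bigl(\NTexpression_{i,j} + (-1) \geq 0\bigr)$.

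To encode the classification, observe that $\CLS(\statetv{L}{v}) = 1$ iff $\sum_j a_j \statetv{L}{v}_j \geq 1$, so we define
\[
\varphi_\aGNN \;:=\; \Bigl(\sum_j a_j \cdot \istrue{\phi_{L,j}}\Bigr) + (-1) \;\geq\; 0.
\]
Representing $\varphi_\aGNN$ as a DAG that shares the subformulas $\phi_{i,k}$ across all references, there are $O(L \cdot d)$ distinct named subformulas (with $d$ the maximum layer width), each pointing to $O(d)$ children from the previous layer and thus of individual size $O(d)$; the overall DAG size is therefore polynomial in $|\aGNN|$. Correctness follows by a straightforward induction on $i$ from the three identities displayed above. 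We expect no real obstacle beyond the initial observation that integer arithmetic combined with truncated ReLU forces state values into $\{0, 1\}$; without this property, linear combinations of multi-valued states could not be expressed directly via $\istrue{\cdot}$, $\modalitynumber$, and $\globmodalitynumber$, and the naive translation would fail.
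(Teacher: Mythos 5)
Your proposal is correct and follows essentially the same route as the paper's proof: the same observation that integer weights plus truncated ReLU keep all intermediate states Boolean, the same layer-by-layer construction of threshold formulas $\phi_{i,j}$ using $\istrue{\cdot}$, $\modalitynumber$, and $\globmodalitynumber$ over the previous layer's formulas, the same encoding of the classifier, and the same DAG-sharing argument for polynomial size. No substantive differences.
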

\begin{proof}[Proof sketch.]
    We construct a $\logicKsharpglobone$-formula $\varphi_\aGNN$ that simulates the computation of a given GNN $\aGNN$. For each layer $l_i$ of the GNN, we define a set of formulas $\varphi_{i,j}$, one per output dimension, that encode the corresponding node features using linear threshold expressions over the formulas from the previous layer. At the base, the input features are the atomic propositions $p_1, \dotsc, p_{m_1}$.
    
    Each formula $\varphi_{i,j}$ mirrors the computation of the GNN layer, including combination, local aggregation, and global aggregation. The final classification formula $\varphi_\aGNN$ encodes the output of the linear classifier on the top layer features.
    Correctness follows from the fact that all intermediate node features remain Boolean under message passing layers with integer parameters and truncated ReLU activations. This allows expressing each output as a Boolean formula over the input propositions. The construction is efficient: by reusing shared subformulas via a DAG representation, the total size remains polynomial in the size of $\aGNN$. A more complete proof is given in Appendix~\ref{app:proof-th:gnn-to-ksharpglobone}.
\end{proof}

\subsection{Proof of Theorem~\ref{th:ksharpglobone-to-gnn}}
\label{app:proof-th:ksharpglobone-to-gnn}
\begin{proof}[Proof of Theorem~\ref{th:ksharpglobone-to-gnn}]

    Let $\phi$ be a $\logicKsharpglobone$ formula over the set of atomic propositions $p_1, \dotsc, p_m$. Let $\phi_1, \dotsc, \phi_n$ denote an enumeration of the subformulas of $\phi$ such that $\phi_i = p_i$ for $i \leq m$, $\phi_n = \phi$, and whenever $\phi_i$ is a subformula of $\phi_j$, it holds that $i \leq j$. Without loss of generality, we assume that all subformulas of the form $\NTexpression \geq 0$ are written as
    \[
    \sum_{j \in J} k_j \cdot \istrue{\phi_j} + \sum_{j' \in J'} k_{j'} \cdot \modalitynumber{\phi_{j'}}  + 
    \sum_{j'' \in J''} k_{j''} \cdot \globmodalitynumber{\phi_{j''}} - c \geq 0,
    \]
    for some index sets $J, J', J'' \subseteq \{1, \dotsc, n\}$.

    We construct the GNN $\aGNN_\phi$ in a layered manner. Note that $\aGNN_\phi$ is fully specified by defining the combination function $\COMB_i$, including its local and global aggregation, for each layer $l_i$ with $i \in \{1, \dotsc, n\}$ and the final classification function $\CLS$.
    Each $\COMB_i$ produces output vectors of dimension $n$. The first layer has input dimension $m$, and $\COMB_1$ is defined by $\COMB_1(x, y, z) = (x, 0, \dotsc, 0)$, ensuring that the first $m$ dimensions correspond to the truth values of the atomic propositions $p_1, \dotsc, p_m$, while the remaining entries are initialized to zero. Note that 
    $\COMB_1$ is easily realized by an FNN with trReLU activations.
    For $i > 1$, the combination function $\COMB_i$ is defined as
    \[
    \COMB_i(x, y, z) = \sigmabold(xC + yA_1 + zA_2 + b),
    \]
    where $C$, $A_1$, $A_2$ are $n \times n$ matrices corresponding to self, local (neighbor), and global aggregation respectively, and $b \in \mathbb{Z}^n$ is a bias vector. The parameters are defined sparsely as follows:

    \begin{itemize}
      \item $C_{ii} = 1$ for all $i \leq m$ (preserving the atomic propositions),
      \item If $\phi_i = \neg \phi_j$, then $C_{ji} = -1$ and $b_i = 1$,
      \item If $\phi_i = \phi_j \lor \phi_l$, then $C_{ji} = C_{li} = 1$, and 
      \item If $\phi_i = \sum_{j \in J} k_j \cdot 1_{\phi_j} + \sum_{j' \in J'} k_{j'} \cdot \modalitynumber{\phi_{j'}} 
      + \sum_{j'' \in J''} k_{j''} \cdot \globmodalitynumber{\phi_{j''}} - c \geq 0$, then
      \[
      C_{ji} = k_j, \quad A_{1, j'i} = k_{j'}, \quad A_{2, j'i} = k_{j''}, \quad b_i = -c + 1.
      \]
    \end{itemize}

    Note that each $\COMB_i$ has the same functional form, differing only in the non-zero entries of its parameters.
    The classification function is defined by $\CLS(x) = x_n \geq 1$.
    
    Let $l_i$ denote the $i$th layer of $\aGNN_\phi$, and fix a vertex $v$ in some input graph. We show, by induction on $i$, that the following invariant holds: for all $j \leq i$, $(x_i(v))_j = 1$ if and only if $v \models \phi_j$, and $(x_i(v))_j = 0$ otherwise.
    Assume that $i=1$. By construction, $x_1(v)$ contains the truth values of the atomic propositions $p_1, \dotsc, p_m$ in its first $m$ coordinates. Thus, the statement holds at layer $1$. 
    Next, assume the statement holds for layer $x_{i-1}$. Let $j < i$. By assumption, the semantics of $\phi_j$ are already correctly encoded in $x_{j-1}$ and preserved by $\COMB_i$ due to the fixed structure of $C$, $A_1$, $A_2$, and $b$.
    Now consider $j = i$. The semantics of all subformulas of $\phi_i$ are captured in $x_{i-1}$, either at the current vertex or its neighbors. By the design of $\COMB_i$, which depends only on the values of relevant subformulas, we conclude that $\phi_i$ is correctly evaluated. This holds regardless of whether $\phi_i$ is a negation, disjunction, or numeric threshold formula. 
    Thus, the statement holds for all $i$, and in particular for $x_n(v)$ and $\phi_n = \phi$. Finally, the classifier $\CLS$ evaluates whether $x_n(v)_n \geq 1$, which is equivalent to $G,v \models \phi$. The size claim is obvious given 
    that $n$ depends polynomial on the size of $\varphi$. We note that this assumes that the enumeration of subformulas of 
    $\varphi$ does not contain duplicates.
\end{proof}

\subsection{Proof of Theorem~\ref{th:gnn-to-ksharpglobone}}
\label{app:proof-th:gnn-to-ksharpglobone}
\begin{proof}[Proof of Theorem~\ref{th:gnn-to-ksharpglobone}]
Let $\aGNN$ be a GNN composed of layers $l_1, \dotsc, l_k$, where each $\COMB_i$ has input dimension 
$2m_i$, output dimension $n_i$, and parameters $C_i$, $A_{i,1}$, $A_{i,2}$, and $b_i$. The final
classification is defined via a linear threshold function 
$\CLS(x) = a_1 x_1 + \dotsb + a_{n_k} x_{n_k} \geq 1$. We assume that the dimensionalities 
match across layers, i.e., $m_i = n_{i-1}$ for all $i \geq 2$, so that the GNN is well-formed.

We construct a formula $\varphi_\aGNN$ over the input propositions $p_1, \dotsc, p_{m_1}$ inductively, mirroring the structure of the GNN computation.

We begin with the first layer $l_1$. For each $j \in \{1, \dotsc, n_1\}$, we define $\varphi_{1,j} \geq 1 $ or more detailed:
\begin{equation*}
    \sum_{k=1}^{m_1} (C_1)_{kj} \cdot \istrue{p_k} + (A_{1,1})_{kj} \cdot \modalitynumber{p_k} + (A_{1,2})_{kj} \cdot \globmodalitynumber{p_k} + (b_1)_j \geq 1.
\end{equation*}
Now suppose that we have already constructed formulas $\varphi_{i-1,1}, \dotsc, \varphi_{i-1,n_{i-1}}$ 
for some layer $i \geq 2$. Then, for each output index $j \in \{1, \dotsc, n_i\}$, we define $\varphi_{i,j} \geq 1$ or more detailed:
\begin{equation*}
 \begin{aligned}
 \sum_{k=1}^{m_i} (C_i)_{kj} \cdot \istrue{\varphi_{i-1,k}} + (A_{i,1})_{kj} \cdot \modalitynumber{\varphi_{i-1,k}} +\\+ (A_{i,2})_{kj} \cdot \globmodalitynumber{\varphi_{i-1,k}} + (b_i)_j \geq 1.
 \end{aligned}
\end{equation*}

Once all layers have been encoded in this way, we define the final classification formula as
\begin{displaymath}
\varphi_\aGNN = a_1 \istrue{\varphi_{k,1}} + \dotsb + a_{n_k} \istrue{\varphi_{k,n_k}} \geq 1.
\end{displaymath}

Let $G, v$ be a pointed graph. The correctness of our translation follows directly from the following observations: all weights and biases in $\aGNN$ are integers, and the input vectors $x_0(u)$ assigned to nodes $u$ in $G$ are Boolean. Moreover, each layer applies a linear transformation followed by a pointwise truncated ReLU, which preserves the Boolean nature of the node features. It follows that the intermediate representations $x_i(v)$ remain in $\{0,1\}^{n_i}$ for all $i$. Consequently, each such feature vector can be expressed via a set of Boolean $\logicKsharpglobone$-formulas as constructed above.
Taken together, this ensures that the overall formula $\varphi_\aGNN$ faithfully simulates the GNN's computation.

It remains to argue that this construction can be carried out efficiently. Throughout, we represent the (sub)formulas using a shared DAG structure, avoiding duplication of equivalent subterms. This ensures that subformulas $\varphi_{i-1,k}$ can be reused without recomputation. For each layer, constructing all $\varphi_{i,j}$ requires at most $n_i \cdot m_i$ steps, plus the same order of additional operations to account for global aggregation terms. Since the number of layers, dimensions, and parameters are bounded by $|\aGNN|$, and each operation can be performed in constant or linear time, the total construction is polynomial in the size of $\aGNN$.
\end{proof}

\section{Description Logics With Cardinality Constraints}

\subsection{\ALCQ and $T_C$Boxes Consistency}
\label{sec:ALCQ}

\ALCQ is the Description Logic adding qualified number restrictions to the standard Description Logic \ALC, analogously to how Graded Modal Logic extends standard Modal Logic with graded modalities.

Let $N_C$ and $N_R$ be two non-intersecting sets of concept names, and role names respecively.
A concept name $A \in N_C$ is an \ALCQ concept expressions of $\ALCQ$. If $C$ is an \ALCQ concept expression, so is $\lnot C$. If $C_1$ and $C_2$ are \ALCQ concept expressions, then so is $C_1 \sqcap C_2$. If $C$ is an \ALCQ concept expression, $R \in N_R$, and $n \in \mathbb{N}$, then $\geq~n~R. C$ is an \ALCQ concept expression.

A \emph{cardinality restriction} of \ALCQ is 
is an expression of the form $(\geq~n~C)$ or $(\leq~n~C)$, where $C$ an \ALCQ concept expression and $n \in \mathbb{N}$.

An \ALCQ-$T_C$Box is a finite set of cardinality restrictions.

An \emph{interpretation} is a pair $I = (\Delta^I, \cdot^I)$, where $\Delta^I$ is a non-empty set of individuals, and $\cdot^I$ is a function such that: every $A \in N_C$ is mapped to $A^I \subseteq \Delta^I$, and every $R \in N_R$ is mapped to $R^I \subseteq \Delta^I \times \Delta^I$. Given an element of $d \in \Delta^I$, we define $R^I(d) = \{ d' \mid (d,d') \in R^I  \}$. An interpretation $I$ is extended to complex concept descriptions as follows: $(\lnot C)^I = \Delta^I \setminus C^I$; $(C_1 \sqcap C_2)^I = C_1^I \cap C_2^I$; and $(\geq~n~R. C)^I = \{d \mid |R^I(d) \cap C^I| \geq n\}$.

An interpretation $I$ satisfies the cardinality restriction $(\geq~n~C)$ iff $|C^I| \geq n$ and it satisfies the cardinality restriction $(\leq~n~C)$ iff $|C^I| \leq n$.
A $T_C$Box $TC$ is \emph{consistent} if there exists an interpretation that satisfies all the cardinality restrictions in $TC$.

\begin{figure}
    \centering
\usetikzlibrary{graphs,positioning}

\begin{tikzpicture}[
xscale=2, yscale=2, font=\tiny
]

\tikzset{nodeNE/.style={draw, inner sep = 1}}

\def\maxcoord{2.5}

\node (00) at (0, 0) {$(0,0)$};
\node (01) at (0, 1) {$(0,1)$};
\node (10) at (1, 0) {$(1,0)$};
\node (11) at (1, 1) {$(1,1)$};
\node (nn) at (\maxcoord, \maxcoord) {$(2^n-1, 2^n-1)$};
\node (0n) at (0, \maxcoord) {$(0, 2^n-1)$};
\node (n0) at (\maxcoord, 0) {$(2^n-1, 0)$};

\node[nodeNE] (N1) at (0, 0.5) {$N$};
\node[nodeNE] (E1) at (0.5, 0) {$E$};
\node[nodeNE] (N2) at (1, 0.5) {$N$};
\node[nodeNE] (E2) at (0.5, 1) {$E$};

\node[nodeNE] (Nn) at (0, \maxcoord + 0.5) {$N$};
\node[nodeNE] (Nn2) at (\maxcoord, \maxcoord + 0.5) {$N$};
\node[nodeNE] (En) at (\maxcoord + 0.5, 0) {$E$};
\node[nodeNE] (En2) at (\maxcoord + 0.5, \maxcoord) {$E$};

\draw[->] (00) edge (N1);
\draw[->] (00) edge (E1);
\draw [->](01) edge (E2);
\draw[->] (10) edge (N2);

\draw[->] (N1) edge (01);
\draw[->] (E1) edge (10);
\draw[->] (N2) edge (11);
\draw[->] (E2) edge (11);

\draw[dotted] (01) edge (0n);
\draw[dotted] (10) edge (n0);
\draw[dotted] (0n) edge (nn);
\draw[dotted] (n0) edge (nn);

\draw[dotted] (11) edge (1, \maxcoord);
\draw[dotted] (11) edge (\maxcoord, 1);

\draw[->] (0n) edge (Nn);
\draw[->] (nn) edge (Nn2);
\draw[->] (nn) edge (En2);
\draw[->] (n0) edge (En);

\draw[->] (Nn) edge [bend right] (00);
\draw[->] (En) edge [bend left] (00);
\draw[->] (Nn2) edge [bend right = 40] (n0);
\draw[->] (En2) edge [bend left] (0n);

\end{tikzpicture}

    \caption{Encoding a torus of exponential size with an \ALCQ-$T_C$Box with one role.}
    \label{fig:grid-hardness-ALCQ-TC}
\end{figure}

\begin{theorem}[\protect\cite{DBLP:journals/jair/Tobies00}]\label{th:complexity-ALCQ-TCBox-hardness}
Deciding the consistency of \ALCQ-$T_C$Boxes is NEXPTIME-hard.
\end{theorem}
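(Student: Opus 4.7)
The plan is to reduce from the NEXPTIME-complete tiling problem: given a domino system $\mathcal D = (D, V, H)$, an initial row $w_0 \ldots w_{n-1} \in D^n$, and $n$ in unary, decide whether the torus $\{0, \ldots, 2^n-1\}^2$ admits a compatible tiling extending $w$. The idea is to build in polynomial time an \ALCQ-$T_C$Box whose consistent models encode exactly such tilings, using the intermediate-node trick illustrated in Figure~\ref{fig:grid-hardness-ALCQ-TC} to simulate two orthogonal successor roles with a single role.

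First I would fix $2n$ concept names $X_0, \ldots, X_{n-1}, Y_0, \ldots, Y_{n-1}$ to binary-encode the coordinates, plus two concept names $N$ and $E$ marking ``north-intermediate'' and ``east-intermediate'' nodes, plus one concept $A_d$ per tile type $d \in D$, and one concept $\mathrm{Cell} := \lnot N \sqcap \lnot E$. Using global cardinality restrictions $(\geq 1 \, C)$, $(\leq 1 \, C)$ I would enforce that $N$, $E$, and $\mathrm{Cell}$ partition the domain, that each cell carries exactly one tile type, that each cell has exactly one $N$-successor and one $E$-successor (both intermediate), and that each intermediate node has exactly one cell-successor. The coordinate-arithmetic constraints that moving through an $N$-intermediate increments $Y$ modulo $2^n$ while preserving $X$ (and dually for $E$) are expressed exactly as in the formulas $\phi_{north}$ and $\phi_{east}$ of Theorem~\ref{th:hardess-quantlogic}, written here as $\ALCQ$ subsumptions inside the $T_C$Box using the universal concept $\top$ and the standard trick $(\geq 1 \top) \sqsubseteq \ldots$ to turn a local constraint into a global one.

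Next, I would force the grid to be fully populated and collision-free. A cell is identified by its bit-pattern concept $C_{i,j} := \bigsqcap_{k} X_k^{(i_k)} \sqcap \bigsqcap_{k} Y_k^{(j_k)}$ where $X_k^{(1)} = X_k$ and $X_k^{(0)} = \lnot X_k$. Adding $(\leq 1 \, C_{i,j} \sqcap \mathrm{Cell})$ for every $(i,j)$ is exponential in $n$, so instead I would impose the single cardinality restriction $(\leq 2^n \cdot 2^n \, \mathrm{Cell})$ together with the $E$- and $N$-successor functional constraints and an explicit starting cell $C_{0,0}$; together with the modular binary-increment constraints this forces the cell component of any model to be exactly the torus. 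Tiling compatibility is encoded by adding, for every $d \in D$, the subsumptions $A_d \sqcap \mathrm{Cell} \sqsubseteq \forall r.(E \to \forall r.\bigsqcup_{(d,d') \in H} A_{d'})$ and analogously for $V$ via $N$, and initial conditions by $C_{i,0} \sqsubseteq A_{w_i}$ for $i < n$.

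The main obstacle, as in Theorem~\ref{th:hardess-quantlogic}, is getting the size bound right: the torus must be both a lower and an upper bound on the number of cells, and this has to be done with a $T_C$Box whose size is polynomial in $n + |D| + |H| + |V|$. The upper bound comes for free from a single cardinality restriction $(\leq 2^n \cdot 2^n \, \mathrm{Cell})$ (note that $n$ is in unary but the bound is written in binary, which \ALCQ-$T_C$Boxes allow), and the lower bound comes from forcing every cell to have fresh-coordinate $N$- and $E$-successors combined with the pigeonhole effect of the upper bound. Correctness then follows exactly as in \cite[Corollary~3.3]{DBLP:journals/jair/Tobies00}: consistent models of the constructed $T_C$Box are in one-to-one correspondence with valid tilings of the $2^n \times 2^n$ torus extending $w$, yielding NEXPTIME-hardness.
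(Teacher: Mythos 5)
Your reduction is correct and is essentially the construction the paper uses: the statement itself is cited from Tobies, and the paper's own proof effort (Theorem~\ref{th:complexity-ALCQ-TCBox-hardness-one-role}, Figure~\ref{fig:grid-hardness-ALCQ-TC}) adapts Tobies' torus-tiling reduction with exactly your intermediate $N$/$E$-node trick, the same binary coordinate counters with modular-increment constraints, and the same $(\leq 2^n\cdot 2^n\;\mathrm{Cell})$ pigeonhole argument to pin the cell component to the torus. Note that by using a single role you in fact establish the slightly stronger one-role variant, which is what the paper needs downstream.
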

The proof can be slightly adapted to show that the result holds even when there is only one role.

Some abbreviations are useful. For every pair of concepts $C$ and $D$, $C \limp D$ stands for $\lnot C \sqcup D$.
For every concept $C$, role $R$, and non-negative integer $n$, we define:
$(\leq n~R.C) := \lnot (\geq (n+1)~R.C)$,
$(\forall~ R.C) := (\leq 0~R.\lnot C)$, 
$(\forall~C) := (\leq 0~\lnot C)$,
$(= n~R.C) := (\geq n~R. C) \sqcap (\leq n~R . C)$, and
$(= n~C) := (\geq n~ C) \sqcap (\leq n~ C)$.
\begin{theorem}\label{th:complexity-ALCQ-TCBox-hardness-one-role}
Deciding the consistency of \ALCQ-$T_C$Boxes is NEXPTIME-hard even if $|N_R| = 1$.
\end{theorem}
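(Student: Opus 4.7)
The plan is to adapt Tobies's reduction underlying Theorem~\ref{th:complexity-ALCQ-TCBox-hardness}, which encodes a $2^n \times 2^n$ torus using two roles $R_N$ and $R_E$, so that only a single role $R$ is needed. The key device is the same as in the proof of Theorem~\ref{th:hardess-quantlogic} (see Figure~\ref{fig:grid-hardness-quantlogic}): torus cells are represented by ``cell'' individuals, and the northern (resp.\ eastern) neighbour of a cell is reached by two $R$-steps through an intermediate individual distinguished by a fresh concept name $N$ (resp.\ $E$). This lets the direction of each $R$-step be recovered from the concept membership of its target, so that a single role suffices.

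Concretely, I would introduce concept names $N, E \in N_C$, set $C := \lnot N \sqcap \lnot E$, and add to the $T_C$Box axioms enforcing that $C$, $N$, $E$ partition the universe, that every $C$-individual has exactly one $R$-successor in $N$ and exactly one in $E$, and that every $N$- or $E$-individual has exactly one $R$-successor and that this successor lies in $C$. Any axiom $\forall R_N.\psi$ from Tobies's construction is then rewritten as $(\forall~C \limp \forall R.(N \limp \forall R.\psi))$, and similarly $\forall R_E.\psi$ becomes $(\forall~C \limp \forall R.(E \limp \forall R.\psi))$. The $2n$ coordinate concepts, their initial-cell fixing, their increment axioms across north/east transitions, the cardinality bound $(\leq~2^{2n}~C)$, and the mutual exclusion of distinct coordinate tuples are then ported verbatim using these two schematic abbreviations; this forces $C^I$ to be in bijection with $\{0,\ldots,2^n-1\}^2$. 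Tile concepts $T_d$ for each $d \in D$ are restricted to $C$, and the horizontal and vertical domino-compatibility constraints together with the initial-row condition are translated in the same way.

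The main obstacle is verifying that the direction intermediaries do not introduce spurious models. For this I would further restrict $N$- and $E$-individuals to carry no coordinate or tile information, via axioms $(\forall~N \limp \lnot C_0)$ (and analogously for $E$) for each coordinate concept $x_i$, $y_i$ and each tile concept $T_d$, and ensure via the $(=1~R.\cdot)$ restrictions above that the two-step composition from a $C$-individual through its unique $N$-intermediary really behaves as a functional ``north'' relation on $C$, and analogously for $E$. Once this is in place, contracting every $R \circ R$-path through a direction intermediary yields a bisimulation between the single-role model and a Tobies-style two-role model of the original $T_C$Box, so that the rewritten $T_C$Box is consistent iff the domino system tiles the $2^n \times 2^n$ torus. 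Since the rewriting is polynomial in the size of the original reduction, NEXPTIME-hardness is inherited.
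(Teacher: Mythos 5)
Your proposal is correct and follows essentially the same route as the paper: both introduce intermediate individuals marked by fresh concepts $N$ and $E$, force exactly-one-successor constraints so that a north/east step becomes a two-step $R$-composition guarded by $N$ or $E$, bound the number of cell individuals by $2^n\times 2^n$, and then port Tobies's torus/domino encoding verbatim under the rewritten $\forall north$ and $\forall east$ abbreviations. The extra hygiene axioms you add on the intermediaries are harmless but not needed in the paper's version, since all coordinate and tile constraints there are already guarded by $\lnot(N\sqcup E)$.
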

\begin{proof}
Let $next$ be the unique role in $N_R$. We use the atomic concepts $N$ to denote an individual `on the way north' and $E$ to denote an individual `on the way east'.
See Figure~\ref{fig:grid-hardness-ALCQ-TC}.

For every $n \in \mathbb N$, we define
the following \ALCQ-$T_C$Box.
\[
\begin{array}{llll}
T_n =  \{ 
& (\forall~ \lnot(N \sqcup E) \limp (= 1~ next. N)), \\
& (\forall~ \lnot(N \sqcup E) \limp (= 1~ next. E))\\
& (\forall~ N \limp (=1~ next. \top)),\\ 
& (\forall~ E \limp (=1 ~next. \top)) \\
& (= 1~ C_{(0,0)}),\\
& (= 1~ C_{(2^n-1, 2^n-1)})\\
& (\forall~ \lnot(N \sqcup E) \limp D_{east}),\\ 
& (\forall~ \lnot(N \sqcup E) \limp D_{north})\\
& (\leq~ (2^n \times 2^n)~ \lnot (N \sqcup E)), \\
& (\leq~ (2^n\times 2^n)~ N),\\
& (\leq~ (2^n \times 2^n)~ E)
\quad \}
\end{array}
\]
such that the concepts
$C_{(0,0)}$, $C_{(2^n-1,2^n-1)}$ are defined like in \cite[Figure~3]{DBLP:journals/jair/Tobies00}, and so are the concepts $D_{north}$ and $D_{east}$, except that for every concept $C$,
$\forall east. C$ now stands for $\forall next. (E \limp \forall next. C)$ and 
$\forall north. C$ now stands for $\forall next. (N \limp \forall next. C)$.


The problem of deciding whether a domino system $\mathcal{D} = (D,V,H)$, given an initial condition $w_0 \ldots w_{n-1}$, can tile a torus of exponential size can be reduced to the problem of consistency of \ALCQ-$T_C$Boxes, checking the consistency of $T(n,\mathcal{D},w) = T_n \cup T_\mathcal{D} \cup T_w$, where $T_n$ is as above, $T_\mathcal{D}$ encodes the domino system, and $T_w$ encodes the initial condition as follows.

\begin{equation*}
\scalebox{0.9}{$
\begin{aligned}
T_{\mathcal D} = \{ \;&
(\forall\, \lnot (N \sqcup E) \limp (\bigsqcup_{d \in D} C_d)), \\
&(\forall\, \lnot (N \sqcup E) \limp
  (\bigsqcap_{d \in D}
   \bigsqcap_{d' \in D \setminus \{d\}}
   \lnot (C_d \sqcap C_{d'}))), \\
&(\forall\, \bigsqcap_{d \in D}
   (C_d \limp (\forall\, east.\,
    \bigsqcup_{(d,d') \in H} C_{d'}))), \\
&(\forall\, \bigsqcap_{d \in D}
   (C_d \limp (\forall\, north.\,
    \bigsqcup_{(d,d') \in V} C_{d'})))
\;\}
\end{aligned}
$}
\end{equation*}

\[
 T_w = \{ 
(\forall ~ C_{(0,0)} \limp C_{w_0}),
\ldots ,
(\forall ~ C_{(n-1,0)} \limp C_{w_{n-1}}) 
 \}
\]
The rest of the proof remains unchanged. 
\end{proof}

\subsection{\QFBAPA}
\label{sec:qfbapa}
We assume that we have a set of \emph{set variables} and a set of \emph{integer constants}.

A \QFBAPA \cite{kuncak-rinard-QFBAPA} \emph{formula} is a Boolean combination ($\land$, $\lor$, $\lnot$) of \emph{set constraints} and \emph{cardinality constraints}.

A \emph{set term} is a Boolean combination ($\cup$, $\cap$, $\overline{\phantom{\,}\cdot\phantom{\,}}$) of \emph{set variables}, and \emph{set constants} $\universe$, and $\emptyset$.
If $S$ is a set term, then its cardinality $|S|$ is an \emph{arithmetic expressions}. Integer constants are also arithmetic expressions. If $T_1$ and $T_2$ are arithmetic expressions, so is $T_1 + T_2$. If $T$ is an arithmetic expression and $c$ is an integer constant, then $c \cdot T$ is an arithmetic expression. 

Given two set terms $B_1$ and $B_2$, the expressions $B_1 \subseteq B_2$ and $B_1 = B_2$ are \emph{set constraints}.
Given two arithmetic expressions $T_1$ and $T_2$, the expressions $T_1 < T_2$ and $T_1 = T_2$ are \emph{cardinality constraints}.
Given an integer constant $c$ and an arithmetic expression $T$, the expression $c \ dvd \ T$ is a \emph{cardinality constraint}.

A \emph{substitution} $\semanticsQFBAPA$ assigns
$\emptyset$ to the set constant $\emptyset$,
a finite set $\semanticsQFBAPA(\universe)$ to the set constant $\universe$, and a subset of $\semanticsQFBAPA(\universe)$ to every set variable.
A substitution is first extended to set terms by applying the standard set-theoretic semantics of the Boolean operations.
It is further extended to map arithmetic expressions to integers, in such that way that 
every integer constant $c$ is mapped to $c$,
for every set term $B$, the arithmetic expression $|B|$ is mapped to the cardinality of the set $\semanticsQFBAPA(B)$, and the standard semantics for addition and multiplication is applied.

The substitution $\semanticsQFBAPA$ \emph{(\QFBAPA) satisfies} the set constraint $B_1 \subseteq B_2$ if $\semanticsQFBAPA(B_1) \subseteq \sigma(B_2)$, the set constraint $B_1 = B_2$ if $\semanticsQFBAPA(B_1) = \semanticsQFBAPA(B_2)$, the cardinality constraint $T_1 < T_2$ if $\semanticsQFBAPA(T_1) < \semanticsQFBAPA(T_2)$, the cardinality constraint $T_1 = T_2$  if $\semanticsQFBAPA(T_1) = \semanticsQFBAPA(T_2)$,
and the cardinality constraint $c \ dvd \ T$  if $c$ divides $\semanticsQFBAPA(T)$.

\subsection{\ALCSCCpp: Description Logics
With Global and Local Cardinality Constraints}
\label{sec:ALCSCCpp}

The Description Logic \ALCSCCpp~\cite{DBLP:conf/ecai/BaaderBR20} extends the basic Description Logic \ALC~\cite{baader2017introduction} with concepts that capture cardinality and set constraints expressed in the quantifier-free fragment of Boolean Algebra with Presburger Arithmetic (\QFBAPA)
~\cite{kuncak-rinard-QFBAPA}.


We can now define the syntax of \ALCSCCpp concept descriptions and their semantics.
Let $N_C$ be a set of concept names, and $N_R$ be a set of role names, such that $N_C \cap N_R = \emptyset$.
Every $A \in N_C$ is a \emph{concept description} of \ALCSCCpp. Moreover, if $C$, $C_1$, $C_2$, $\ldots$ are \emph{concept descriptions} of \ALCSCCpp, then so are: $C_1 \sqcap C_2$, $C_1 \sqcup C_2$, $\lnot C$, and $\QFBAPAsat(\QFBAPAcons)$, where $\QFBAPAcons$ is a set or \QFBAPA cardinality constraints, with elements of $N_R$ and concept descriptions $C_1$, $C_2$, $\ldots$ used in place of set variables.

A \emph{finite interpretation} is a pair $I = (\Delta^I, \cdot^I)$, where $\Delta^I$ is a finite non-empty set of individuals, and $\cdot^I$ is a function such that: every $A \in N_C$ is mapped to $A^I \subseteq \Delta^I$, and every $R \in N_R$ is mapped to $R^I \subseteq \Delta^I \times \Delta^I$. Given an element of $d \in \Delta^I$, we define $R^I(d) = \{ d' \mid (d,d') \in R^I  \}$.

The semantics of the language of \ALCSCCpp makes use \QFBAPA substitutions to interpret \QFBAPA constraints in terms of \ALCSCCpp finite interpretations.
Given an element $d \in \Delta^I$, we can define the substitution $\semanticsQFBAPA_d^I$ in such a way that:
$\semanticsQFBAPA_d^I(\universe) = \Delta^I$,
$\semanticsQFBAPA_d^I(\emptyset) = \emptyset$,
and $A \in N_C$ and $R \in N_R$ are considered \QFBAPA set variables and substituted as
$\semanticsQFBAPA_d^I(A) = A^I$, and
$\semanticsQFBAPA_d^I(R) = R^I(d)$.

The finite interpretation $I$ and the \QFBAPA substitutions $\semanticsQFBAPA_d^I$ are mutually extended to complex expressions such that: $\semanticsQFBAPA^I_d(C_1 \sqcap C_2) = (C_1 \sqcap C_2)^I = C_1^I \cap C_2^I$;
$\semanticsQFBAPA^I_d(C_1 \sqcup C_2) = (C_1 \sqcup C_2)^I = C_1^I \cup C_2^I$;
$\semanticsQFBAPA^I_d(\lnot C) = (\lnot C)^I = \Delta^I \setminus C^I$; and
$\semanticsQFBAPA^I_d(\QFBAPAsat(\QFBAPAcons)) = (\QFBAPAsat(\QFBAPAcons))^I = \{ d' \in \Delta^I \mid \semanticsQFBAPA^I_{d'} \text{ (\QFBAPA) satisfies } \QFBAPAcons \}$.

\begin{definition}
The \ALCSCCpp concept description $C$ is \emph{satisfiable} if there is a finite interpretation $I$ such
that $C^I \not = \emptyset$.
\end{definition}

\begin{theorem}[\protect\cite{DBLP:conf/ecai/BaaderBR20}]\label{th:complexity-ALCSCCpp}
    The problem of deciding whether an \ALCSCCpp concept description is satisfiable is NEXPTIME-complete.
\end{theorem}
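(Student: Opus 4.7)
The plan is to establish both a NEXPTIME upper bound and a matching NEXPTIME lower bound for \ALCSCCpp concept satisfiability, using techniques already developed in this paper for \quantlogic on the upper side and a reduction from \ALCQ-$T_C$Box consistency on the lower side.

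For the upper bound, I would adapt the Hintikka-set methodology from \Cref{section:upperbound}. Given a concept description $C_0$, let $E(C_0)$ be the set of syntactic subconcepts of $C_0$, together with all subconcepts appearing inside \QFBAPA constraints $\QFBAPAsat(\QFBAPAcons)$. A \emph{type} $\tau \subseteq E(C_0)$ is a maximally consistent choice specifying, for every subconcept, whether it holds at the element in question, respecting Boolean connectives. There are at most $2^{|C_0|}$ such types. A non-deterministic algorithm first guesses the set $\mathcal{T}$ of types that will be realized in the interpretation and a distinguished type $\tau_0 \in \mathcal{T}$ (the one containing $C_0$). For each pair $(\tau, \tau') \in \mathcal{T}^2$, it then introduces a \QFBAPA set variable $X_{\tau,\tau',R}$ intuitively standing for the set of $R$-successors of type $\tau'$ of an element of type $\tau$, together with set variables $Y_\tau$ standing for the set of $\tau$-elements in $\Delta^I$. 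The guessed structure must satisfy: (i) the $Y_\tau$ partition the universe and each $Y_{\tau}$ with $\tau \in \mathcal{T}$ is non-empty, $Y_{\tau_0}$ in particular; (ii) for every $\QFBAPAsat(\QFBAPAcons) \in E(C_0)$ and every type $\tau$, if $\QFBAPAsat(\QFBAPAcons) \in \tau$ (resp.\ $\notin \tau$), then $\QFBAPAcons$ (resp.\ $\lnot\QFBAPAcons$) holds under the substitution that maps each $R \in N_R$ to $\bigcup_{\tau'} X_{\tau,\tau',R}$, each concept $D \in E(C_0)$ to $\bigcup_{\tau' \ni D} Y_{\tau'}$, and $\universe$ to $\bigcup_\tau Y_\tau$. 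This whole set of constraints is a single exponential-size \QFBAPA formula, which by Kuncak--Rinard is in NP, yielding an overall NEXPTIME algorithm.

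For the lower bound, I would reduce from \ALCQ-$T_C$Box consistency, which is NEXPTIME-hard by \Cref{th:complexity-ALCQ-TCBox-hardness}. Given an \ALCQ-$T_C$Box $\mathcal{T} = \{\gamma_1,\dots,\gamma_k\}$ whose cardinality restrictions have the forms $(\geq n\,C)$ or $(\leq n\,C)$, I translate it into a single \ALCSCCpp concept description
\[
  C_{\mathcal{T}} \;:=\; \QFBAPAsat(\QFBAPAcons_{\mathcal{T}}) \,\sqcap\, \QFBAPAsat\bigl(|\universe| \geq 1\bigr),
\]
where $\QFBAPAcons_{\mathcal{T}}$ is the conjunction of $|C_i| \geq n_i$ (resp.\ $|C_i| \leq n_i$) for each cardinality restriction, each $C_i$ treated as a \QFBAPA set variable representing its extension. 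Qualified number restrictions $(\geq m\,R.D)$ appearing inside the $C_i$ are rewritten as the \ALCSCCpp concepts $\QFBAPAsat(|R \cap D| \geq m)$, using that $R$ denotes the set of $R$-successors of the current element in the \QFBAPA substitution $\sigma_d^I$. A standard induction on concept structure shows $C_{\mathcal{T}}^I \neq \emptyset$ iff $I$ satisfies $\mathcal{T}$, so $\mathcal{T}$ is consistent iff $C_{\mathcal{T}}$ is \ALCSCCpp-satisfiable. The translation is polynomial.

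The main obstacle is the upper bound: one must argue carefully that the Hintikka/type guess together with the \QFBAPA encoding is faithful, in particular that the set variables $X_{\tau,\tau',R}$ can be simultaneously realised in a single finite interpretation (rather than merely being locally consistent at each type). This is exactly the subtlety handled in \cite{DBLP:conf/ecai/BaaderBR20}: one needs to show that any satisfying \QFBAPA assignment of non-negative integer cardinalities can be unfolded into an actual interpretation, by instantiating the required number of $R$-successors of each $(\tau,\tau')$-pair and verifying that no cardinality constraint in $\QFBAPAcons_{\mathcal{T}}$ is violated. The lower-bound reduction, by contrast, is essentially a syntactic rewriting once \Cref{th:complexity-ALCQ-TCBox-hardness} is available.
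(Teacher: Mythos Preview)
The paper does not prove this theorem at all: it is stated with the attribution \cite{DBLP:conf/ecai/BaaderBR20} and used as a black box in the proof of \Cref{th:Ktree-NEXPTIME-complete}. There is therefore no ``paper's own proof'' to compare your proposal against.

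That said, your sketch is broadly in the spirit of the original Baader--Bednarczyk--Rudolph argument: guess a set of realised types, encode the local and global cardinality constraints as a single exponential-size \QFBAPA instance, and appeal to the NP bound for \QFBAPA. Two remarks. First, your variables $X_{\tau,\tau',R}$ are intended to capture, for \emph{every} element of type $\tau$, the set of its $R$-successors of type $\tau'$; but different $\tau$-elements may well have different successor profiles, so a single set variable per $(\tau,\tau',R)$ is not enough in general. The original proof handles this by working with cardinalities (how many $R$-successors of each type a $\tau$-element needs) rather than with sets shared across all $\tau$-elements, and by arguing that a uniform profile suffices. Second, for the lower bound you need finite-model reasoning on the \ALCQ side to match the finite-interpretation semantics of \ALCSCCpp; this is fine here because the tiling reduction underlying \Cref{th:complexity-ALCQ-TCBox-hardness} already forces finite models, but it should be said. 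Your reduction itself (rewriting $(\geq m\,R.D)$ as $\QFBAPAsat(|R\cap D|\geq m)$ and cardinality restrictions as global \QFBAPA constraints) is correct.
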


\section{Complexity of the Satisfiability of \logicKsharpglobone and Its Implications for ACR-GNN Verification}
\label{appendixsection:NEXPTIMEcompletewithintegers}

In this section, we establish the complexity of reasoning with \logicKsharpglobone.

Instrumentally, we first show that every \logicKsharpglobone formula can be translated into a \logicKsharpglobone formula that is equi-satisfiable, and has a tree representation of size at most polynomial in the size of the original formula.
An analogous result was obtained in \cite{NunnSST24} for \logicKsharpone. It can be shown using a technique reminiscent of \cite{Tseitin1983} and consisting in factorizing subformulas that are reused in the DAG by introducing a fresh proposition that is made equivalent. Instead of reusing a `possibly large' subformula, a formula then reuses the equivalent `small' atomic proposition.
\begin{lemma}\label{lem:Ktree-K}
    The satisfiability problem of \logicKsharpglobone reduces to the satisfiability of \logicKsharpglobone with tree formulas in polynomial time.
\end{lemma}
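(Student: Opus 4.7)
The plan is to adapt the Tseitin-style factorization used by \cite{NunnSST24} for $\logicKsharpone$ to the language with global counting. Given a DAG representation of $\varphi$ of polynomial size $n$, I will build an equi-satisfiable formula $\varphi'$ whose DAG is actually a tree, in polynomial time. The key idea is to attach a fresh atomic proposition to each shared subformula and force it to hold exactly where that subformula holds, at every vertex of the graph, by means of the definable global modality $\globalmodality \chi := (-1)\times \globmodalitynumber(\lnot \chi) \geq 0$.

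Concretely, I enumerate the boolean-typed nodes $\psi_1, \ldots, \psi_k$ of the DAG in topological order (with $\psi_k = \varphi$) and associate with each a fresh atom $p_i$ not occurring in $\varphi$. For every $\psi_i$ I define its \emph{local unfolding} $\psi_i^\sharp$ as $\psi_i$ in which every maximal strict subformula has been replaced by the corresponding fresh proposition $p_j$. The remaining source of DAG-sharing sits inside the numerical subexpressions: I eliminate it by distributing every scalar product $c \times (\NTexpression_1 + \NTexpression_2)$ and collecting equal atoms, so that each expression appearing in some $\psi_i^\sharp$ becomes a flat linear combination of terms of the form $\istrue{p_j}$, $\modalitynumber{p_j}$, or $\globmodalitynumber{p_j}$. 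Since the formulas inside $\istrue{\cdot}$, $\modalitynumber{\cdot}$, $\globmodalitynumber{\cdot}$ are now fresh propositions of constant size, each $\psi_i^\sharp$ has size polynomial in~$n$.

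The output is
\[
\varphi' \;:=\; p_k \;\land\; \bigwedge_{i=1}^{k} \globalmodality\bigl(p_i \leftrightarrow \psi_i^\sharp\bigr),
\]
which is tree-shaped and of polynomial size, and it is produced in polynomial time. For soundness, any model $(G,u) \models \varphi$ is extended to a labelling that sets $p_i$ true exactly at the vertices satisfying $\psi_i$; this validates all equivalence axioms by construction and makes $p_k$ true at $u$. For completeness, in a model of $\varphi'$, a topological induction on $i$ uses the global equivalence axioms to prove that $p_i$ coincides with $\psi_i$ at every vertex, so that $p_k$ being true at the pointed vertex transfers to $\varphi$ being satisfied there.

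The main obstacle I expect is the handling of shared \emph{numerical} subexpressions: unlike the purely propositional fragment, $\logicKsharpglobone$ has no native mechanism for introducing fresh integer-valued variables, which is why the normalization step is necessary. Care must be taken so that distributing scalar products over sums does not cause a super-polynomial blow-up; this works precisely because the expressions are \emph{linear} and each operand of $+$ eventually reduces to a term whose immediate formula-child is a fresh atomic proposition of constant size, bounding the total work by a polynomial in~$n$.
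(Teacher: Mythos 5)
Your proposal is correct and follows essentially the same route as the paper: a Tseitin-style factorization that introduces a fresh proposition for each shared subformula and enforces the definitional equivalences at every vertex via the definable global modality $(-1)\times\globmodalitynumber(\lnot\,\cdot)\geq 0$, yielding a polynomial-size tree formula equi-satisfiable with the original. Your explicit flattening of shared numerical subexpressions into linear combinations over terms $\istrue{p_j}$, $\modalitynumber p_j$, $\globmodalitynumber p_j$ is a reasonable refinement of a point the paper's expression translation leaves implicit, but it does not change the overall argument.
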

\begin{proof}
Let $\phi$ be a \logicKsharpglobone formula represented as a DAG.
%
%
For every subformula $\psi$ (i.e., for every node in the DAG representation of $\phi$), we introduce a fresh atomic proposition $p_\psi$. We can capture the meaning of these new atomic propositions with the formula $\Phi := \bigwedge_{\psi \text{ node in the DAG}} sem(\psi)$ where:
\begin{align*}
    sem(\psi \lor \chi) & := p_{\psi \lor \chi} \lequiv (p_\psi \lor p_\chi) \\
    sem(\lnot \psi) & := p_{\lnot \psi} \lequiv \lnot p_\psi\\
    sem(\NTexpression \geq 0) & := p_{\NTexpression \geq 0} \lequiv \NTexpression' \geq 0
\end{align*}
\[
\begin{array}{ccc}
    (c)'  := c  &  (\NTexpression_1 + \NTexpression_2)' := \NTexpression_1' + \NTexpression_2' & (c \times \NTexpression)' := c \times \NTexpression'\\
    (\istrue \psi)'  := \istrue p_{\psi} & (\modalitynumber \psi)'  := \modalitynumber p_{\psi} & (\globmodalitynumber \psi)' := \globmodalitynumber p_{\psi}
\end{array}
\]
%
Now, define $\phi_t := p_\phi \land \globalmodality \Phi$, where $\globalmodality\Phi :=  (-1) \times \globmodalitynumber (\lnot \Phi) \geq 0$, enforcing the truth of $\Phi$ in every vertex.
The size of its tree representation is polynomial in the size of $\phi$. Moreover, $\phi_t$ is satisfiable iff $\phi$ is satisfiable.

\end{proof}


\begin{theorem}\label{th:Ktree-NEXPTIME-complete}
    The satisfiability problem of $\logicKsharpglobone$ with tree formulas is NEXPTIME-complete.
\end{theorem}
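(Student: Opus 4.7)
\textbf{Plan for Theorem~\ref{th:Ktree-NEXPTIME-complete}.} The plan is to prove both the upper and the matching lower bound. For the NEXPTIME upper bound, I would adapt the Hintikka-set-plus-\QFBAPA reduction from Section~\ref{section:upperbound} to the integer-valued setting. Unlike \quantlogic, where a Hintikka set must pin down a concrete numerical value in $\setnumbers$ for every subexpression, here the atomic units are threshold formulas $\NTexpression \geq 0$, so a Hintikka set reduces to a Boolean-consistent subset $H$ of the subformulas of $\phi$, of which there are at most $2^{|\phi|}$. For each such $H$ I introduce a set variable $X_H$ for the class of vertices whose type is $H$, and, for each $H$, a successor-multiset variable $S_H$, exactly as in Section~\ref{section:upperbound}. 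The partition and bridging axioms mirror Formulas~(\ref{form:reduc1}) and (\ref{form:reduc2}); the local-counting semantics is captured by a variant of (\ref{eq:qfbapaaggregation}) in which each subterm $k'_j \cdot \modalitynumber \psi_j$ at an $H$-vertex is encoded as $k'_j \cdot |S_H \cap \bigcup_{H' : \psi_j \in H'} X_{H'}|$, the global-counting semantics by a variant of (\ref{eq:qfbapaaggregationglob}) using $|\bigcup_{H' : \psi_j \in H'} X_{H'}|$, and each $\istrue{\psi_j}$ contribution by a constant determined by $H$. Finally, for every Hintikka set $H$ and every threshold $(\NTexpression \geq 0)$ in $\phi$, I require $X_H \neq \emptyset$ to imply the corresponding \QFBAPA linear constraint when $(\NTexpression \geq 0) \in H$, and its strict negation otherwise. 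The resulting \QFBAPA formula has single-exponential size in $|\phi|$, and since \QFBAPA is in NP by Kuncak and Rinard, overall satisfiability lies in NEXPTIME.

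For NEXPTIME-hardness I would reduce from the consistency problem of \ALCQ-$T_C$Boxes with a single role, shown NEXPTIME-hard in Theorem~\ref{th:complexity-ALCQ-TCBox-hardness-one-role}. Each concept name $A$ becomes an atomic proposition $p_A$; Boolean connectives translate literally; the qualified number restriction $\geq n\, R.C$ is rendered as $\modalitynumber \phi_C \geq n$; and every cardinality restriction $(\geq n\, C)$ (resp.\ $(\leq n\, C)$) becomes $\globmodalitynumber \phi_C \geq n$ (resp.\ $n - \globmodalitynumber \phi_C \geq 0$). Taking the conjunction $\psi_{TC}$ of the translated restrictions, I observe that $\psi_{TC}$ uses only global quantities and hence has the same truth value at every vertex; consequently $TC$ is consistent iff $\psi_{TC}$ is satisfiable in \logicKsharpglobone. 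The translation is polynomial and naturally produces a tree formula, yielding NEXPTIME-hardness for the tree fragment. As an alternative, one could directly adapt the tiling reduction of Theorem~\ref{th:hardess-quantlogic}, using $\globalmodality$ to enforce the torus-wide constraints and $\modalitynumber$ for the local north/east edges; this path is a bit more verbose but reuses machinery already deployed in the paper.

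The main obstacle I anticipate is size control in the upper bound: the reduction must yield a \QFBAPA formula of at most single-exponential size, not doubly-exponential. A naive encoding produces, for each of the $2^{|\phi|}$ Hintikka sets and each threshold subformula, a constraint whose cardinality terms range again over all Hintikka sets and all $\psi_j$'s, which risks $2^{O(|\phi|^2)}$ blowup. I would avoid this by giving names once and for all to the set expressions $\bigcup_{H' : \psi_j \in H'} X_{H'}$ and reusing these fresh variables across constraints, so that the total encoding stays at $2^{O(|\phi|)}$. A secondary subtlety is that, because integer counts are unbounded, when $(\NTexpression \geq 0) \notin H$ one must explicitly assert the strict inequality $\NTexpression < 0$ rather than a cardinality complement; this is expressible in \QFBAPA without affecting the complexity but must be written out carefully.
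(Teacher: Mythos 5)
Your proposal is correct, and the hardness half coincides with the paper's: the paper also reduces from consistency of \ALCQ-$T_C$Boxes with a single role (its Theorem~\ref{th:complexity-ALCQ-TCBox-hardness-one-role}), via essentially the same literal translation $\transTK$ that you describe, noting that the reduction is linear and yields a tree formula.

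Where you genuinely diverge is the upper bound. The paper does \emph{not} redo a Hintikka-set construction for \logicKsharpglobone; instead it gives a \emph{polynomial} translation of a tree formula $\phi_0$ into an \ALCSCCpp concept description $C_{\phi_0}$ and invokes the known NEXPTIME decision procedure for \ALCSCCpp of Baader et al. The only nontrivial gadget there is the encoding of $\istrue\phi$ as the cardinality $|ZOO_{\num{\phi}}|$ of a ``zero-or-one'' role constrained to have exactly one successor precisely when $\phi$ holds; $\modalitynumber\phi$ and $\globmodalitynumber\phi$ translate directly to $|R\cap\transKA(\phi)|$ and $|\transKA(\phi)|$. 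Your route instead inlines what \ALCSCCpp's decision procedure does internally: an exponential-size family of Boolean Hintikka types $X_H$ with successor-set variables $S_H$, bridging unions $\bigcup_{H'\ni\psi_j}X_{H'}$ named once to keep the encoding at $2^{O(|\phi|)}$, and per-type linear \QFBAPA constraints (asserting $\NTexpression\geq 0$ or $\NTexpression\leq -1$ according to membership in $H$), followed by the NP procedure for \QFBAPA. This is sound --- over $\setZ$ the Hintikka types really are just Boolean-consistent subsets, $\istrue{\psi_j}$ is a constant determined by $H$, and the $S_H$ trick carries over with the same caveat as in Section~\ref{section:upperbound} that all $H$-vertices are given a common successor set in the reconstructed model --- but it is considerably longer than the paper's argument and re-proves a result the paper simply cites. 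What your version buys is self-containedness (no dependence on \ALCSCCpp) and uniformity with the quantized case; what the paper's version buys is brevity and a polynomial rather than exponential reduction. Both correctly yield NEXPTIME.
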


\begin{proof}
For membership, we translate the problem into the NEXPTIME-complete problem of concept description satisfiability in the Description Logics with Global and Local Cardinality Constraints~\cite{DBLP:conf/ecai/BaaderBR20}, noted \ALCSCCpp. The Description Logic \ALCSCCpp uses the Boolean Algebra with Presburger Arithmetic~\cite{kuncak-rinard-QFBAPA}, noted \QFBAPA, to formalize cardinality constraints.
See Section~\ref{sec:ALCSCCpp} for a presentation of \ALCSCCpp and \QFBAPA.

Let $\phi_0$ be a \logicKsharpglobone formula.
%


%
For every proposition $p$ occurring in $\phi_0$, let $A_{\num{p}}$ be an \ALCSCCpp concept name. Let $R$ be an \ALCSCCpp role name. For every occurrence of $\istrue \phi$ in $\phi_0$, let $ZOO_{\num{\phi}}$ be an \ALCSCCpp role name. 
$ZOO$-roles stand for `zero or one'.
The rationale for introducing $ZOO$-roles is to be able to capture the value of $\istrue \phi$ in \ALCSCCpp making it equal to the number of successors of the role $ZOO_{\num{\phi}}$ which can then be used in \QFBAPA constraints.
A similar trick was used, in another context, in \cite{GallianiKT23}.
Here, we enforce this with the \QFBAPA constraint 
\begin{equation*}
\scalebox{0.8}{$
\begin{aligned}\QFBAPAcons_{0} = \bigwedge_{\istrue \phi \in \phi_0} 
\big ( (|ZOO_{\num{\phi}}| = 0 \lor |ZOO_{\num{\phi}}| = 1) \land  \transKA(\phi) = \QFBAPAsat(|ZOO_{\num{\phi}}| = 1) \big)
\end{aligned}
$}
\end{equation*}

which states that $ZOO_{\num{\phi}}$ has zero or one successor, and has one successor exactly when (the translation of) $\phi$ is true.
The concept descriptions $\transKA(\phi)$ and arithmetic expressions $\transKA(\NTexpression)$ are defined inductively as follows:

\[
\begin{array}{lcl}
\transKA(p) & = & A_{\num{p}} \\
\transKA(\lnot \phi) & = & \lnot \transKA(\phi)\\
\transKA(\phi \lor \psi) & = & \transKA(\phi) \sqcup \transKA(\psi)\\
\transKA(\NTexpression \geq 0) & = & \QFBAPAsat(-1 < \transKA(\NTexpression))\\
\transKA(c) & = & c\\
\transKA(\NTexpression_1 + \NTexpression_2) & = & \transKA(\NTexpression_1) + \transKA(\NTexpression_2)\\
\transKA(c \times \NTexpression) & = & \transKA(c \cdot \NTexpression)\\
\transKA(\modalitynumber \phi) & = & |R \cap \transKA(\phi)|\\
\transKA(\istrue\phi) & = & |ZOO_{\num{\phi}}|\\
\transKA(\globmodalitynumber \phi) & = & |\transKA(\phi)|\\
\end{array}
\]
Finally, we define the \ALCSCCpp concept description $C_{\phi_0} = \transKA(\phi_0) \sqcap \QFBAPAsat(\QFBAPAcons_{0})$.

\begin{claim}\label{claim:trans-Ksharpglobone-ALCSCCpp}
    The concept description $C_{\phi_0}$ is \ALCSCCpp-satisfiable iff the formula $\phi_0$ is \logicKsharpglobone-satisfiable.
    Moreover, the concept description $C_{\phi_0}$ has size polynomial in the size of $\phi_0$.
\end{claim}
\begin{proof}
    From right to left, suppose that $\phi_0$ is \logicKsharpglobone-satisfiable. It means that there is a pointed graph $(G,u)$ where $G = (V,E)$ and $u \in V$, such that $(G,u) \models \phi_0$. Let $I_0 = (\Delta^{I_0}, \cdot^{I_0})$ be the \ALCSCCpp interpretation over $N_C$ and $N_R$, such that $N_C = \{A_{\num{p}} \mid p \text{ a proposition in } \phi_0\}$, $N_R = \{R \} \cup \{ZOO_{\num{\phi}} \mid \istrue \phi \in \phi_0\}$, $\Delta^{I_0} = V$, $A^{I_0}_{\num{p}} = \{v \mid v \in \setvertices, (G,v) \models p\}$ for every $p$ in $\phi_0$,
    $R^{I_0} = \setedges$,
    $ZOO^{I_0}_{\num{\phi}} = \{(v,v) \mid v \in \setvertices, (G,v) \models \phi\}$ for every $\istrue \phi$ in $\phi_0$. We can show that $u \in C_{\phi_0}^{I_0}$.
    Basically $I^0$ is like $G$ with the addition of adequately looping $ZOO$-roles. An individual in $\Delta^{I_0}$ has exactly one $ZOO_{\num{\phi}}$-successor (itself), exactly when $\phi$ is true, and no successor otherwise; $A_p$ is true exactly where $p$ is true, and the role $R$ corresponds exactly to $\setedges$.

    From left to right, suppose that $C_{\phi_0}$ is \ALCSCCpp-satisfiable. It means that there is an \ALCSCCpp finite interpretation $I_0 = (\Delta^{I_0}, \cdot^{I_0})$ and an individual $d \in \Delta^{I_0}$ such that $d \in C_{\phi_0}^{I_0}$.
    Let $G = (\setvertices, \setedges)$ be a graph such that $\setvertices = \Delta^{I_0}$, $\setedges = R^{I_0}$, and $\labeling(d)(p) = 1$ iff $d \in A_{\num{p}}^{I_0}$. We can show that $(G,d) \models \phi_0$.
    
    \medskip

    Since there are at most $|\phi_0|$ subformulas in $\phi_0$, the representation of $ZOO_{\num{\phi}}$ for every subformula $\phi$ of $\phi_0$ can be done in size $\log_2(|\phi_0|)$. For every formula $\phi$, the size of the concept description $\transKA(\phi)$ is polynomial (at most $O(n\log(n))$).
    The overall size of $\transKA(\phi_0)$ is polynomial in the size of $\phi_0$, and so is the size of $\QFBAPAsat(\NTexpression_0)$ (at most $O(n^2(\log(n))^2$).
\end{proof}
The NEXPTIME-membership follows from Claim~\ref{claim:trans-Ksharpglobone-ALCSCCpp} and the fact that the concept satisfiability problem in \ALCSCCpp is in NEXPTIME (Theorem~\ref{th:complexity-ALCSCCpp}).

\medskip

    For the hardness, we reduce the problem of consistency of \ALCQ-$T_C$Boxes
    which is NEXPTIME-hard~\cite[Corollary~3.9]{DBLP:journals/jair/Tobies00}. See Section~\ref{sec:ALCQ} and 
    Theorem~\ref{th:complexity-ALCQ-TCBox-hardness-one-role} that slightly adapts Tobies' proof to show that the problem is hard even with only one role.

We define the translation $\transTK$ from the set of \ALCQ concept expressions and \ALCQ cardinality constraints, with only one role $R$.
\[
\begin{array}{lcl}
\transTK(A) & = & p_A\\
\transTK(\lnot C) & = & \lnot \transTK(C)\\
\transTK(C_1 \sqcup C_2) & = & \transTK(C_1) \lor \transTK(C_2)\\
\transTK(\geq~n~R. C) & = & \modalitynumber \transTK(C) + (-1) \times n \geq 0\\
\transTK(\geq~n~C) & = & \globmodalitynumber \transTK(C) + (-1) \times n \geq 0\\
\transTK(\leq~n~C) & = & (-1) \times \globmodalitynumber \transTK(C) + n \geq 0\\
\end{array}
\]

It is routine to check the following claim.
\begin{claim}
Let $TC$ be an \ALCQ-$T_C$Box. $TC$ is consistent iff $\bigwedge_{\chi \in TC} \transTK(\chi)$ is \logicKsharpglobone-satisfiable.
\end{claim}    
Moreover, the reduction is linear.
Hardness thus follows from the NEXPTIME-hardness of consistency of \ALCQ-$T_C$Boxes.
\end{proof}

Lemma~\ref{lem:Ktree-K} and Theorem~\ref{th:Ktree-NEXPTIME-complete} yield the following corollary.
\begin{corollary}\label{corr:K-NEXPTIME-complete}
    The $\logicKsharpglobone$-satisfiability problem is NEXPTIME-complete.
\end{corollary}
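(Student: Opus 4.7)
The plan is to derive the corollary directly from the two immediately preceding results: Lemma~\ref{lem:Ktree-K} and Theorem~\ref{th:Ktree-NEXPTIME-complete}. Both directions are essentially bookkeeping; there is no genuine obstacle since the heavy lifting already sits in Theorem~\ref{th:Ktree-NEXPTIME-complete}.

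For the NEXPTIME upper bound, I would proceed as follows. Given an arbitrary \logicKsharpglobone-formula $\phi$ presented as a DAG, I first invoke Lemma~\ref{lem:Ktree-K} to construct, in polynomial time, a tree-shaped \logicKsharpglobone-formula $\phi_t$ that is equi-satisfiable with $\phi$ (and whose tree size is polynomial in the size of the DAG of $\phi$). Then I apply the NEXPTIME-membership half of Theorem~\ref{th:Ktree-NEXPTIME-complete} to $\phi_t$. The composition of a polynomial-time reduction with a NEXPTIME decision procedure is still in NEXPTIME, giving membership for the general \logicKsharpglobone-satisfiability problem.

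For NEXPTIME-hardness, the argument is even simpler: every tree-shaped formula is, by definition, a DAG-represented formula whose DAG happens to be a tree. Hence the NEXPTIME-hardness result from Theorem~\ref{th:Ktree-NEXPTIME-complete} (obtained by the reduction from consistency of \ALCQ-$T_C$Boxes) transfers immediately to the general \logicKsharpglobone-satisfiability problem; no further reduction is needed.

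The only subtlety worth double-checking is that Lemma~\ref{lem:Ktree-K} genuinely preserves satisfiability (not merely validity) and produces an output of polynomial tree-size rather than polynomial DAG-size; both are guaranteed by the Tseitin-style factorization used in its proof, where each DAG node $\psi$ is replaced by a fresh atomic proposition $p_\psi$ whose semantics is pinned down once and for all inside a single $\globalmodality \Phi$ conjunct. Once this is observed, the corollary follows at once by combining the two ingredients.
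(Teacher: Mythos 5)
Your proposal is correct and matches the paper's derivation exactly: the paper obtains this corollary precisely by combining Lemma~\ref{lem:Ktree-K} (polynomial-time reduction to tree formulas) with Theorem~\ref{th:Ktree-NEXPTIME-complete} (NEXPTIME-completeness for tree formulas), with hardness transferring because tree formulas are a special case of DAG-represented formulas. Your elaboration of the two directions is exactly the intended argument.
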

Furthermore, from Theorem~\ref{th:ksharpglobone-to-gnn} and Corollary~\ref{corr:K-NEXPTIME-complete},
we obtain the complexity of reasoning with ACR-GNNs with truncated ReLU and integer weights.
\begin{corollary}\label{corr:ACR-GNN-NEXPTIME}
  Satisfiability of ACR-GNN with global readout, over $\setZ$ and with truncated ReLU is NEXPTIME-complete.  
\end{corollary}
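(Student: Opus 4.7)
The plan is to derive the corollary as an immediate consequence of the two translations stated in Theorems~\ref{th:ksharpglobone-to-gnn} and \ref{th:gnn-to-ksharpglobone}, combined with the $\logicKsharpglobone$-satisfiability result in Corollary~\ref{corr:K-NEXPTIME-complete}. Both directions are polynomial-time reductions, so the complexity transfers directly.

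For the NEXPTIME upper bound, I would proceed as follows. Given an ACR-GNN $\aGNN$ over $\setZ$ with truncated ReLU, apply Theorem~\ref{th:gnn-to-ksharpglobone} to compute, in polynomial time in $|\aGNN|$, a $\logicKsharpglobone$-formula $\varphi_\aGNN$ (represented as a DAG) such that $\semanticsof{\aGNN} = \semanticsof{\varphi_\aGNN}$. Then $\aGNN$ is satisfiable iff $\varphi_\aGNN$ is $\logicKsharpglobone$-satisfiable. By Corollary~\ref{corr:K-NEXPTIME-complete}, the latter is decidable in NEXPTIME, yielding a NEXPTIME procedure for ACR-GNN satisfiability.

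For the NEXPTIME lower bound, I would reduce $\logicKsharpglobone$-satisfiability (which is NEXPTIME-hard by Corollary~\ref{corr:K-NEXPTIME-complete}) to ACR-GNN satisfiability. Given a $\logicKsharpglobone$-formula $\varphi$, apply Theorem~\ref{th:ksharpglobone-to-gnn} to build, in polynomial time, an ACR-GNN $\aGNN_\varphi$ (over $\setZ$, with truncated ReLU) such that for every pointed graph $(G,u)$, $(G,u) \models \varphi$ iff $\aGNN_\varphi(G,u) = 1$. Hence $\varphi$ is satisfiable iff $\aGNN_\varphi$ is satisfiable, transferring the NEXPTIME-hardness.

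There is essentially no technical obstacle, since all heavy lifting is already done in the cited theorems. The one thing I would double-check is that Theorem~\ref{th:ksharpglobone-to-gnn}'s construction produces a GNN of the exact shape specified in Section~\ref{sec:gnn} (integer weights, truncated ReLU, summation aggregation, linear threshold classifier), so that the lower bound actually lies within the class stated in the corollary. Inspecting its proof confirms that $\COMB_i$ is built from integer matrices $C, A_1, A_2$ and bias $b$ with componentwise trReLU, and $\CLS(x) = x_n \geq 1$, so the reduction lands inside the desired class and the corollary follows.
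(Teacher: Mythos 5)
Your proposal is correct and follows essentially the same route as the paper, which obtains the corollary directly from the two polynomial-time translations between ACR-GNNs and \logicKsharpglobone together with Corollary~\ref{corr:K-NEXPTIME-complete}. If anything, you are slightly more explicit than the paper's one-line derivation: you correctly invoke Theorem~\ref{th:gnn-to-ksharpglobone} for the upper bound and Theorem~\ref{th:ksharpglobone-to-gnn} for the lower bound, and your check that the constructed GNN lies in the exact class of Section~\ref{sec:gnn} is the right detail to verify.
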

The decidability of the problem is left open in \cite{DBLP:conf/icalp/BenediktLMT24} and in the recent long version \cite{benedikt2025decidabilitygraphneuralnetworks_V4} when the weights are rational numbers. \Cref{corr:ACR-GNN-NEXPTIME} answers it positively in the case of integer weights and pinpoints the computational complexity.

\section{Experimental Data and Further Analyses}
\label{sec:Experiments}
The package is publicly available online on \href{https://github.com/francoisschwarzentruber/kr2026-Verifying-Quantized-GNNs-With-Readout-Is-Decidable-But-Highly-Intractable}{GitHub repository}
\subsection{Description of the model}
Before verifying the model, we need to explore the model we are dealing with. As mentioned in Section~\ref{section:background} of the main part of the paper, we are working with the Aggregate-Combined Graph Neural Network with the global Readout, which belongs to the GNN Message-Passing family. The theoretical model was explained earlier; in this section, we address the practical implementation. It is important to note that we are working with models for node classification. In the article with the reference architecture \cite{DBLP:conf/iclr/BarceloKM0RS20}, the authors mention binary node classification.

In Fig~.\ref{fig:acrgnnscheme}, we design the model based on the trained model. This reference model was derived from the implementation of the theoretical model presented in \cite{BarceloGit2021}.

\begin{figure}[h!]
\centering
\resizebox{\columnwidth}{!}{%
\begin{tikzpicture}[
  font=\scriptsize,
  node distance=12mm,
  box/.style={draw, rounded corners, minimum height=7mm, align=center, inner sep=3pt},
  arr/.style={-Latex, thick}
]
\node[box] (eq1) {$\alpha(xC+yA+zR+b)$};
\node[box, right=of eq1] (eq2) {$\mathrm{BatchNorm}(p)$};
\node[box, right=of eq2] (eq3) {$\mathrm{LP}(p_{\mathrm{bn}})$};

\draw[arr] (eq1) -- (eq2);
\draw[arr] (eq2) -- (eq3);
\end{tikzpicture}%
}
\caption{Scheme of the ACR-GNN one-layer.}
\label{fig:acrgnnscheme}
\end{figure}

Here is the recap of the key elements of the layer with the data type FLOAT32:
\begin{itemize}
    \item $C$,$A$,$R$ are the matrices of the weights that during the training were initialized in a random way. Separately, $C$ is the weight matrix of the input feature, $A$ is the weighted matrix of the local aggregation, and $R$ is the weighted matrix of the global aggregation.  $T \in \mathbb{R}^{h \times h}$, where $ T\in \{C,A,R\}$ and $h$ is a hidden dimension. Interesting point from architecture:
    \begin{itemize}
        \item If $h > 5$ zero padding (fixed lift to higher dimension)
        \item If $h < 5$  first convolution compresses 5 to h (learnable)
        \item If $h = 5$  no padding, no compression
    \end{itemize}
    \item $b\in \mathbb{R}^{1 \times h}$ is the bias vector. In the theoretical description, only one vector was mentioned, but in the practical implementation, there are three bias vectors, one per matrix. This fact needs to be noted for future analysis and affects the analysis and verification flow.
    \item $\alpha$ is the activation function. We have no limitation on choice by design for this part of the learning process. In the reference paper, the authors used the ReLU and trReLU activations, which are the piece-wise activation functions.
    \item $p_{\mathrm{bn}}=\mathrm{BatchNorm}(p)$ this is the Batch Normalization (BN) that was applied immediately after the activation function, used in practice to make the learned representation stable, well-scaled, and usable by the next layers. In the original paper, BN was suggested to be applied before the activation function, but the architecture shows that it was actually applied after.
    \item $\mathrm{LP}(h_{\mathrm{bn}})$ how the learned embedding is converted into the model’s output. In the theory corresponding to the classification function. In practice, it corresponds to the linear\_prediction instance in the model's state dictionary. With this layer, we can obtain the classification of the input graph. So, for the verification task, including this layer can be optional. In details, this is the mapping from $\mathbb{R}^{h \times h} \xrightarrow{} \mathbb{R}^{h \times 2}$, 2 because the task for the reference work \cite{DBLP:conf/iclr/BarceloKM0RS20} is Binary Classification.
\end{itemize}

All these instances are keys in the trained model's state dictionary. For future analyses, we will use this representation for the model's verification task. We select this way because it is not always possible to have an access to the model's reference architecture, it is more practical to work with the saved model. In this article, we use PyTorch for training and save the model in `.pth' format.

After defining the model, we specify the information that can be extracted from it. Traditionally, model evaluation focuses on predictive accuracy and the behavior of the loss function during training. A more detailed analysis of the trained model is presented in the corresponding evaluation section.

Figure~\ref{fig:pipeline} illustrates the experimental and verification pipeline adopted in this work. The model is first trained on the target dataset using PyTorch. After training, architectural and training-related parameters are collected for analysis, including activation functions, number of layers, hidden dimensions, model size, and computational costs. The trained model is then serialized as a `.pth' file, from which the state\_dict is extracted and subsequently provided as input to the verification tool.

\begin{figure}[t]
\centering
\begin{tikzpicture}[
  font=\scriptsize,
  node distance=7mm,
  box/.style={draw, rounded corners, minimum width=32mm, minimum height=6mm, align=center},
  arr/.style={-Latex, thick}
]

\node[box] (data) {Dataset};
\node[box, below=of data] (train) {Model Training (PyTorch)};

\node[box, below=of train] (save) {Saved Model (.pth)};
\node[box, right=of save] (dptq) {Quantization (PyTorch)};

\node[box, below=of save] (state) {\textit{state\_dict} Extraction};
\node[box, below=of state] (verify) {Verification Tool};

\draw[arr] (data) -- (train);
\draw[arr] (train) -- (save);
\draw[arr] (save) to[bend left=20] (dptq);
\draw[arr] (dptq) to[bend left=20] (save);

\draw[arr] (save.south) -- node[left]{FP32} (state.north);
\draw[arr, dashed] ([xshift=8pt]save.south) -- node[right]{qINT8} ([xshift=8pt]state.north);

\draw[arr] (state.south) -- node[left]{FP32} (verify.north);
\draw[arr, dashed] ([xshift=8pt]state.south) -- node[right]{qINT8} ([xshift=8pt]verify.north);

\end{tikzpicture}
\caption{Learning and verification pipeline.}
\label{fig:pipeline}
\end{figure}

After outlining the experimental workflow, we structure the analysis for each dataset as follows.
(i) Description of the dataset;
(ii) Analysis of the trained models under different architectural and hyperparameter configurations;
(iii) Application of dynamic post-training quantization;
(iv) Comparative evaluation between original and quantized models;
(v) Extraction of model parameters and verification using the ESBMC tool, where the obtained matrices are translated into the verification workflow.

Here we describe each section in detail. 

(i)\textit{ Datasets}. We employ the same datasets as in \cite{DBLP:conf/iclr/BarceloKM0RS20}, namely synthetic graphs generated using the \Erdos\ model and real-world Protein–Protein Interaction (PPI) networks\cite{zitnik2017predicting}. The synthetic dataset allows controlled variation of structural properties, while the PPI dataset represents multi-graph, real-world biological networks.
Detailed information about the structural characteristics of the graph data is provided in Tables (Table~\ref{tab:dataset_summary} and Table~\ref{tab:ppi_dataset_summary}), which summarize the properties of the \Erdos\ and PPI datasets, respectively.

(ii) \textit{Model configuration}. We train a simple Aggregate–Combine Graph Neural Network (ACR-GNN). For both the aggregation and the global readout functions, we use the sum operator. The combination function is defined as $\text{COMB}(x,y,z) = \alpha(xC+yA +zR +b)$, 
where $\alpha$ denotes the component-wise application of the activation function. Following the original work, we use a batch size of 128 and train the model for 20 epochs with the Adam optimizer~\cite{adam2014method}, employing the default PyTorch hyperparameters.
During the architectural exploration and analysis of SMT-solver compatibility, we treated the hidden dimension as a key tunable parameter. This choice was motivated by its direct impact on both model capacity and the size of the resulting SMT encoding.

In our experiments, we follow the evaluation protocol described in \cite{DBLP:conf/iclr/BarceloKM0RS20}, where accuracy is defined as the total number of correctly classified nodes divided by the total number of nodes across all graphs in the dataset.
We consider a diverse set of activation functions, grouped according to their functional properties: Piecewise linear (ReLU, ReLU6, trReLU, and LeakyReLU), Smooth unbounded (GELU and SiLU), Smooth bounded (Sigmoid, Normalized), and Smooth rectifiers (Softplus and ELU). In the ACR-GNN architecture, the activation function is applied component-wise within the combination function and therefore directly affects node-level message aggregation, feature transformation, and the resulting global graph representation obtained through the readout function.

Figure~\ref{fig:all_activation_functions} illustrates the ten non-linear activation functions considered in our experiments, using their standard implementations provided by PyTorch.

\begin{figure*}[t]
    \centering
    \includegraphics[width=0.8\linewidth]{figures/appendix_experiments_/activation_families_colorblind.png}
    \caption{Non-linear activation functions whose influence were analyzed, grouped by family.}
    \label{fig:all_activation_functions}
\end{figure*}

We presented the key aspects of each activation function (A/F) in Table~\ref{tab:activation-comparison}.

\begin{table*}
\centering
\begin{tabular}{l r c l}
\toprule
Activation Function & Output Range & Smooth & Notes \\
\midrule
ReLU      
  & $[0,\infty)$     
  & No 
  & Sparse; unbounded above. \\

ReLU6     
  & $[0,6]$          
  & No 
  & Bounded ReLU; quantization-friendly. \\

trReLU    
  & $[0,1)$          
  & No  
  & Clipped ReLU; similar to HardTanh. \\

LeakyReLU ($\alpha{=}0.01$) 
  & $(-\infty,\infty)$ 
  & No 
  & Avoids dead neurons. \\

GELU      
  & $(-\infty,\infty)$ 
  & Yes 
  & Smooth ReLU variant. \\

Sigmoid   
  & $(0,1)$          
  & Yes 
  & Saturating; vanishing gradients. \\

Normalized $(0,1)$ 
  & $(0,1)$ 
  & Yes 
  & Algebraic; no exponentials. \\

SiLU      
  & $(-\infty,\infty)$ 
  & Yes 
  & Also called Swish. \\

Softplus  
  & $(0,\infty)$     
  & Yes 
  & Smooth ReLU approximation. \\

ELU       
  & $(-\alpha,\infty)$ 
  & Yes 
  & Negative outputs improve gradients. \\

\bottomrule
\end{tabular}
\caption{Mathematical properties of activation functions used in ACR-GNN experiments.}
\label{tab:activation-comparison}
\end{table*}

(iii) \textit{Quantization}. For model compression, we employ dynamic Post-Training Quantization (PTQ). According to the taxonomy presented in \cite{gholami2022survey}, quantization techniques can be broadly divided into Quantization-Aware Training (QAT) and Post-Training Quantization (PTQ). We adopt PTQ because it allows the model to be trained only once in full precision, after which quantization is applied as a separate step. This approach reduces computational cost and aligns with sustainability considerations by avoiding repeated retraining.
PTQ can be applied either statically or dynamically. In this work, we use dynamic PTQ. In dynamic quantization, model weights are quantized offline (e.g., to 8-bit integers), while activations are quantized at runtime immediately before computation. The quantization parameters for activations (e.g., scaling factors) are determined dynamically based on the observed range during inference.
We implement this approach using the PyTorch library, specifically the \textit{torch.ao.quantization.quantize\_dynamic API}. This function replaces eligible modules (e.g., \textit{nn.Linear}) with their quantized counterparts, enabling reduced model size and improved inference efficiency. Dynamic quantization is applied after training and does not require calibration data, in contrast to static PTQ.

In the above explanation, two key factors are particularly relevant: the data type used for quantization and the type of layers to which quantization is applied.
In our experiments, dynamic PTQ converts model parameters from \textit{float32} precision to \textit{qint8}. Following the architecture described in~\cite{BarceloGit2021,DBLP:conf/iclr/BarceloKM0RS20}, quantization is applied to the \textit{nn.Linear} layers, which constitute the learnable transformation components of the ACR-GNN.
In PyTorch, \textit{qint8} is a quantized tensor data type representing 8-bit signed integers. Unlike the standard \textit{int8} type, \textit{qint8} encodes quantized floating-point values together with scale and zero-point parameters. These parameters define an affine mapping between integer values and their corresponding real-valued representations:
\begin{equation}
x_{\text{real}} = \text{scale} \cdot (x_{\text{int}} - \text{zero\_point}).
\end{equation}
Internally, \textit{qint8} values are stored as two's complement signed integers, while the scale and zero-point parameters preserve the numerical range of the original floating-point weights. Consequently, quantized weights remain compatible with neural network computations while reducing memory footprint and enabling more efficient inference.

iv) \textit{Comparative analysis}. In this section, we compare the results obtained before and after quantization by applying appropriate statistical metrics to evaluate potential differences in predictive performance and model behavior.
In addition to dynamic Post-Training Quantization (PTQ), we consider the concept of \textit{fake quantization}, which simulates quantization effects during training. Fake quantization introduces rounding and clamping operations to emulate low-precision arithmetic while maintaining the data in floating-point format (e.g., \textit{FP32}). Concretely, values are quantized and immediately dequantized within the computational graph, allowing the forward pass to reflect quantization effects without permanently changing the underlying data representation.
This mechanism is particularly relevant for Quantization-Aware Training (QAT), where the model learns to compensate for quantization-induced errors during training. By anticipating discretization effects, the model can adjust its weights accordingly, typically resulting in higher post-quantization accuracy when deployed on hardware with limited numerical precision.

v) \textit{Verification Flow}. After completing the experimental pipeline, we refine the model selection procedure to account for the limitations of the verification tool.
In particular, the verification workflow imposes constraints on the admissible activation functions, as not all non-linearities can be efficiently encoded within the SMT-based framework. To systematize this restriction, we constructed Table~\ref{tab:activation-smt}, which classifies the considered activation functions according to their SMT compatibility.
These constraints directly influence the subset of models that can be subjected to formal verification.

\begin{table*}[tb]
\centering
\begin{tabular}{l r l}
\toprule
Activation Function & SMT Compatibility & Notes \\
\midrule
ReLU      
& Yes (exact) 
& Piecewise-linear encoding. \\

ReLU6     
& Yes (exact) 
& Linear constraints with upper bound. \\

trReLU    
& Yes (exact) 
& Piecewise-linear clipping. \\

LeakyReLU ($\alpha{=}0.01$) 
& Yes (exact) 
& Piecewise-linear; avoids dead neurons. \\

Normalized $(0,1)$ 
& Yes (nonlinear) 
& Algebraic encoding with auxiliary variables. \\

GELU      
& No 
& Requires transcendental functions. \\

Sigmoid   
& No (approx.) 
& Requires $\exp$; approximations needed. \\

SiLU      
& No 
& Depends on $\exp$. \\

Softplus  
& No 
& Depends on $\exp$. \\

ELU       
& No (approx.) 
& Exponential negative branch. \\

\bottomrule
\end{tabular}
\caption{SMT compatibility of activation functions in ACR-GNN verification.}
\label{tab:activation-smt}
\end{table*}

SMT solvers natively support only a restricted class of activation functions that can be encoded using conditional (if--then--else) expressions, namely piecewise-linear functions. In our verification workflow, this class includes ReLU, ReLU6, trReLU, and LeakyReLU, all of which admit a direct encoding in terms of linear constraints combined with conditional branching.
In addition, we consider the Normalized activation function. Although it is smooth and bounded, it can be encoded using algebraic (nonlinear) constraints without resorting to transcendental functions, which are typically unsupported or poorly handled by SMT solvers. This makes it suitable, albeit more computationally demanding, for inclusion in the verification setting.
We present the first implementation of this verification pipeline as a proof of concept and as a baseline for future research. For the verification task, we employ ESBMC (Efficient SMT-based Context-Bounded Model Checker)~\cite{esbmc2024}, a mature and permissively licensed open-source context-bounded model checker.
The workflow illustrated in Fig.~\ref{fig:model_to_esbmc} translates a trained PyTorch model into C code that can be analyzed by ESBMC, embedding the corresponding preconditions and postconditions for formal verification.

\begin{figure*}[!tb]
\centering
\begin{tikzpicture}[
  font=\scriptsize,
  node distance=10mm,
  box/.style={draw, rounded corners, minimum height=7mm, align=center, inner sep=3pt},
  arr/.style={-Latex, thick}
]
\node[box] (open) {Open saved model\\(\textit{.pth}, PyTorch)};
\node[box, right=of open] (sd) {Extract \\ \textit{state\_dict}};
\node[box, right=of sd] (adapt) {Adapt to\\ESBMC instances}; 
\node[box, right=of adapt] (py) {Python generator\\script};
\node[box, right=of py] (cgen) {Generate \textit{.c} file};
\node[box, right=of cgen] (esbmc) {Run verification\\(ESBMC)};

\draw[arr] (open) -- (sd);
\draw[arr] (sd) -- (adapt);
\draw[arr] (adapt) -- (py);
\draw[arr] (py) -- (cgen);
\draw[arr] (cgen) -- (esbmc);
\end{tikzpicture}
\caption{Model-to-verification workflow: converting a trained PyTorch model into ESBMC-verifiable C instances.}
\label{fig:model_to_esbmc}
\end{figure*}

For the verification-oriented analysis, we examine the model's \textit{state\_dict}, which contains the learned parameters of the network, including weight matrices, bias vectors, and batch normalization statistics. These parameters are extracted and translated into the corresponding C representation used in the verification workflow.
During the verification phase, we identified scalability limitations of ESBMC. In particular, the tool is sensitive to the number of arithmetic operations and loop constructs generated in the translated C code. As the hidden dimension increases, the number of matrix multiplications and nested iterations grows accordingly, leading to a rapid increase in the size of the SMT encoding and verification time.
To ensure tractable verification, we therefore restricted the hidden dimension to small values (e.g., from 2 to 5). The upper bound of 5 was selected to remain compatible with the number of input features in the dataset on which the models were trained, while keeping the encoding manageable for the solver.
To further enable verification in the SMT-based setting, we preprocess Batch Normalization (BN) layers by rewriting them into an equivalent affine transformation, thereby eliminating operations (e.g., square roots) that are unsupported or inefficient in SMT solvers. Originally, Batch Normalization is defined as\ref{eq:bn_original}:
\begin{equation}
BN(p_i)
= \frac{\gamma (p_i - \mu_i)}{\sqrt{\sigma_i^2 + \varepsilon}}
+ \beta_i ,
\label{eq:bn_original}
\end{equation}
where $\gamma, \beta, \mu_i, \sigma_i^2 \in \mathbb{R}$ are the learned parameters.
However, SMT solvers such as ESBMC do not directly support non-linear arithmetic involving square roots. Therefore, we algebraically rewrite~\eqref{eq:bn_original} into an affine form. Expanding~\eqref{eq:bn_original}, we obtain:
\begin{equation}
\begin{aligned}
BN(p_i)
&= \frac{\gamma}{\sqrt{\sigma_i^2 + \varepsilon}} p_i
   - \frac{\gamma}{\sqrt{\sigma_i^2 + \varepsilon}} \mu_i
   + \beta_i \\
&= a_i p_i + c_i ,
\end{aligned}
\label{eq:bn_mod}
\end{equation}
where the coefficients are defined as
\begin{equation}
a_i := \frac{\gamma}{\sqrt{\sigma_i^2 + \varepsilon}},
\qquad
c_i := \beta_i - a_i \mu_i .
\label{eq:bn_coeff}
\end{equation}
This reformulation eliminates the square root operator from the symbolic expression, as $a_i$ and $c_i$ are precomputed constants.
So for the verification we are using the $a_i$ and $c_i$ which are the two vectors with the dimension 
$\mathbb{R}^{1 \times h}$. Data for the calculation were taken from the state dictionary of the model.
In details, into the state dictionary can be found the variables for the calculations the substitution \ref{eq:bn_mod} for the equation\ref{eq:bn_original}, we listed these variables and instancies of the state dictionary for $i$th layer.
\begin{itemize}
    \item $\gamma$ - \text{batch\_norms.i.weight}, 
    \item $\beta$ - \text{batch\_norms.i.bias},
    \item $\mu$ - \text{batch\_norms.i.running\_mean},
    \item $\sigma^2$ - \text{batch\_norms.i.running\_var},
\end{itemize}
The substitution of~\eqref{eq:bn_original} by the affine form~\eqref{eq:bn_mod} ensures that the resulting verification conditions remain within linear arithmetic, which is directly supported by SMT solvers.
\subsection{Synthetic data. Description}
Graphs were generated using the dense \Erdos model, a classical method for constructing random graphs, and each graph was initialized with five node colours encoded as one-hot feature vectors.
The dataset is structured as follows, as shown in Table~\ref{tab:dataset_summary}. The training set consists of 5000 graphs, each with 40 to 50 nodes and between 560 and 700 edges. The test set is divided into two subsets. The first subset comprises 500 graphs with the same structure as the training set, featuring 40 to 50 nodes and 560 to 700 edges. The second subset contains 500 larger graphs, with 51 to 69 nodes and between 714 and 960 edges. This design allows us to evaluate the model's generalization capability to unseen graph sizes. 

\begin{table*}[t]
    \centering
    \begin{tabular}{llcccccc}
        \toprule
        & & \multicolumn{3}{c}{Node} & \multicolumn{3}{c}{Edge} \\
        \cmidrule(r){3-5} \cmidrule(r){6-8}
        Classifier & Dataset & Min & Max & Avg & Min & Max & Avg \\
        \midrule
        \multirow{3}{*}{$p_1$} 
            & Train  & 40 & 50 & 45 & 560 & 700 & 630 \\
            & Test1  & 40 & 50 & 45 & 560 & 700 & 633 \\
            & Test2  & 51 & 60 & 55 & 714 & 960 & 832 \\
        \midrule
        \multirow{3}{*}{$p_2$} 
            & Train  & 40 & 50 & 45 & 560 & 700 & 630 \\
            & Test1  & 40 & 50 & 44 & 560 & 700 & 628 \\
            & Test2  & 51 & 60 & 55 & 714 & 960 & 832 \\
        \midrule
        \multirow{3}{*}{$p_2$} 
            & Train  & 40 & 50 & 44 & 560 & 700 & 629 \\
            & Test1  & 40 & 50 & 45 & 560 & 700 & 630 \\
            & Test2  & 51 & 60 & 55 & 714 & 960 & 831 \\
        \bottomrule
    \end{tabular}
    \caption{Dataset statistics summary.}
    \label{tab:dataset_summary}
\end{table*}

We trained ACR-GNN on complex formulas $\text{FOC}_2$ for labeling. They are presented as a classifier $\alpha_i(x)$ that constructed as:
\begin{equation*}
    \alpha_0(x):= \text{Blue}(x), \alpha_{i+1}(x):=  \exists^{[N,M]}y\left(\alpha_i(y) \wedge \neg E(x,y)\right)
\end{equation*}
where $\exists^{[N,M]}$ stands for ``there exist between $N$ and $M$ nodes''. satisfying a given property.

Observe that each $\alpha_i(x)$ is in FOC$_2$, as $\exists^{[N,M]}$ can be expressed by combining $\exists^{\geq N}$ and $\neg \exists^{\geq M + 1}$.

The data set has the following specifications: \Erdos graphs and is labeled according to $\alpha_1(x)$, $\alpha_2(x)$, and $\alpha_3(x)$:
\begin{itemize}
    \item $\alpha_0(x):= \text{Blue}(x)$
    \item  $p_1: \alpha_1(x) := \exists^{[8,10]}y\left(\alpha_0(y) \wedge \neg E(x,y)\right)$
    \item  $p_2: \alpha_2(x) := \exists^{[10,30]}y\left(\alpha_1(y) \wedge \neg E(x,y)\right)$
    \item  $p_3: \alpha_3(x) := \exists^{[10,30]}y\left(\alpha_2(y) \wedge \neg E(x,y)\right)$
\end{itemize}

\subsection{Synthetic data. Analysis of model after training}
In this subsection, we analyze several model parameters: the type of activation function, the number of layers, the model size, training time, the Generalization Ratio, and the Generalization Gap.

We trained the models on the dataset and collected the training time. This data is the first step in analyzing the influence of the activation function. Based on the data obtained, we can identify the slowest and fastest activation functions. 

\begin{table*}[t!]
\centering
\begin{tabular}{l *{10}{c}}
\toprule
& ReLU & ReLU6 & trReLU & LeakyReLU & GELU & SiLU & Sigmoid & Normalized & Softplus & ELU \\
\midrule
$p_1$ & 4831.71 & 4861.24 & 4863.12 & 4860.25 & 4846.62 & 4853.88 & 4841.79 & 5037.88 & 4861.80 & 4860.28 \\
$p_2$ & 4843.09 & 4858.06 & 4860.35 & 4855.30 & 4849.57 & 4848.89 & 4851.51 & 5032.96 & 4859.86 & 4860.32 \\
$p_3$ & 4834.85 & 4856.53 & 4859.48 & 4857.82 & 4857.39 & 4853.81 & 4847.41 & 5039.38 & 4851.02 & 4855.93 \\
\bottomrule
\end{tabular}
\caption{Training time (s) per classifier and activation}
\label{tab:training-times}
\end{table*}

Table~\ref{tab:training-times} reports the total training time (in seconds) aggregated across all experimental configurations.
For each key ($p_1$, $p_2$, and $p_3$), the reported value is obtained by summing the training times over layers 1--10, where each layer was evaluated using multiple hidden dimensions and learning rates.
Thus, the table reflects the cumulative computational cost of each activation function across the full experimental grid. Across all three keys, ReLU consistently achieves the lowest total training time 
($4831.71$\,s for $p_1$, $4843.09$\,s for $p_2$, and $4834.85$\,s for $p_3$).
In contrast, the Normalized activation exhibits the highest training time 
($5037.88$\,s, $5032.96$\,s, and $5039.38$\,s, respectively), 
which corresponds to an overhead of approximately $190$--$208$ seconds (about $4\%$) compared to ReLU.
The remaining activation functions (ReLU6, trReLU, LeakyReLU, GELU, SiLU, Sigmoid, Softplus, and ELU) show very similar behavior, with differences typically below $1\%$ relative to ReLU.
This indicates that, except for the Normalized activation, the choice of activation function has only a marginal impact on overall training time.
Moreover, the relative ranking of activation functions remains stable across $p_1$, $p_2$, and $p_3$, suggesting that the observed computational trends are robust with respect to the classifier configuration.

\begin{table*}[t]
\centering
\begin{tabular}{l *{10}{r}}
\toprule
 & ReLU & ReLU6 & trReLU & LeakyReLU & GELU & SiLU & Sigmoid & Normalized & Softplus & ELU \\
\midrule
$p_1$ & 21.30 & 9.70 & 6.00 & 7.30 & 12.30 & 8.70 & 19.30 & 0.00 & 8.70 & 6.70 \\
$p_2$ & 12.70 & 7.70 & 8.70 & 8.30 & 13.30 & 15.00 & 11.70 & 0.00 & 10.70 & 12.00 \\
$p_3$ & 14.00 & 7.00 & 7.70 & 11.00 & 13.30 & 9.30 & 19.70 & 0.00 & 8.70 & 9.30 \\
\bottomrule
\end{tabular}
\caption{Percentage of configurations in which each activation function achieves the minimum training time, aggregated over layers 1--10 and all tested hidden dimensions and learning rates.}
\label{tab:best:training-times}
\end{table*}

Table~\ref{tab:best:training-times} reports the percentage of configurations in which each activation function achieves the minimum training time, aggregated over layers~1--10 and all tested hidden dimensions and learning rates.
For $p_1$, ReLU (21.3\%) and Sigmoid (19.3\%) most frequently yield the fastest training.
For $p_2$, the distribution is more balanced, with SiLU (15.0\%) and GELU (13.3\%) slightly leading.
For $p_3$, Sigmoid (19.7\%) achieves the highest win-rate, followed by ReLU (14.0\%).
Importantly, the Normalized activation never attains the minimum training time (0\% across all keys), indicating that it is consistently outperformed in terms of computational efficiency.
Overall, several activations exhibit competitive performance, suggesting that the fastest option depends on the specific classifier configuration.

\begin{table*}[t]
\centering
\begin{tabular}{l *{10}{c}}
\toprule
 & ReLU & ReLU6 & trReLU & LeakyReLU & GELU & SiLU & Sigmoid & Normalized & Softplus & ELU \\
\midrule
$p_1$ & 3.30 & 6.30 & 4.70 & 6.00 & 3.00 & 7.00 & 5.00 & 54.70 & 5.70 & 4.30 \\
$p_2$ & 3.30 & 4.30 & 7.70 & 4.00 & 4.70 & 4.70 & 5.30 & 55.70 & 5.30 & 5.00 \\
$p_3$ & 1.30 & 5.30 & 5.00 & 4.70 & 5.70 & 4.00 & 3.70 & 59.00 & 5.00 & 6.30 \\
\bottomrule
\end{tabular}
\caption{Percentage of configurations in which each activation function achieves the maximum training time, aggregated over layers 1--10 and all tested hidden dimensions and learning rates.}
\label{tab:worst:training-times}
\end{table*}

Table~\ref{tab:worst:training-times} reports the percentage of configurations in which each activation function yields the maximum training time.
Across all classifier keys, the Normalized activation overwhelmingly dominates the loss-rate (54.7\%--59.0\%), indicating that it is most frequently the slowest option.
All other activation functions exhibit relatively low and comparable loss-rates (typically below 7\%), without a clear systematic pattern.
This confirms that, while multiple activations may compete for the minimum training time, the Normalized variant consistently incurs the highest computational cost across configurations.

We evaluated 10 activation functions across 3 classifier keys.
For each classifier, we varied the hidden dimension (10 values), learning rate (3 values), and network depth (1--10 layers).
This results in a total of 9000 trained models, providing broad coverage of architectural and optimization configurations.

To ensure a fair comparison across activation functions and architectural configurations, we adopted a global hyperparameter selection strategy. The learning rate and hidden dimension were selected using a validation-based protocol, relying exclusively on the validation split (Test1) and without inspecting the final test set (Test2).

First, we evaluated all candidate learning rates across the full set of architectural configurations, including variations in activation function, number of layers, hidden dimension, and dataset key. For each learning rate, we computed the average validation accuracy across all configurations. The optimal learning rate was selected as the one maximizing the mean validation performance. To account for variability and avoid selecting unstable configurations, we additionally considered the standard deviation of validation accuracy and verified that the chosen learning rate exhibited both high mean performance and stable behavior.

After fixing the globally optimal learning rate, we repeated the same procedure to select the hidden dimension. Specifically, we aggregated validation accuracy across all activation functions, layers, and dataset keys for each candidate hidden size. The hidden dimension maximizing the average validation accuracy was selected as the global architectural parameter.

\textbf{Key finding.} The result of the selection: learning rate $lr=0.01$ with hidden dimension $h=32$.
Further analysis of accuracy before dynamic Post-Training Quantization will be done for these parameters.

\begin{figure*}[t]
    \centering
\includegraphics[width=0.7\linewidth]{figures/appendix_experiments_/heatmap.png}
    \caption{Heatmaps of ACR-GNN accuracy across activation functions and network depth. Each row corresponds to a metric (Train, Test1, Test2), while each column corresponds to a dataset classifier ($p_1$, $p_2$, $p_3$). Color intensity indicates classification accuracy. Hyperparameter (learning rate $lr=0.01$ with hidden dimension $h=32$)}
    \label{fig:nonqua:heatmap}
\end{figure*}
The heatmaps in Figure~\ref{fig:nonqua:heatmap} visualize how the accuracy of the ACR-GNN varies with respect to the number of layers and the choice of activation function. The figure is organized in a 3$\times$3 grid: rows represent different evaluation metrics (Train accuracy, Test 1 accuracy, and Test 2 accuracy), and columns represent the three datasets classifiers ($p_1$, $p_2$, $p_3$). Each cell encodes accuracy values as a function of the number of layers (y-axis) and activation functions (x-axis). This visualization allows for a direct comparison of performance trends, highlighting, for example, activation functions that maintain stable accuracy across increasing depth or those that degrade sharply.

Generally, the trend that is common for all models: the number of edges, has a significant influence on the final classification results, which can be an indication of how robust the model is. Generally, a robust model performs reliably across different datasets, test conditions, or noise levels (not just on the data it was trained on). A non-robust model may work very well in the training set, but its accuracy drops sharply when evaluated on slightly different or more challenging test sets. In the case of this analysis, robustness is the ability of the model to use different activation functions to keep accuracy stable on Test2 compared to training.

The vertical patterns of the heatmaps reveal a clear dependency between model performance and network depth. Across all datasets ($p_1$, $p_2$, $p_3$), shallow architectures consistently outperform deeper configurations. In particular, models with two to three layers achieve the highest validation and test accuracies, while performance degrades progressively as depth increases beyond three layers. This trend is especially pronounced for dataset p1, where deeper architectures exhibit a substantial drop in accuracy. A similar, though less severe, degradation is observed for $p_2$, while $p_3$ demonstrates the strongest instability under increased depth. This behavior is consistent with the well-known over-smoothing phenomenon in Graph Neural Networks, where repeated neighborhood aggregation leads to loss of discriminative node representations. The results therefore suggest that shallow ACR-GNN architectures provide a better trade-off between representational capacity and generalization performance. 

The horizontal structure of the heatmaps highlights systematic differences across activation functions. Smooth and non-saturating activations, such as GELU, SiLU, LeakyReLU, and ELU, consistently achieve higher and more stable accuracies across datasets and splits. In contrast, saturating nonlinearities, particularly Sigmoid, exhibit noticeable performance degradation, especially on the Test2 split. The Normalized and hard-clipped variants (e.g., truncated ReLU) also demonstrate reduced robustness under distributional shifts. These findings indicate that smooth activation functions enhance optimization stability and generalization capability in ACR-GNN models. Moreover, the observed stability across datasets suggests that such activations are less sensitive to architectural variations and dataset complexity, making them preferable candidates for subsequent quantization and verification analyses.

\textbf{Key finding.}The further analysis will be done for a number of layers from 1 to 3 if it requires some demonstrations and from 1 to 10 to have the global view.

To assess how well the models generalize beyond the training data, we report two complementary metrics: Generalization Ratio and Generalization Gap. The generalization ratio measures the relative closeness between training and test performance:
\begin{equation*}
    \text{Generalization Ratio (GR)}= \frac{\text{Test Accuracy}}{\text{Train Accuracy}}
\end{equation*}
If the ratio is close to 1, the model generalizes well (Test $\approx$ Train). If it is much less than 1, the model is overfitting (Train $\gg$ Test). 

The generalization gap quantifies the absolute drop in performance from training to test:
    \begin{equation*}
        \Delta_{gen} = \text{Train Accuracy} - \text{Test Accuracy}.
    \end{equation*}
A small gap reflects strong generalization, while larger gaps highlight overfitting. 

In our case, we compute both values separately for Test1 and Test2.  Tables across layers have the following structure as Table\ref{tab:results:nonqua:ACR-GNN:1layer}.

\begin{table*}[t]
\centering
\begin{tabular}{l ccc ccc ccc ccc}
\toprule
& \multicolumn{4}{c}{$p_1$} & \multicolumn{4}{c}{$p_2$} & \multicolumn{4}{c}{$p_3$} \\
A/F & \multicolumn{2}{c}{Test1} & \multicolumn{2}{c}{Test2} & \multicolumn{2}{c}{Test1} & \multicolumn{2}{c}{Test2} & \multicolumn{2}{c}{Test1} & \multicolumn{2}{c}{Test2} \\
 & GR & $\Delta_{gen}$ & GR & $\Delta_{gen}$ & GR & $\Delta_{gen}$ & GR & $\Delta_{gen}$ & GR & $\Delta_{gen}$ & GR & $\Delta_{gen}$ \\
\midrule
ReLU & 0.995 & 0.468 & 0.765 & 22.535 & 1.016 & -1.082 & 0.913 & 6.016 & 0.990 & 0.722 & 1.073 & -5.027 \\
ReLU6 & 0.995 & 0.527 & 0.754 & 24.375 & 1.015 & -1.015 & 0.894 & 7.375 & 0.993 & 0.478 & 1.048 & -3.339 \\
trReLU & 0.995 & 0.510 & 0.819 & 17.412 & 1.016 & -1.211 & 0.666 & 25.542 & 0.993 & 0.505 & 0.955 & 3.289 \\
LeakyReLU & 0.992 & 0.722 & 0.733 & 25.332 & 1.017 & -1.183 & 0.900 & 6.920 & 0.993 & 0.508 & 1.074 & -5.116 \\
GELU & 0.995 & 0.477 & 0.845 & 15.359 & 1.018 & -1.388 & 0.660 & 25.755 & 0.990 & 0.755 & 1.042 & -3.067 \\
SiLU & 1.000 & -0.009 & 0.927 & 7.303 & 1.022 & -1.580 & 0.664 & 24.683 & 0.989 & 0.789 & 1.020 & -1.429 \\
Sigmoid & 0.994 & 0.594 & 0.838 & 15.835 & 1.018 & -1.270 & 0.751 & 17.816 & 0.991 & 0.624 & 0.878 & 8.597 \\
Normalized & 0.995 & 0.530 & 0.762 & 22.966 & 1.016 & -1.139 & 0.729 & 19.216 & 0.994 & 0.436 & 1.020 & -1.425 \\
Softplus & 0.992 & 0.749 & 0.844 & 15.174 & 1.018 & -1.283 & 0.733 & 19.215 & 0.991 & 0.664 & 0.969 & 2.233 \\
ELU & 0.995 & 0.524 & 0.805 & 19.052 & 1.018 & -1.327 & 0.738 & 18.805 & 0.993 & 0.479 & 1.010 & -0.733 \\
\bottomrule
\end{tabular}
\caption{Generalization performance of the 1-layer ACR-GNN with different activation functions (A/F), reported as both Generalization Ratios (Test/Train) and Generalization Gaps (Train – Test accuracy) across datasets $p_1$, $p_2$, and $p_3$.}
\label{tab:results:nonqua:ACR-GNN:1layer}
\end{table*}

After analysis, we obtain the following results.
Shallow networks (1–2 layers): Softplus, ELU, SiLU dominate (with ReLU also competitive at 2 layers).
Moderate depths (3–5 layers): ELU dominates most consistently, with Softplus and SiLU frequently competitive (and ReLU strong at 5 layers).
Intermediate-deep networks (6–8 layers): ELU and SiLU dominate (GELU is occasionally competitive).
Deep networks (9–10 layers): Softplus dominates at 9 layers, while ELU dominates at 10 layers (overall: ELU/Softplus).

\subsection{Synthetic data. Applying dPTQ}

To evaluate whether the quantization time is sensitive to the choice of activation function. The dependent variable was the total quantization time, while the primary explanatory variable was the categorical factor activation function. To isolate the effect of the activation choice from architectural and hyperparameter influences, we included the following control variables: number of layers, hidden dimensions, learning rate, and model variant (classifier). This design allows the activation-specific effect to be estimated independently of model depth, width, optimization settings, and classifier type.

We applied Post-hoc Tukey HSD tests to the time of applying dPTQ and the test revealed no statistically significant differences in quantization time across activation functions (all adjusted p-values $\geq$ 0.95). This suggests that quantization cost is largely independent of activation choice and is instead primarily driven by architectural parameters such as depth and hidden dimensionality.

\begin{figure}[t]
    \centering
    \includegraphics[width=\columnwidth]{figures/appendix_experiments_/quant_time_by_activation.png}
    \caption{Time taken for quantization and evaluation of the quantized model}
    \label{fig:quant_time_by_activation}
\end{figure}

To support this conclusion visually, we present two complementary plots. The first, Quantization Time by Activation Function (Figure\ref{fig:quant_time_by_activation}), shows the empirical distribution of quantization times across activations. The second, Activation Sensitivity(Figure\ref{fig:quant_sensivity}), reports the estimated mean quantization time with confidence intervals derived from the controlled regression model. Both visualizations corroborate the statistical findings, demonstrating overlapping confidence intervals and negligible effect sizes.

\begin{figure}[t]
    \centering
    \includegraphics[width=\linewidth]{figures/appendix_experiments_/quant_sensivity.png}
    \caption{Activation Sensitivity (Controlled Model)}
    \label{fig:quant_sensivity}
\end{figure}

Overall, the results suggest that quantization cost is primarily driven by the model's structural characteristics (e.g., depth and hidden dimensionality) rather than by the specific nonlinearity employed. This indicates that activation function selection does not introduce measurable computational overhead during post-training quantization.

Based on the key findings from the original model, we construct the HeatMap of the Accuracy for the models after applying dynamic Post-Training Quantization.

\begin{figure*}[t]
    \centering
    \includegraphics[width=0.7\linewidth]{figures/appendix_experiments_/heatmap_dptq.png}
    \caption{Heatmaps of ACR-GNN accuracy after dPTQ across activation functions and network depth. Each row corresponds to a metric (Train, Test1, Test2), while each column corresponds to a dataset classifier ($p_1$, $p_2$, $p_3$). Color intensity indicates classification accuracy. Hyperparameter (learning rate $lr=0.01$ with hidden dimension $h=32$)}
    \label{fig:dptq:heatmap_dptq}
\end{figure*}

The heatmaps in Figure~\ref{fig:dptq:heatmap_dptq} visualize how the accuracy of the ACR-GNN varies with respect to the number of layers and the choice of activation function. The figure is organized in a 3$\times$3 grid: rows represent different evaluation metrics (Train accuracy, Test 1 accuracy, and Test 2 accuracy), and columns represent the three datasets classifiers ($p_1$, $p_2$, $p_3$). Each cell encodes accuracy values as a function of the number of layers (y-axis) and activation functions (x-axis). This visualization allows for a direct comparison of performance trends, highlighting, for example, activation functions that maintain stable accuracy across increasing depth or those that degrade sharply.

Generally, the trend that is common for all models: the number of edges, has a significant influence on the final classification results, which can be an indication of how robust the model is. Generally, a robust model performs reliably across different datasets, test conditions, or noise levels (not just on the data it was trained on). A non-robust model may work very well in the training set, but its accuracy drops sharply when evaluated on slightly different or more challenging test sets. In this analysis, robustness is the model's ability to use different activation functions while maintaining Test2 accuracy comparable to training.

Table~\ref{tab:results:dptq:ACR-GNN:1layer} shows that applying dPTQ largely preserves near-distribution generalization, as indicated by GR values close to one on Test1 across all activation functions and datasets. However, the evaluation on Test2 exposes substantial activation-dependent sensitivity to quantization. For $p_1$, SiLU provides the most robust Test2 behavior (GR$=0.913$, $\Delta_{gen}=8.700$), while ReLU and LeakyReLU exhibit the largest degradation (GR$\leq 0.753$ with $\Delta_{gen}\geq 23.6$). In contrast, for $p_2$ the best Test2 performance is achieved by piecewise-linear activations, with ReLU yielding the highest GR and smallest gap (GR$=0.942$, $\Delta_{gen}=4.007$), whereas smooth activations such as GELU and SiLU degrade markedly (GR$\approx 0.67$, $\Delta_{gen}\approx 24$--$25$). For $p_3$, dPTQ generalization is largely maintained for most activations, with ELU achieving the most balanced profile (GR$=0.996$, $\Delta_{gen}=0.270$); Sigmoid remains the least stable (GR$=0.876$, $\Delta_{gen}=8.739$). Overall, these results indicate that dPTQ effects are primarily revealed under the harder Test2 setting and that the most robust activation function is dataset-dependent.

\begin{table*}[t]
\begin{tabular}{l ccc ccc ccc ccc}
\toprule
& \multicolumn{4}{c}{$p_1$} & \multicolumn{4}{c}{$p_2$} & \multicolumn{4}{c}{$p_3$} \\
A/F & \multicolumn{2}{c}{Test1} & \multicolumn{2}{c}{Test2} & \multicolumn{2}{c}{Test1} & \multicolumn{2}{c}{Test2} & \multicolumn{2}{c}{Test1} & \multicolumn{2}{c}{Test2} \\
 & GR & $\Delta_{gen}$ & GR & $\Delta_{gen}$ & GR & $\Delta_{gen}$ & GR & $\Delta_{gen}$ & GR & $\Delta_{gen}$ & GR & $\Delta_{gen}$ \\
\midrule
ReLU & 0.993 & 0.689 & 0.753 & 23.599 & 1.017 & -1.175 & 0.942 & 4.007 & 0.991 & 0.639 & 1.071 & -4.943 \\
ReLU6 & 0.999 & 0.104 & 0.774 & 22.388 & 1.013 & -0.900 & 0.904 & 6.656 & 0.995 & 0.325 & 1.046 & -3.163 \\
trReLU & 1.008 & -0.752 & 0.824 & 16.752 & 1.019 & -1.431 & 0.610 & 29.779 & 0.989 & 0.811 & 0.961 & 2.797 \\
LeakyReLU & 0.987 & 1.256 & 0.732 & 25.330 & 1.019 & -1.299 & 0.906 & 6.476 & 0.993 & 0.473 & 1.071 & -4.957 \\
GELU & 0.994 & 0.633 & 0.843 & 15.621 & 1.021 & -1.573 & 0.668 & 25.230 & 0.985 & 1.081 & 1.038 & -2.792 \\
SiLU & 1.002 & -0.211 & 0.913 & 8.700 & 1.024 & -1.730 & 0.671 & 24.208 & 0.987 & 0.925 & 1.010 & -0.688 \\
Sigmoid & 0.998 & 0.211 & 0.847 & 14.933 & 1.022 & -1.563 & 0.759 & 17.244 & 0.993 & 0.514 & 0.876 & 8.739 \\
Normalized & 0.992 & 0.821 & 0.771 & 22.037 & 1.017 & -1.196 & 0.742 & 18.302 & 0.994 & 0.395 & 1.022 & -1.515 \\
Softplus & 0.990 & 0.936 & 0.853 & 14.156 & 1.020 & -1.459 & 0.730 & 19.468 & 0.993 & 0.474 & 0.976 & 1.706 \\
ELU & 0.994 & 0.612 & 0.823 & 17.284 & 1.022 & -1.607 & 0.748 & 18.074 & 0.992 & 0.546 & 0.996 & 0.270 \\
\bottomrule
\end{tabular}
\caption{Generalization performance of the 1-layer ACR-GNN after applying dPTQ with different activation functions (A/F), reported as both Generalization Ratios (Test/Train) and Generalization Gaps (Train – Test accuracy) across datasets $p_1$, $p_2$, and $p_3$.}
\label{tab:results:dptq:ACR-GNN:1layer}
\end{table*}

\subsection{Synthetic data. Comparative analysis}
In this section, we compare the original FLOAT32 model with the dynamically quantized QINT8 model. 
The impact of dynamic PTQ was assessed by calculating the accuracy difference between the original and quantized models ($\Delta_{acc}$).

\textbf{Key finding.} We list the table for the 2-layer ACR-GNN with different activation functions (A/F) across datasets $p_1$, $p_2$, and $p_3$ as the min ($\Delta_{acc}$) across 10 layers.

\begin{table*}[t]
\centering
\begin{tabular}{l ccc ccc ccc ccc}
\toprule
 & \multicolumn{3}{c}{Train} & \multicolumn{3}{c}{Test1} & \multicolumn{3}{c}{Test2}  \\
 & $p_1$ & $p_2$ & $p_3$ & $p_1$ & $p_2$ & $p_3$ & $p_1$ & $p_2$ & $p_3$ \\
\midrule
ReLU & 0.000 & 0.193 & 0.114 & 0.000 & 0.312 & 0.107 & -0.058 & 0.310 & -0.036 \\
ReLU6 & 0.000 & 0.154 & 0.281 & 0.000 & 0.236 & 0.302 & 0.703 & -0.133 & 0.022 \\
trReLU & 0.093 & 0.900 & 0.738 & -0.013 & 1.769 & 0.435 & -0.047 & 2.572 & 2.494 \\
LeakyReLU & 0.001 & 0.101 & 0.147 & 0.000 & 0.357 & 0.280 & 0.155 & 1.452 & -0.011 \\
GELU & 0.006 & 0.631 & 0.189 & 0.000 & 1.368 & 0.226 & 1.059 & 0.782 & 0.022 \\
SiLU & -0.000 & 0.156 & 0.142 & 0.000 & 0.593 & 0.138 & 0.231 & 0.274 & 0.014 \\
Sigmoid & 0.006 & 0.281 & 0.101 & 0.009 & 0.891 & -0.351 & -0.083 & 0.717 & 0.018 \\
Normalized & 0.012 & 0.189 & 0.176 & 0.018 & 0.089 & -0.027 & 0.022 & 0.086 & -0.123 \\
Softplus& 0.000 & -0.095 & 0.140 & 0.000 & -0.004 & -0.120 & 0.288 & 0.324 & -0.004  \\
ELU & 0.001 & 0.310 & 0.247 & 0.000 & 0.165 & 0.515 & -0.735 & -0.400 & 0.025 \\
\bottomrule
\end{tabular}
\caption{Accuracy differences ($\Delta_{acc}$, \%) of the 2-layer ACR-GNN with different activation functions (A/F) across datasets $p_1$, $p_2$, and $p_3$.}
\label{tab:results:nonqua:ACR-GNN:2layer}
\end{table*}

Table~\ref{tab:results:nonqua:ACR-GNN:2layer} reports the accuracy differences ($\Delta_{acc}$) between the original and dynamically quantized models. Overall, the impact of dPTQ is limited, with most deviations remaining within $\pm 1\%$ across datasets and activation functions, indicating that dynamic quantization preserves predictive performance for shallow ACR-GNN architectures. Dataset $p_1$ exhibits the highest robustness, with nearly negligible deviations on both Test1 and Test2. In contrast, $p_2$ shows slightly higher sensitivity, particularly for trReLU and LeakyReLU, though the observed shifts remain moderate. Among activation functions, Normalized and SiLU demonstrate the most consistent stability across Train and Test splits, while trReLU exhibits the largest deviations under quantization. Importantly, no substantial divergence between training and testing accuracy is observed after quantization, suggesting that dynamic PTQ does not introduce additional overfitting or structural instability. These results indicate that 8-bit dynamic PTQ is a reliable compression strategy for 2-layer ACR-GNN models.

The next step is to see the memory reduction.
We began analyzing how dPTQ influences memory. Further computation was performed based on key findings: a hidden dimension of 32 and a two-layer model with a learning rate of 0.01.

We started with the Disk Compression, which is calculated by:
\begin{equation*}
    \text{Disk Compression}= \frac{\text{Model File size (MB) FP32}}{\text{Model File size (MB) INT8}}
\end{equation*}
For this configuration, we got the $\text{Disk Compression} = 1.45478385$, which means the INT8 checkpoint file is only about 31\% smaller than FP32.

After an analysis of the quantized model, we found that the quantized checkpoints store scale and zero-point parameters but retain FP32 tensor storage in the exported state\_dict and this comes from the implementation from the PyTorch. 

In our case for the verification this solution is valid, because as input we provide the unknown graph with the feature's values in int, so  we need quantization only of the weight matrix. One question was highlighted during the research> Should we apply quantizatio to the bias vector? To the bias vector should be applied the rule>
$b_q[i]=round(b[i]/s_w)$, where $s_w$ is the scale of the weights.

\subsection{Synthetic data. Verification}
Here we are working with the trained models in the format .pth. We select one model for the detailed analysis. Some parameters can be extracted from the model's path (more details are in the replication package).
The model is stored in: \textit{saved\_models/results\_synthetic/acrgnn\_relu/p1/}
The path components encode:
synthetic dataset, ACR-GNN architecture with ReLU activation, and training for the $\mathrm{FOC}_2$ classifier ($p_1$).

The checkpoint \textit{MODEL-acrgnn-0-aggS-readS-combT-cl1-L1-H5.pth} encodes the architectural configuration: aggregation type (aggS), readout (readS), combination strategy (combT), one convolutional layer (L1), and hidden dimension 5 (H16). We will use these parameters to pass them to the ESMBC code. 

Following the transformation flow illustrated in Figure~\ref{fig:model_to_esbmc}, the model parameters are extracted from the state dictionary.

For each convolutional component $T \in \{C, A, R\}$, the following tensors are retrieved:
\begin{itemize}
    \item \textit{convs.0.T.linear.weight} — weight matrix of $T$,
    \item \textit{convs.0.T.linear.bias} — bias vector of $T$.
\end{itemize}

The Batch Normalization layer parameters are for calculating the $a_i$ and $c_i$ for the equation\ref{eq:bn_mod}:
\begin{itemize}
    \item $\gamma$ - \textit{batch\_norms.0.weight},
    \item $\beta$ - \textit{batch\_norms.0.bias},
    \item $\mu$ - \textit{batch\_norms.0.running\_mean},
    \item $\sigma^2$ - \textit{batch\_norms.0.running\_var},
\end{itemize}
The prediction layer parameters are:
\begin{itemize}
    \item \textit{linear\_prediction.weight},
    \item \textit{linear\_prediction.bias}.
\end{itemize} 
This can be added or ignored, depending on the verification property, so for the ESBMC tool, we need to verify. The prediction layer is included in the SMT encoding, as the verified
property is defined over the final classifier output.

We performed the same analysis, but for the quantized model.
The checkpoint \textit{MODEL-acrgnn-0-aggS-readS-combT-cl1-L1-H5-8-quantized.pth} encodes the same architectural configuration as the floating-point model—aggregation type (aggS), readout (readS), combination strategy (combT), one convolutional layer (L1), and hidden dimension 5 (H5)—but with parameters stored in quantized form (8-bit). These configuration parameters are again used to initialize the ESBMC verification task.

Following the transformation flow illustrated in Figure~\ref{fig:model_to_esbmc}, the quantized model parameters are extracted from the state dictionary.

For each convolutional component $T \in \{C, A, R\}$, the following entries are retrieved:
\begin{itemize}
    \item \textit{convs.0.T.linear.scale} - quantization scale of the linear layer,
    \item \textit{convs.0.T.linear.zero\_point} - zero-point used in affine quantization,
    \item \textit{convs.0.T.linear.\_packed\_params.\_packed\_params} - packed representation containing the quantized \textit{int8} weight tensor and the associated bias.
\end{itemize}

The packed parameters are unpacked to obtain the raw integer weight matrix (via \textit{int\_repr()}) and the corresponding bias vector. 

\textbf{Key finding.} Since PyTorch stores quantized linear weights in \textit{[out, in]} format, the matrices must be transposed before being passed to ESBMC in order to match the \textit{xC+yA+zR+b} encoding, which expects matrices in \textit{[in, out]} layout.The bias is aligned with the accumulator scale and provided as a row vector consistent with the ESBMC implementation.

\subsection{Synthetic data. Performance}
\label{appendixsection:toolperformance}
We propose the first implementation to serve as a proof of concept, and a baseline for future research.
The prototype directly transforms an instance of a ACR-GNN satisfiability problem into a C program.
The C program is then verified by the model checker ESBMC~\cite{esbmc2024}.

Just to get an idea, we report in Table~\ref{tab:tool-performance} the performance of our prototype on a very small ACR-GNN $\aGNN_{test}$ (three layers of input and output dimensions of three). 

\begin{table*}[t]
    \centering
\begin{minipage}{0.45\textwidth}
\centering
\small
\begin{tabular}{crr}
\toprule
$N$ & Time (s) & \# Graphs $(2^{N^2})$ \\
\midrule
1 & 0.089 & 2 \\
2 & 0.103 & 16 \\
3 & 0.845 & 512 \\
4 & 2.576 & 65536 \\
5 & 10.406 & 33554432 \\
6 & 32.667 & 68719476736 \\
\bottomrule
\end{tabular}
\end{minipage}
\hfill
\begin{minipage}{0.45\textwidth}
\centering
\begin{tikzpicture}
  \begin{axis}[
    xlabel={Number of vertices},
    ylabel={Time (s)},
    grid=both,
    width=\linewidth,
    height=4.2cm
  ]
    \addplot[
      mark=*,
      color=blue,
      thick
    ] coordinates {
      (1,0.089)
      (2,0.103)
      (3,0.845)
      (4,2.576)
      (5,10.406)
      (6,32.667)
    };
  \end{axis}
\end{tikzpicture}
\end{minipage}
\caption{Time required to solve the ACR-GNN satisfiability problem for $\aGNN_{test}$ as the number of vertices increases. The table also reports the number of possible directed graphs with self-loops over $N$ vertices.}
\label{tab:tool-performance}
\end{table*}
Expectedly, the experimental results reveal a bad scalability.
%

We integrate matrices into the ACR-GNN simulation via ESBMC. We have loaded two state dictionaries: the original and the quantized one. The experiments were done on unknown graphs with different numbers of vertices, and the results are presented in the Table~\ref{tab:esbmc_time_relu_h2_l1}.

\begin{table}[tbp]
\centering
\begin{tabular}{clc}
\toprule
Vertices (N) & Time (s) & Time (hh:mm:ss) \\
\midrule
1 & 16.8044   & 00:00:16 \\
2 & 5421.4812 & 01:30:21 \\
\bottomrule
\end{tabular}
\caption{ESBMC verification time for ACR-GNN (H=2, L=1, ReLU).}
\label{tab:esbmc_time_relu_h2_l1}
\end{table}
The results (Table~\ref{tab:esbmc_time_relu_h2_l1}) empirically demonstrate the state-space explosion problem in SMT-based verification of GNNs. Even minimal increases in graph size result in exponential growth in verification time, confirming that naive full symbolic encoding does not scale for graph-structured neural models.

\subsection{Checking Distributivity}
\label{appendixsection:distributivity}

We provide C source code for checking distributivity. The reader may run the model checker ESBMC on it to see whether distributivity holds or not. The code for the checking is in the
supplementary material, and located in the folder
\texttt{src\_check\_distributivity}.

\subsection{Real data. Description}
To test the technique not only on synthetic data, we chose the Protein-Protein Interactions (PPI) benchmark~\cite{zitnik2017predicting} as in the reference paper of~\cite{DBLP:conf/iclr/BarceloKM0RS20}.
The PPI dataset consists of graph-level mini-batches, with separate splits for Training, Validation, and Testing.

\begin{table*}[t]
    \centering
    \begin{tabular}{lccccc}
        \toprule
        Split & \#Graphs & Node Dim & Label Dim & Avg Labels & Avg Degree \\
        \midrule
        Train      & 20 & 50 & 121 & 37.20 & 54.62 \\
        Validation &  2 & 50 & 121 & 35.64 & 61.07 \\
        Test       &  2 & 50 & 121 & 36.22 & 58.64 \\
        \bottomrule
    \end{tabular}
    \caption{Summary of the PPI dataset.}
    \label{tab:ppi_dataset_info}
\end{table*}

In Table~\ref{tab:ppi_dataset_info}, we present a summary of the PPI dataset, which consists of 20 training graphs, 2 validation graphs, and 2 test graphs. Each graph contains nodes with 50-dimensional features and supports multi-label classification with 121 possible labels. On average, each node is associated with approximately 36 labels, indicating a densely labeled dataset. The average node degree is also high, ranging from 54.6 in the training set to 61.1 in the validation set, reflecting the dense connectivity of the protein-protein interaction graphs. The dataset presents a complex multi-label classification task with consistently rich structure across all splits.

\begin{table*}[t]
    \centering
    \begin{tabular}{lcccccc}
        \toprule
        & \multicolumn{3}{c}{Node} & \multicolumn{3}{c}{Edge} \\
        \cmidrule(r){2-4} \cmidrule(r){5-7}
        Dataset & Min & Max & Avg & Min & Max & Avg \\
        \midrule
        Train      &  591 & 3480 & 2245.30 &  7708  & 106754 & 61318.40 \\
        Validation & 3230 & 3284 & 3257.00 & 97446  & 101474 & 99460.00 \\
        Test       & 2300 & 3224 & 2762.00 & 61328  & 100648 & 80988.00 \\
        \bottomrule
    \end{tabular}
    \caption{Dataset statistics summary. PPI benchmark.}
    \label{tab:ppi_dataset_summary}
\end{table*}

The statistics of the dataset presented in Table~\ref{tab:ppi_dataset_summary} contain large graphs with varying sizes between the train, the validation, and the test splits. Training graphs range from 591 to 3,480 nodes, with an average of 2,245 nodes per graph, and between 7,708 and 106,754 edges (average 61,318 edges). Validation graphs are more consistent in size, with 3,230 to 3,284 nodes and 97,446 to 101,474 edges, averaging 3,257 nodes and 99,460 edges. The test graphs have 2,300 to 3,224 nodes, averaging 2,762 nodes, and 61,328 to 100,648 edges, averaging 80,988. These statistics confirm that the dataset contains large and densely connected graphs and demonstrate a distributional shift in graph size and edge count between training and test data. This information is helpful in evaluating the model's ability to generalize to unseen and variable graph structures.

One key difference between the synthetic data and the PPI dataset is that the latter involves a multi-label classification task, rather than a binary classification task, because the PPI dataset is a common benchmark where each node (representing proteins) can have multiple labels, such as protein functions or interactions. Also, it is important to mention the key differences between the synthetic data and the real one. Here, the authors \cite{BarceloGit2021} used the code function \textit{EarlyStopping}: Utility for stopping training early if no further improvement is observed. The second difference is that the code is structured to run multiple experiments to collect statistics (mean and standard deviation) of the model performance, ensuring that the results are robust across different random initializations. In this case, we performed the experiments 10 times for each model, with a combination layer equal to 1 and a number of layers ranging from 1 to 10. The number of hidden dimensions is equal to 256. 

We applied the same eight activation functions to train the model. We also continue the experimental flow for real-world data, focusing on running time (Table~\ref{tab:ppi:nonqua:training-time}), speedup (Figure~\ref{fig:ppi:speedup}), size reduction (Table~\ref{tab:ppi:changes_of_size}), and analysis of accuracy.

\subsection{Real data. Analysis after training}
We analyze the total training time of the ACR-GNN across ten layers for different activation functions. Table~\ref{tab:ppi:nonqua:training-time} reports the total runtime in seconds and minute.

\begin{table*}[t]
\centering
\begin{tabular}{l ccc ccc cc}
\toprule
A/F & ReLU & ReLU6 & trReLU & GELU & Sigmoid & SiLU & Softplus & ELU \\
\midrule
Time (min) & 204.70 & 244.20 & 187.60 & 192.40 & 229.20 & 232.50 & 250.10 & 234.40 \\
Time (s) & 12286.30 & 14650.70 & 11261.60 & 11541.90 & 13745.00 & 13951.00 & 15007.00 & 14059.50 \\
\bottomrule
\end{tabular}
\caption{Training time per activation function}
\label{tab:ppi:nonqua:training-time}
\end{table*}

The results show substantial variability depending on the activation function. Piecewise activations such as trReLU (187.6 min) and ReLU (204.7 min) yield the fastest training times, while smooth activations such as Softplus (250.1 min), ReLU6 (244.2 min), and SiLU (232.5 min) incur significant overhead. Sigmoid also ranks among the slower functions (229.2 min).

\begin{table*}[t]
    \centering
    \begin{tabular}{@{}l*{10}{c}@{}}
    \toprule
    & 1 & 2 & 3 & 4 & 5 & 6 & 7 & 8 & 9 & 10 \\
    \midrule
    Fastest & Sigmoid & ELU & ReLU & ReLU & trReLU & ReLU & GELU & trReLU & trReLU & trReLU \\
    Slowest & ReLU6 & SiLU & ReLU6 & ReLU6 & SiLU & ReLU6 & Softplus & Softplus & Softplus & ReLU6 \\
    \bottomrule
  \end{tabular}
  \caption{Slowest and fastest activation functions across the depth of the ACR-GNN.}
    \label{tab:ppi:nonqua:minmax}
\end{table*}

As shown in Table~\ref{tab:ppi:nonqua:training-time}, the training time for all activation functions increases dramatically after the second layer. This highlights that not only the type of activation function influences performance time, but also the depth of the model.

Depth-wise analysis (Table~\ref{tab:ppi:nonqua:minmax}) confirms this pattern: trReLU frequently provides the lowest training time at deeper layers (5, 8–10), while ReLU6 consistently emerges as the slowest. These findings indicate that the choice of activation function significantly impacts computational efficiency on PPI, with piecewise functions offering faster convergence than their smooth counterparts.

We measured the size of the model (in Table~\ref{tab:ppi:size_of_model}) and obtained that the choice of activation function did not influence the size of the model.

\begin{table*}[t]
    \centering
\begin{tabular}{lcccccccccc}
    \toprule
    Layers & 1 & 2 & 3 & 4 & 5 & 6 & 7 & 8 & 9 & 10 \\
    \midrule
    Size (MB) & 
    0.92 &
    1.72 &
    2.51 &
    3.31 &
    4.11 &
    4.9 &
    5.7 &
    6.5 &
    7.29 &
    8.09 \\
    \bottomrule
    \end{tabular}
    \caption{PPI. Model size in MB as a function of the number of layers.}
    \label{tab:ppi:size_of_model}
\end{table*}

\subsection{Real data. Comparative analysis}
To assess the computational efficiency of dPTQ, we measured the elapsed time (Table~\ref{tab:ppi:elapsedtime}) of each model across different activation functions and datasets. 

\begin{table}[tbp]
\centering
\begin{tabular}{l ccc ccc cc}
\toprule
 & \multicolumn{2}{c}{Train} & \multicolumn{2}{c}{Test}& \multicolumn{2}{c}{Validation} \\
        \cmidrule(r){2-3} \cmidrule(r){4-5}\cmidrule(r){6-7}
A/F &        O & Q & O & Q & O & Q \\
\midrule
ReLU & 22.46 & 23.13 & 2.56 & 2.57 & 3.04 & 3.02 \\
ReLU6 & 21.18 & 22.44 & 2.54 & 2.69 & 2.90 & 3.17 \\
trReLU & 19.93 & 20.61 & 2.50 & 2.51 & 2.89 & 3.00 \\
GELU & 23.89 & 25.37 & 2.82 & 2.91 & 3.40 & 3.25 \\
Sigmoid & 22.52 & 24.53 & 2.72 & 2.85 & 3.17 & 3.24 \\
SiLU & 23.14 & 24.06 & 2.72 & 2.87 & 3.51 & 3.16 \\
Softplus & 21.65 & 24.97 & 2.76 & 3.04 & 3.27 & 3.23 \\
ELU & 26.13 & 26.05 & 3.31 & 3.22 & 3.90 & 3.52 \\
\bottomrule
\end{tabular}
\caption{Total elapsed time (s) per activation function and datasets before and after applying dPTQ, where O denotes the original model and Q the quantized model.}
\label{tab:ppi:elapsedtime}
\end{table}

Table~\ref{tab:ppi:elapsedtime} reports the total elapsed time (in seconds) for training, testing, and validation phases across activation functions, comparing original and quantized models. The results indicate that dynamic PTQ introduces only marginal differences in runtime across all phases and activations. In most cases, quantized models require slightly longer execution time (e.g., ReLU6 and Softplus), while in a few instances, minor improvements are observed (e.g., ELU in Test and Validation). Overall, the runtime overhead of quantization remains negligible, suggesting that the primary benefit of dynamic PTQ lies in memory and storage efficiency rather than acceleration.

We measure the speedup (Figure~\ref{fig:ppi:speedup}) of this type of quantization technique. We report the mean dynamic PTQ speedup across 10 layers, defined as the ratio of non-quantized to quantized execution time (original time / dPTQ time), which indicates whether dynamic PTQ reduces or increases runtime.

\begin{figure*}[t]
    \centering
    \includegraphics[width=\linewidth]{figures/appendix_experiments_/ppi_speedup.png}
    \caption{PPI. Dynamic PTQ Speedup by Activation (mean across layers).}
    \label{fig:ppi:speedup}
\end{figure*}

Figure~\ref{fig:ppi:speedup} reports the mean speedup values across layers for different activation functions. 
Overall, dynamic PTQ yields values close to 1, indicating only minor runtime benefits. 
ELU demonstrates the most consistent improvement, with speedup up to 1.14 in validation and above 1.07 in test, followed by SiLU and GELU, which also provide modest acceleration during validation. 
In contrast, Softplus incurs consistent slowdowns (speedup $\approx 0.89$ in training and testing), while Sigmoid and ReLU6 remain below 1, showing limited suitability for quantized execution. 
These results indicate that smooth activations such as ELU, SiLU, and GELU are better aligned with quantized computation, whereas Softplus and Sigmoid are unfavorable for efficient PTQ deployment. 

We report the results in Table~\ref{tab:ppi:changes_of_size} about the difference of the model's size. We calculated the $\Delta_{Size}$ and Reduction (\%) across the depth. The main result of this experiment is the following: the total reduction in size is $\approx$74\%. That is really good and significant, for example, for the application part of the quantization, where the model can be used on a low-power computer.

\begin{table}[tbp]
    \centering
\begin{tabular}{ccccc}
    \toprule
    L & O (MB) & Q (MB) & Delta & Reduction (\%) \\
    \midrule
    1 & 0.922108 & 0.242060 & 0.680048 & -73.7 \\
    2 & 1.718266 & 0.450790 & 1.267476 & -73.8 \\
    3 & 2.514808 & 0.659584 & 1.855224 & -73.8 \\
    4 & 3.311350 & 0.868378 & 2.442972 & -73.8 \\
    5 & 4.107892 & 1.077172 & 3.030720 & -73.8 \\
    6 & 4.904370 & 1.285972 & 3.618398 & -73.8 \\
    7 & 5.700912 & 1.494783 & 4.206129 & -73.8 \\
    8 & 6.497390 & 1.703594 & 4.793796 & -73.8 \\
    9 & 7.293933 & 1.912405 & 5.381528 & -73.8 \\
    10 & 8.090486 & 2.121216 & 5.969270 & -73.8 \\
    \bottomrule
    \end{tabular}
    \caption{PPI. Influence of the dPTQ on the size of the model, where O denotes the original model, and Q the quantized model, L is the number of layers.}
    \label{tab:ppi:changes_of_size}
\end{table}

We constructed tables with specific structural requirements to better examine the influence of dynamic PTQ on PPI data. The impact of dynamic PTQ was assessed by calculating the Generalization Ratio (GR), the Generalization Gap ($\Delta_{gen}$), and the accuracy difference between the original and quantized models ($\Delta_{acc}$). The results are presented in Tables~\ref{tab:ppi:dacc_gr_layer1} --Tables~\ref{tab:ppi:dacc_gr_layer10}. Below, we summarize the principal observations layer by layer.

\begin{table}[tbp]
\centering
\small
\begin{tabular}{l ccc ccc}
\toprule
A/F & \multicolumn{3}{c}{Test} & \multicolumn{3}{c}{Validation} \\
\cmidrule(r){2-4} \cmidrule(r){5-7}
 & GR & $\Delta_{gen}$ & $\Delta_{acc}$\% & GR & $\Delta_{gen}$ & $\Delta_{acc}$\% \\
\midrule
ReLU & 0.821 & +0.108 & -0.018 & 0.866 & +0.081 & +0.001 \\
ReLU6 & 0.835 & +0.090 & +0.007 & 0.917 & +0.045 & +0.006 \\
trReLU & 0.728 & +0.143 & +0.001 & 0.722 & +0.146 & +0.000 \\
GELU & 0.716 & +0.168 & -0.005 & 0.848 & +0.090 & -0.006 \\
Sigmoid & 0.791 & +0.109 & -0.003 & 0.741 & +0.135 & -0.001 \\
SiLU & 0.765 & +0.138 & -0.010 & 0.854 & +0.086 & -0.006 \\
Softplus & 0.667 & +0.197 & +0.025 & 0.802 & +0.118 & +0.016 \\
ELU & 0.719 & +0.156 & +0.015 & 0.774 & +0.125 & +0.007 \\
\bottomrule
\end{tabular}\caption{PPI. Accuracy differences ($\Delta_{acc}$,\% ) and generalization metrics (GR, $\Delta_{gen}$) per activation and dataset for one-layer ACR-GNN after applying the dynamic PTQ.}
\label{tab:ppi:dacc_gr_layer1}
\end{table}

For layer 1 (Table~\ref{tab:ppi:dacc_gr_layer1})  
ReLU6 achieves the strongest validation GR (0.917, $\Delta_{gen}=+0.045$) with negligible $\Delta_{acc}$. Softplus performs worst (GR 0.667/0.802 with the largest gaps). 

\begin{table}[tbp]
\centering
\small
\begin{tabular}{l ccc ccc}
\toprule
A/F & \multicolumn{3}{c}{Test} & \multicolumn{3}{c}{Validation} \\
\cmidrule(r){2-4} \cmidrule(r){5-7}
 & GR & $\Delta_{gen}$ & $\Delta_{acc}$ & GR & $\Delta_{gen}$ & $\Delta_{acc}$ \\
\midrule
ReLU & 0.606 & +0.241 & -0.001 & 0.637 & +0.222 & +0.031 \\
ReLU6 & 0.703 & +0.161 & +0.004 & 0.668 & +0.180 & -0.009 \\
trReLU & 0.694 & +0.157 & -0.001 & 0.687 & +0.161 & +0.002 \\
GELU & 0.681 & +0.195 & +0.011 & 0.723 & +0.170 & +0.009 \\
Sigmoid & 0.743 & +0.133 & +0.004 & 0.735 & +0.137 & -0.000 \\
SiLU & 0.677 & +0.197 & +0.015 & 0.635 & +0.223 & +0.006 \\
Softplus & 0.681 & +0.200 & -0.004 & 0.655 & +0.216 & +0.005 \\
ELU & 0.717 & +0.172 & -0.003 & 0.721 & +0.170 & +0.008 \\
\bottomrule
\end{tabular}
\caption{PPI. Accuracy differences ($\Delta_{acc}$, \%) and generalization metrics (GR, $\Delta_{gen}$) per activation and dataset for two-layer ACR-GNN after applying the dynamic PTQ.}
\label{tab:ppi:dacc_gr_layer2}
\end{table}

For layer 2 (Table~\ref{tab:ppi:dacc_gr_layer2})  
ELU and Sigmoid are comparatively stable ($\text{GR}\approx0.72$--0.74), whereas ReLU degrades (GR~0.606/0.637 with $\Delta_{gen}>0.22$). Accuracy changes remain within $\pm0.03\%$.

\begin{table}[tbp]
\centering
\small
\begin{tabular}{l ccc ccc}
\toprule
A/F & \multicolumn{3}{c}{Test} & \multicolumn{3}{c}{Validation} \\
\cmidrule(r){2-4} \cmidrule(r){5-7}
 & GR & $\Delta_{gen}$ & $\Delta_{acc}$ & GR & $\Delta_{gen}$ & $\Delta_{acc}$ \\
\midrule
ReLU & 0.576 & +0.261 & -0.006 & 0.629 & +0.228 & -0.004 \\
ReLU6 & 0.598 & +0.216 & +0.034 & 0.588 & +0.221 & +0.008 \\
trReLU & 0.727 & +0.143 & +0.040 & 0.691 & +0.162 & +0.007 \\
GELU & 0.730 & +0.164 & -0.027 & 0.720 & +0.170 & -0.014 \\
Sigmoid & 0.749 & +0.123 & +0.005 & 0.627 & +0.183 & +0.007 \\
SiLU & 0.646 & +0.214 & -0.000 & 0.713 & +0.173 & -0.001 \\
Softplus & 0.551 & +0.282 & +0.001 & 0.531 & +0.295 & +0.000 \\
ELU & 0.653 & +0.215 & +0.024 & 0.639 & +0.223 & -0.010 \\
\bottomrule
\end{tabular}
\caption{PPI. Accuracy differences ($\Delta_{acc}$, \%) and generalization metrics (GR, $\Delta_{gen}$) per activation and dataset for three-layer ACR-GNN after applying the dynamic PTQ.}
\label{tab:ppi:dacc_gr_layer3}
\end{table}

For layer 3 (Table~\ref{tab:ppi:dacc_gr_layer3})  
GELU and trReLU lead (GR $\approx0.73$), while Softplus is lowest (0.551/0.531). $\Delta_{acc}$ remains negligible. 

\begin{table}[tbp]
\centering
\small
\begin{tabular}{l ccc ccc}
\toprule
A/F & \multicolumn{3}{c}{Test} & \multicolumn{3}{c}{Validation} \\
\cmidrule(r){2-4} \cmidrule(r){5-7}
 & GR & $\Delta_{gen}$ & $\Delta_{acc}$ & GR & $\Delta_{gen}$ & $\Delta_{acc}$ \\
\midrule
ReLU & 0.564 & +0.269 & +0.001 & 0.570 & +0.265 & -0.002 \\
ReLU6 & 0.902 & +0.053 & -0.000 & 0.878 & +0.065 & -0.004 \\
trReLU & 0.708 & +0.150 & -0.000 & 0.707 & +0.151 & -0.004 \\
GELU & 0.589 & +0.252 & +0.006 & 0.597 & +0.247 & -0.003 \\
Sigmoid & 0.822 & +0.088 & +0.003 & 0.762 & +0.118 & +0.003 \\
SiLU & 0.612 & +0.231 & -0.001 & 0.595 & +0.241 & +0.001 \\
Softplus & 0.674 & +0.200 & -0.005 & 0.653 & +0.213 & +0.007 \\
ELU & 0.600 & +0.252 & -0.010 & 0.556 & +0.280 & -0.003 \\
\bottomrule
\end{tabular}
\caption{PPI. Accuracy differences ($\Delta_{acc}$, \%) and generalization metrics (GR, $\Delta_{gen}$) per activation and dataset for four-layer ACR-GNN after applying the dynamic PTQ.}
\label{tab:ppi:dacc_gr_layer4}
\end{table}

For layer 4 (Table~\ref{tab:ppi:dacc_gr_layer4})  
ReLU6 clearly dominates (0.902/0.878, small $\Delta_{gen}\approx0.05$--0.07). ELU is weakest ($\approx0.60/0.56$, large gaps). $\Delta_{acc}\approx0\%$.

\begin{table}[tbp]
\centering
\small
\begin{tabular}{l ccc ccc}
\toprule
A/F & \multicolumn{3}{c}{Test} & \multicolumn{3}{c}{Validation} \\
\cmidrule(r){2-4} \cmidrule(r){5-7}
 & GR & $\Delta_{gen}$ & $\Delta_{acc}$ & GR & $\Delta_{gen}$ & $\Delta_{acc}$ \\
\midrule
ReLU & 0.705 & +0.180 & +0.003 & 0.673 & +0.200 & +0.018 \\
ReLU6 & 0.786 & +0.110 & -0.006 & 0.754 & +0.126 & +0.001 \\
trReLU & 0.921 & +0.039 & +0.002 & 0.906 & +0.046 & -0.002 \\
GELU & 0.581 & +0.252 & -0.001 & 0.566 & +0.261 & +0.018 \\
Sigmoid & 0.770 & +0.112 & +0.031 & 0.755 & +0.120 & +0.006 \\
SiLU & 0.671 & +0.195 & +0.001 & 0.720 & +0.166 & -0.033 \\
Softplus & 0.598 & +0.246 & -0.003 & 0.586 & +0.253 & +0.001 \\
ELU & 0.582 & +0.251 & +0.000 & 0.574 & +0.255 & +0.000 \\
\bottomrule
\end{tabular}
\caption{PPI. Accuracy differences ($\Delta_{acc}$, \%) and generalization metrics (GR, $\Delta_{gen}$) per activation and dataset for five-layer ACR-GNN after applying the dynamic PTQ.}
\label{tab:ppi:dacc_gr_layer5}
\end{table}

For layer 5 (Table~\ref{tab:ppi:dacc_gr_layer5})  
trReLU is strongest (0.921/0.906, minimal $\Delta_{gen}$). Softplus and ELU are lowest ($\approx0.59$). Sigmoid shows a small positive $\Delta_{acc}$ on Test (+0.031\%), while SiLU has a small negative $\Delta_{acc}$ on Validation (-0.033\%).

\begin{table}[tbp]
\centering
\small
\begin{tabular}{l ccc ccc}
\toprule
A/F & \multicolumn{3}{c}{Test} & \multicolumn{3}{c}{Validation} \\
\cmidrule(r){2-4} \cmidrule(r){5-7}
 & GR & $\Delta_{gen}$ & $\Delta_{acc}$ & GR & $\Delta_{gen}$ & $\Delta_{acc}$ \\
\midrule
ReLU & 0.730 & +0.161 & -0.003 & 0.717 & +0.169 & +0.004 \\
ReLU6 & 0.672 & +0.168 & +0.000 & 0.663 & +0.173 & +0.000 \\
trReLU & 0.870 & +0.066 & +0.012 & 0.879 & +0.062 & +0.000 \\
GELU & 0.642 & +0.214 & -0.006 & 0.628 & +0.223 & -0.002 \\
Sigmoid & 0.880 & +0.057 & -0.001 & 0.910 & +0.043 & -0.001 \\
SiLU & 0.619 & +0.222 & +0.000 & 0.611 & +0.227 & -0.015 \\
Softplus & 0.641 & +0.216 & -0.025 & 0.685 & +0.190 & -0.031 \\
ELU & 0.624 & +0.232 & +0.004 & 0.626 & +0.231 & +0.010 \\
\bottomrule
\end{tabular}
\caption{PPI. Accuracy differences ($\Delta_{acc}$, \%) and generalization metrics (GR, $\Delta_{gen}$) per activation and dataset for six-layer ACR-GNN after applying the dynamic PTQ.}
\label{tab:ppi:dacc_gr_layer6}
\end{table}

For layer 6 (Table~\ref{tab:ppi:dacc_gr_layer6})  
Sigmoid emerges as best (0.880/0.910, smallest gaps), followed by trReLU (0.870/0.879). SiLU and ELU are weaker ($\approx0.62$). Softplus shows the largest negative $\Delta_{acc}$ (-0.025/-0.031\%), though still small.

\begin{table}[tbp]
\centering
\small
\begin{tabular}{l ccc ccc}
\toprule
A/F & \multicolumn{3}{c}{Test} & \multicolumn{3}{c}{Validation} \\
\cmidrule(r){2-4} \cmidrule(r){5-7}
 & GR & $\Delta_{gen}$ & $\Delta_{acc}$ & GR & $\Delta_{gen}$ & $\Delta_{acc}$ \\
\midrule
ReLU & 0.714 & +0.172 & -0.009 & 0.686 & +0.189 & +0.007 \\
ReLU6 & 0.641 & +0.182 & -0.001 & 0.615 & +0.195 & -0.002 \\
trReLU & 0.696 & +0.155 & +0.000 & 0.688 & +0.159 & +0.000 \\
GELU & 0.744 & +0.151 & +0.010 & 0.759 & +0.143 & -0.000 \\
Sigmoid & 0.866 & +0.061 & +0.014 & 0.841 & +0.072 & -0.000 \\
SiLU & 0.591 & +0.239 & +0.002 & 0.585 & +0.243 & +0.001 \\
Softplus & 0.655 & +0.210 & +0.003 & 0.665 & +0.204 & -0.003 \\
ELU & 0.713 & +0.177 & +0.003 & 0.732 & +0.165 & -0.001 \\
\bottomrule
\end{tabular}
\caption{PPI. Accuracy differences ($\Delta_{acc}$, \%) and generalization metrics (GR, $\Delta_{gen}$) per activation and dataset for seven-layer ACR-GNN after applying the dynamic PTQ.}
\label{tab:ppi:dacc_gr_layer7}
\end{table}

For layer 7 (Table~\ref{tab:ppi:dacc_gr_layer7})  
Sigmoid and GELU lead (0.866/0.841 and 0.744/0.759). SiLU is weakest ($\approx0.59$ with largest $\Delta_{gen}$). Accuracy changes remain near zero.

\begin{table}[tbp]
\centering
\small
\begin{tabular}{l ccc ccc}
\toprule
A/F & \multicolumn{3}{c}{Test} & \multicolumn{3}{c}{Validation} \\
\cmidrule(r){2-4} \cmidrule(r){5-7}
 & GR & $\Delta_{gen}$ & $\Delta_{acc}$ & GR & $\Delta_{gen}$ & $\Delta_{acc}$ \\
\midrule
ReLU & 0.590 & +0.245 & +0.002 & 0.585 & +0.247 & +0.000 \\
ReLU6 & 0.795 & +0.103 & -0.005 & 0.796 & +0.102 & -0.005 \\
trReLU & 0.855 & +0.072 & -0.001 & 0.845 & +0.077 & +0.003 \\
GELU & 0.664 & +0.199 & +0.000 & 0.656 & +0.204 & +0.000 \\
Sigmoid & 0.859 & +0.064 & -0.007 & 0.823 & +0.080 & -0.012 \\
SiLU & 0.717 & +0.162 & +0.001 & 0.714 & +0.164 & +0.000 \\
Softplus & 0.631 & +0.222 & -0.006 & 0.613 & +0.233 & +0.006 \\
ELU & 0.645 & +0.211 & +0.012 & 0.675 & +0.193 & -0.004 \\
\bottomrule
\end{tabular}
\caption{PPI. Accuracy differences ($\Delta_{acc}$, \%) and generalization metrics (GR, $\Delta_{gen}$) per activation and dataset for eight-layer ACR-GNN after applying the dynamic PTQ.}
\label{tab:ppi:dacc_gr_layer8}
\end{table}

For layer 8 (Table~\ref{tab:ppi:dacc_gr_layer8})  
trReLU and Sigmoid are strongest (0.855/0.845 and 0.859/0.823). Softplus is lowest ($\approx0.63/0.61$). $\Delta_{acc}$ values are minimal.

\begin{table}[tbp]
\centering
\small
\begin{tabular}{l ccc ccc}
\toprule
A/F & \multicolumn{3}{c}{Test} & \multicolumn{3}{c}{Validation} \\
\cmidrule(r){2-4} \cmidrule(r){5-7}
 & GR & $\Delta_{gen}$ & $\Delta_{acc}$ & GR & $\Delta_{gen}$ & $\Delta_{acc}$ \\
\midrule
ReLU & 0.614 & +0.225 & -0.002 & 0.610 & +0.228 & -0.002 \\
ReLU6 & 0.699 & +0.155 & +0.001 & 0.690 & +0.159 & -0.002 \\
trReLU & 0.864 & +0.066 & -0.003 & 0.846 & +0.075 & -0.015 \\
GELU & 0.634 & +0.216 & -0.001 & 0.627 & +0.221 & -0.000 \\
Sigmoid & 0.797 & +0.093 & +0.023 & 0.840 & +0.073 & -0.015 \\
SiLU & 0.760 & +0.144 & -0.002 & 0.752 & +0.148 & -0.003 \\
Softplus & 0.589 & +0.248 & -0.001 & 0.586 & +0.249 & -0.004 \\
ELU & 0.690 & +0.187 & +0.005 & 0.710 & +0.175 & -0.014 \\
\bottomrule
\end{tabular}
\caption{PPI. Accuracy differences ($\Delta_{acc}$, \%) and generalization metrics (GR, $\Delta_{gen}$) per activation and dataset for nine-layer ACR-GNN after applying the dynamic PTQ.}
\label{tab:ppi:dacc_gr_layer9}
\end{table}

For layer 9 (Table~\ref{tab:ppi:dacc_gr_layer9})  
trReLU again achieves best generalization (0.864/0.846). Softplus is weakest ($\approx0.59$). $\Delta_{acc}$ is small, with mixed signs for Sigmoid (+0.023\% Test, -0.015\% Validation splits). 

\begin{table}[tbp]
\centering
\small
\begin{tabular}{l ccc ccc}
\toprule
A/F & \multicolumn{3}{c}{Test} & \multicolumn{3}{c}{Validation} \\
\cmidrule(r){2-4} \cmidrule(r){5-7}
 & GR & $\Delta_{gen}$ & $\Delta_{acc}$ & GR & $\Delta_{gen}$ & $\Delta_{acc}$ \\
\midrule
ReLU & 0.554 & +0.260 & +0.008 & 0.571 & +0.250 & -0.006 \\
ReLU6 & 0.720 & +0.138 & +0.004 & 0.711 & +0.143 & +0.004 \\
trReLU & 0.732 & +0.133 & -0.008 & 0.715 & +0.141 & -0.006 \\
GELU & 0.621 & +0.225 & -0.023 & 0.588 & +0.245 & -0.008 \\
Sigmoid & 0.805 & +0.088 & +0.019 & 0.809 & +0.087 & +0.007 \\
SiLU & 0.602 & +0.232 & +0.071 & 0.586 & +0.241 & +0.079 \\
Softplus & 0.729 & +0.160 & -0.004 & 0.705 & +0.175 & +0.004 \\
ELU & 0.749 & +0.152 & -0.014 & 0.726 & +0.166 & -0.019 \\
\bottomrule
\end{tabular}
\caption{PPI. Accuracy differences ($\Delta_{acc}$, \%) and generalization metrics (GR, $\Delta_{gen}$) per activation and dataset for ten-layer ACR-GNN after applying the dynamic PTQ.}
\label{tab:ppi:dacc_gr_layer10}
\end{table}

For layer 10 (Table~\ref{tab:ppi:dacc_gr_layer10})  
Sigmoid is strongest (0.805/0.809). ReLU is weakest (0.554/0.571, $\Delta_{gen}\approx0.25$). SiLU shows the largest absolute $\Delta_{acc}$ (+0.071/+0.079\%), but still below 0.1\%.

Dynamic PTQ preserves accuracy across all layers, with $|\Delta_{acc}|<0.1\%$ in nearly every case. Generalization robustness varies by activation: trReLU and ReLU6 perform most consistently across layers, Sigmoid becomes increasingly stable in deeper layers, 
while Softplus is the weakest choice, and plain ReLU tends to degrade with depth. 
These findings confirm that quantized models retain generalization performance on PPI, with activation choice being the primary factor for robustness under PTQ.

\subsection{Real data. How accuracy varies with the quantization bit width}
After comparing the original model with its dynamically quantized variant, we analyzed how accuracy varies with the quantization bit width. To address this, we applied fake quantization to the full-precision model and evaluated performance under different bit-width configurations. The experiments were conducted with the following settings:
\begin{itemize}
    \item Number of layers from 1 to 3.
    \item Activation function: ReLU, ReLU6, trReLU, GELU, Sigmoid, SiLU, Softplus, ELU.
    \item Data split. Train, Validation, and Test for the PPI data.
    \item ``Fake Quantization". [2,4,5,6,7,8,16,32]-bit. Here, 32-bit corresponds to the original full-precision model, and 8-bit corresponds to the dynamic PTQ configuration reported before.
\end{itemize}
To isolate the most significant accuracy degradations, we filtered all bit-width transitions using two criteria: (i) both bit-widths must be greater than 4, thereby excluding the trivial drop associated with the 4 to 2-bit collapse; and (ii) the relative accuracy loss must exceed 1\%. Note that we first searched for 'lose more than 10\% of accuracy', but we did not observe a decrease in accuracy.

The resulting tables (\ref{tab:ppi-bitwidth-drops-RELU} -- \ref{tab:ppi-bitwidth-drops-ELU}) report, for each activation function, the data split (Train/Validation/Test), the layer index, the transition between bit-widths (\textit{From bits} and \textit{To bits}), and the computed accuracy difference. These entries highlight only the meaningful degradation patterns that occur before entering the low-precision regime.

\begin{table}[tbp]
\centering
\begin{tabular}{cccccc}
\toprule
Split & A/F  & Layer & From & To & Drop \\
\midrule
Test & RELU & 1 & 32 & 16 & -0.010267 \\
Test & RELU & 1 & 7 & 6 & -0.027213 \\
Test & RELU & 3 & 32 & 16 & -0.013466 \\
Test & RELU & 3 & 5 & 4 & -0.032393 \\
Validation & RELU & 2 & 16 & 8 & -0.015078 \\
\bottomrule
\end{tabular}
\caption{PPI. Accuracy drops across bit-width reductions for the RELU activation function. Only over 1\% drops are reported.}
\label{tab:ppi-bitwidth-drops-RELU}
\end{table}

According to the Table\ref{tab:ppi-bitwidth-drops-RELU}, ReLU activation exhibits only small accuracy degradations across bit-widths, all below 3.3\%. Drops occur primarily in the Test split, with notable decreases from 7 to 6 bits and from 5 to 4 bits, and mild drops from 32 to 16 bits. A single Validation drop is observed from 16 to 8 bits.

\begin{table}[tbp]
\centering
\begin{tabular}{cccccc}
\toprule
Split & A/F  & Layer & From & To & Drop \\
\midrule
Train & RELU6 & 1 & 8 & 7 & -0.011276 \\
Train & RELU6 & 2 & 5 & 4 & -0.012176 \\
\bottomrule
\end{tabular}
\caption{PPI. Accuracy drops across bit-width reductions for the RELU6 activation function. Only over 1\% drops are reported.}
\label{tab:ppi-bitwidth-drops-RELU6}
\end{table}
According to the Table\ref{tab:ppi-bitwidth-drops-RELU6}, ReLU6 shows very limited sensitivity to quantization. Only two drops exceed the 1\% threshold: one in the Train split from 8 to 7 bits, and one from 5 to 4 bits for layer 2. No drops were observed in the Validation or Test sets, indicating that ReLU6 behaves consistently and is robust under moderate bit-width reductions.

\begin{table}[tbp]
\centering
\begin{tabular}{cccccc}
\toprule
Split & A/F  & Layer & From & To & Drop \\
\midrule
Test & trRELU & 1 & 32 & 16 & -0.026129 \\
Test & trRELU & 1 & 7 & 6 & -0.023775 \\
Test & trRELU & 1 & 5 & 4 & -0.013549 \\
Test & trRELU & 2 & 16 & 8 & -0.029678 \\
Test & trRELU & 2 & 6 & 5 & -0.012522 \\
Test & trRELU & 2 & 5 & 4 & -0.014027 \\
Validation & trRELU & 1 & 32 & 16 & -0.014119 \\
Validation & trRELU & 1 & 7 & 6 & -0.014122 \\
Validation & trRELU & 1 & 5 & 4 & -0.010713 \\
Validation & trRELU & 2 & 8 & 7 & -0.011061 \\
Validation & trRELU & 3 & 16 & 8 & -0.017004 \\
\bottomrule
\end{tabular}
\caption{PPI. Accuracy drops across bit-width reductions for the trRELU activation function. Only over 1\% drops are reported.}
\label{tab:ppi-bitwidth-drops-trRELU}
\end{table}
According to the Table\ref{tab:ppi-bitwidth-drops-trRELU}, trReLU is the most sensitive activation among those tested. Drops appear across all splits and layers, with several consistent degradations from 32 to 16 bits, from 7 to 6 bits, from 6 to 5 bits, and 5 to 4 bits. Both Test and Validation splits show multiple decreases, some approaching 3\%. This pattern indicates that trReLU is more vulnerable to precision reduction and benefits from bit-widths greater that 6 bits.

\begin{table}[tbp]
\centering
\begin{tabular}{cccccc}
\toprule
Split & A/F  & Layer & From & To & Drop \\
\midrule
Train & GELU & 1 & 8 & 7 & -0.011261 \\
Test & GELU & 1 & 32 & 16 & -0.012628 \\
Test & GELU & 1 & 7 & 6 & -0.025353 \\
Test & GELU & 1 & 5 & 4 & -0.017917 \\
Test & GELU & 3 & 7 & 6 & -0.015957 \\
Test & GELU & 3 & 5 & 4 & -0.016658 \\
Validation & GELU & 1 & 32 & 16 & -0.012274 \\
Validation & GELU & 1 & 7 & 6 & -0.014351 \\
Validation & GELU & 3 & 32 & 16 & -0.012390 \\
Validation & GELU & 3 & 8 & 7 & -0.010392 \\
Validation & GELU & 3 & 5 & 4 & -0.070996 \\
\bottomrule
\end{tabular}
\caption{PPI. Accuracy drops across bit-width reductions for the GELU activation function. Only over 1\% drops are reported.}
\label{tab:ppi-bitwidth-drops-GELU}
\end{table}

According to the Table\ref{tab:ppi-bitwidth-drops-GELU}, GELU displays a mixed pattern: it is stable at higher bit-widths but shows several consistent drops once bit-width falls below 6 bits. The most substantial drop occurs in the Validation split, from 5 to 4 bits (approximately 7\%). Additional degradations appear across layers from 32 to 16 bits, from 7 to 6 bits, and from 5 to 4 bits. While GELU is reliable at 16–8 bits, its performance deteriorates noticeably at 6 bits and below.

\begin{table}[tbp]
\centering
\begin{tabular}{cccccc}
\toprule
Split & A/F  & Layer & From & To & Drop \\
\midrule
Test & Sigmoid & 1 & 32 & 16 & -0.011923 \\
Test & Sigmoid & 1 & 5 & 4 & -0.011844 \\
Test & Sigmoid & 2 & 6 & 5 & -0.010145 \\
Test & Sigmoid & 3 & 7 & 6 & -0.021927 \\
Validation & Sigmoid & 1 & 5 & 4 & -0.012812 \\
Validation & Sigmoid & 2 & 16 & 8 & -0.015463 \\
Validation & Sigmoid & 2 & 6 & 5 & -0.015374 \\
Validation & Sigmoid & 3 & 5 & 4 & -0.011994 \\
\bottomrule
\end{tabular}
\caption{PPI. Accuracy drops across bit-width reductions for the Sigmoid activation function. Only over 1\% drops are reported.}
\label{tab:ppi-bitwidth-drops-Sigmoid}
\end{table}

According to the Table\ref{tab:ppi-bitwidth-drops-Sigmoid}, Sigmoid shows moderate sensitivity, with drops distributed across the Test and Validation splits. Small decreases occur from 32 to 16, from 7 to 6, and from 6 to 5 bits, with additional declines from 5 to 4 bits in both splits. These degradations are consistent but remain relatively small ( less than 2\%), suggesting mild vulnerability to quantization, mainly below 6 bits.

\begin{table}[tbp]
\centering
\begin{tabular}{cccccc}
\toprule
Split & A/F  & Layer & From & To & Drop \\
\midrule
Test & SiLU & 1 & 32 & 16 & -0.012687 \\
Test & SiLU & 1 & 7 & 6 & -0.015057 \\
Test & SiLU & 2 & 5 & 4 & -0.022375 \\
Test & SiLU & 3 & 16 & 8 & -0.015068 \\
Validation & SiLU & 2 & 5 & 4 & -0.019180 \\
\bottomrule
\end{tabular}
\caption{PPI. Accuracy drops across bit-width reductions for the SiLU activation function. Only over 1\% drops are reported.}
\label{tab:ppi-bitwidth-drops-SiLU}
\end{table}

According to the Table\ref{tab:ppi-bitwidth-drops-SiLU}, SiLU exhibits stable accuracy at higher precisions but shows drops from 32 to 16 bits and from 7 to 6 bits in the Test split. The most notable drop occurs from 5 to 4 bits, appearing in both Test and Validation splits. This indicates that SiLU maintains robustness down to 6 bits but experiences degradation at 5 bits and below.

\begin{table}[tbp]
\centering
\begin{tabular}{cccccc}
\toprule
Split & A/F  & Layer & From & To & Drop \\
\midrule
Test & Softplus & 1 & 7 & 6 & -0.027224 \\
Test & Softplus & 2 & 32 & 16 & -0.020117 \\
Test & Softplus & 2 & 5 & 4 & -0.010378 \\
Validation & Softplus & 1 & 6 & 5 & -0.014675 \\
Validation & Softplus & 2 & 32 & 16 & -0.016789 \\
Validation & Softplus & 2 & 5 & 4 & -0.012770 \\
\bottomrule
\end{tabular}
\caption{PPI. Accuracy drops across bit-width reductions for the Softplus activation function. Only over 1\% drops are reported.}
\label{tab:ppi-bitwidth-drops-Softplus}
\end{table}

According to the Table\ref{tab:ppi-bitwidth-drops-Softplus}, Softplus shows several drops across both Test and Validation splits. The most substantial decreases occur from 32 to 16 bits and from 7 to 6 bits, with additional drops from 5 to 4 bits. These results indicate that while Softplus performs reliably at 16–8 bits, accuracy becomes unstable at lower precisions.

\begin{table}[tbp]
\centering
\begin{tabular}{cccccc}
\toprule
Split & A/F  & Layer & From & To & Drop \\
\midrule
Validation & ELU & 2 & 7 & 6 & -0.012918 \\
Validation & ELU & 2 & 6 & 5 & -0.013019 \\
\bottomrule
\end{tabular}
\caption{PPI. Accuracy drops across bit-width reductions for the ELU activation function. Only over 1\% drops are reported.}
\label{tab:ppi-bitwidth-drops-ELU}
\end{table}

According to the Table\ref{tab:ppi-bitwidth-drops-ELU}, ELU is one of the most stable activations, showing only two Validation drops, both small from 7 to 6 and from 6 to 5 bits. No degradations were observed in the Test split. This suggests ELU is highly robust to quantization down to 6 bits, with only minor sensitivity at very low precision.

As a conclusion of this analysis, we found that across all activation functions, quantization down to 8 and 6 bits generally preserved the accuracy of the PPI models, with most degradations remaining below 1–2\%. Noticeable accuracy drops emerged primarily in the transitions from 7 to 6 bits and from 6 to 5 bits, while the most substantial losses consistently appeared from 5 to 4 bits, regardless of activation function. Among the functions tested, trRELU and Softplus showed the greatest sensitivity to reduced precision, with multiple drops across splits and layers, whereas ReLU, ReLU6, and ELU demonstrated the highest robustness, exhibiting only occasional and comparatively minor decreases. Overall, these results indicate that ACR-GNNs retain stable performance at moderate bit-widths (16–8–6 bits), and significant accuracy degradation occurs mainly when entering the very low-precision regime (less that 5 bits).


\end{document}